\journal{.}
\newtheorem{lemma}{Lemma}
\newtheorem{theorem}{Theorem}
\newtheorem{proposition}{Proposition}
\def\ps@pprintTitle{%
  \let\@oddhead\@empty
  \let\@evenhead\@empty
  \let\@oddfoot\@empty
  \let\@evenfoot\@oddfoot
}
\begin{document}

\newcommand\gai[1]{{#1}}\newcommand\ggai{}

\renewcommand{\thefigure}{\arabic{figure}}
\renewcommand{\thetable}{\arabic{table}}

\renewcommand{\tilde}{\widetilde}
\renewcommand{\hat}{\widehat}
\newcommand{\F}{{\mathcal F}}
\newcommand{\I}{{\mathbf I}}
\newcommand{\X}{{\mathbf X}}
\newcommand{\x}{{\mathbf x}}
\newcommand{\Y}{{\mathbf Y}}
\newcommand{\y}{{\mathbf y}}
\newcommand{\Z}{{\mathbf Z}}
\newcommand{\z}{{\mathbf z}}
\newcommand{\U}{{\mathbf U}}
\newcommand{\Q}{{\mathbf Q}}
\newcommand{\bu}{{\mathbf u}}
\newcommand{\V}{{\mathbf V}}
\newcommand{\bv}{{\mathbf v}}
\newcommand{\W}{{\mathbf W}}
\newcommand{\w}{{\mathbf w}}
\newcommand{\bi}{{\mathbf i}}
\newcommand{\bj}{{\mathbf j}}
\newcommand{\E}{{\rm E}}
\newcommand{\Var}{{\rm Var}}
\newcommand{\Cov}{{\rm Cov}}
\newcommand{\bmu}{{\boldsymbol \mu}}
\newcommand{\um}{\underline{m}}
\newcommand{\R}{{\mathbf R}}
\newcommand{\tr}{{\text{\rm tr}}}
\newcommand{\bbu}{{\mathbf u}}
\newcommand{\bbx}{{\mathbf x}}
\newcommand{\al}{{\alpha}}
\newcommand{\te}{{\theta}}
\newcommand{\tetan}{{\hat\theta_n}}
\newcommand\tetanmk{{\hat{\theta}_{n}^{(k)}}}
\newcommand{\CC}{\mathbb{C}}
\newcommand{\cC}{\mathcal{C}}
\newcommand{\la}{\lambda}\newcommand{\si}{\sigma}\newcommand{\Si}{\Sigma}
\newcommand{\us}{\underline{s}}
\newcommand{\de}{\delta}

\newcommand{\ep}{\varepsilon}

\newcommand\bbT{\mathbf{T}}
\newcommand\tx{\widetilde{\x}}

\newcommand{\bbal}{{\mbox{\boldmath$\alpha$}}}
\newcommand{\bbbe}{{\mbox{\boldmath$  \beta$}}}
\newcommand{\bbde}{{\mbox{\boldmath$\delta$}}}
\newcommand{\bbDe}{{\mbox{\boldmath$\Delta$}}}
\newcommand{\bbeps}{{\mbox{\boldmath$\varepsilon$}}}
\newcommand{\bbgam}{{\mbox{\boldmath$  \gamma$}}}
\newcommand{\bbGam}{{\mbox{\boldmath$  \gamma$}}}
\newcommand{\bbla}{{\mbox{\boldmath$\lambda$}}}
\newcommand{\bbLa}{{\mbox{\boldmath$\Lambda$}}}
\newcommand{\bbmu}{{\mbox{\boldmath$\mu$}}}
\newcommand{\bbnu}{{\mbox{\boldmath$\nu$}}}
\newcommand{\bbom}{{\mbox{\boldmath$\omega$}}}
\newcommand{\bbPsi}{{\mbox{\boldmath$\Psi$}}}
\newcommand{\bbpsi}{{\mbox{\boldmath$\psi$}}}
\newcommand{\bbphi}{{\mbox{\boldmath$\phi$}}}
\newcommand{\bbPhi}{{\mbox{\boldmath$\Phi$}}}
\newcommand{\bbsi}{{\mbox{\boldmath$\sigma$}}}
\newcommand{\bbSi}{{\mbox{\boldmath$\Sigma$}}}
\newcommand{\bbtau}{{\mbox{\boldmath$\tau$}}}
\newcommand{\bbth}{{\mbox{\boldmath$\theta$}}}
\newcommand{\bbTh}{{\mbox{\boldmath$\Theta$}}}
\newcommand{\bbxi}{{\mbox{\boldmath$\xi$}}}
\newcommand{\bbXi}{{\mbox{\boldmath$\Xi$}}}
\newcommand{\bbze}{{\mbox{\boldmath$\zeta$}}}
\newcommand{\bbeta}{{\mbox{\boldmath$\eta$}}}

\begin{frontmatter}

\title{\large\bf On structure testing for component covariance
  matrices of a high-dimensional mixture}




\author{Weiming Li\fnref{label1}}
\ead{li.weiming@shufe.edu.cn}
\fntext[label1]{Weiming Li's research is partly supported by National
  Natural Science Fundation of China, No.\ 11401037 and Program of
  IRTSHUFE.
}
 \address{School of Statistics and Management, Shanghai University of Finance and Economics, Shanghai, China}

 \author{Jianfeng Yao\fnref{label2}}
 \ead{jeffyao@hku.hk}
 \fntext[label2]{Jianfeng Yao's research is partly  supported by HKSAR   Research Grants Council grant No. 17332416.}
 
 \address{Department of Statistics and Actuarial Sciences, The
   University of Hong Kong, Hong Kong SAR}
 
\date{}
\begin{abstract}

By studying the family of $p$-dimensional scale mixtures, this paper
shows  for the first time a non trivial example where the
eigenvalue distribution  of the corresponding sample covariance matrix 
{\em does not converge} to the celebrated  Mar\v{c}enko-Pastur law. 
A different and new limit is found and characterized.
The reasons of failure of the Mar\v{c}enko-Pastur limit in this
situation are found to be a strong dependence between the $p$-coordinates of the mixture.
Next, we address the problem of testing whether the mixture has a
spherical covariance matrix.
{\ggai 
To analize the traditional John's type test we establish a novel and general CLT for linear statistics of eigenvalues of the sample covariance matrix. 
It is shown that the John's test and its recent high-dimensional extensions both fail for high-dimensional mixtures, precisely due to the different spectral limit above.  As a remedy, a new test procedure is constructed  afterwards for the sphericity hypothesis.
This test is then applied to identify the covariance structure in
model-based clustering. It is shown that the test has much higher
power than the widely used ICL and BIC criteria in detecting non
spherical component covariance matrices of a high-dimensional mixture. 
} 
\end{abstract}


%
\begin{keyword}
Sphericity test \sep
Mar\v{c}enko-Pastur law \sep 
Large covariance matrix \sep 
\MSC[2010] 62H10; 62H15; 60F05
\end{keyword}

\end{frontmatter}
 

%
\section{Introduction}\label{sec:intro}

Let $\phi({\scriptscriptstyle\bullet};\,\bbmu, \bbSi)$ be the density function of a $p$-dimensional
normal  distribution with mean $\bbmu$ and covariance matrix $\bbSi$.
A $p$-dimensional vector $\bbx\in\mathbb{R}^p$ is a 
{\em multivariate normal  mixture}  (MNM) if its density function has the form
\begin{equation}
  \label{eq:fmn}
  f(\bbx)= \sum_{j=1}^K \alpha_j \phi(\bbx;\,\bbmu_j, \bbSi_j). 
\end{equation}
Here the $(\alpha_j)$ are the $K$ mixing weights and
$(\bbmu_j,\bbSi_j)$ are the parameters of the $j$th normal component.  
Such finite mixture models have a long history; yet they continue to
attract considerable attention  in recent years due to their wide
usage  in  high-dimensional data analysis such as 
in
pattern recognition,  signal and image
processing, machine learning in bioinformatics, to name a few. 
The popularity of an MNM  is largely due to the fact that by
construction the distribution can be interpreted  as a  mixture of $K$
sub-populations (or groups, clusters)  with respective parameters
$(\bbmu_j,\bbSi_j)$ and this interpretation is particularly relevant
for clustering or classifying heterogeneous data. 
For detailed account on these models, we refer to the 
monographs \citet{McLachlan00} and \citet{Fruhwirth06}.

When the number of features $p$ in $\bbx$ is large compared to the
number $n$ of available samples from an MNM, 
the inference of a general  MNM  becomes intricate.
The reason is that   the number of free 
parameters of an MNM model is $ K (p+2)(p+1)/2-1$ which  explodes
quadratically with the dimension $p$.
In order to have a concrete picture of this inflation,
the numbers of parameters in 
four  particular
MNMs are detailed in Table~\ref{tab:mnm} below
\citep{Bouveyron07}.
We see from the table that the full MNM will require as many as  5303
parameters when  50 variables of interest and 4 clusters are involved although 50 is
a quite  small number in today's big data era.  
Even for a homogeneous MNM, 1478 parameters are still needed which almost excludes 
any standard procedure like the maximum likelihood
estimation. This highlights that 
 inference of a high-dimensional  MNM remains an open and challenging problem
even in the homogeneous case.  

\begin{table}[htp]
  \centering
  \caption{Four standard covariance structures  in an MNM with their
    number of parameters. Here
    $a=Kp+K-1$ denotes the 
    number of parameters in  $(\alpha_j)$ and $(\bbmu_j)$. 
    \label{tab:mnm}}
  {\small
  \begin{tabular}[hb]{llll}
    \hline 
    Model     &  $\bbSi_j$'s    & Number of parameters  &   
    [case of  $(K,p)=(4,50)$]
    \\  \hline 
  Full MNM    &   Unrestricted  &  $a+Kp(p+1)/2$    &  [ 5303 ]           \\

  Scale MNM  &  Proportional:  $\bbSi_j=\sigma_j^2\bbSi$  &
  $a+p(p+1)/2+ K-1$ & [ 1481 ]\\   
  Homogeneous  MNM &   Identical:   $\bbSi_j\equiv \bbSi$  
  &  $a+p(p+1)/2$  & [ 1478 ] \\
  Spherical MNM  &    Spherical:  $\bbSi_j=\sigma_j^2  \I_p $  &
  $a+K$  & [ 207 ]
  \\
  \hline\hline 
  \end{tabular}
  }
\end{table}

Meanwhile, such difficulty for inference  is not that surprising in lights of recent
developments of high-dimensional statistics.  
Consider either the case there was no mixture
at all, that is $K=1$, $\bbmu_j\equiv\bbmu$ and $\bbSi_j\equiv\bbSi$,
or the case of homogeneous MNM, 
$K>1$ and $\bbSi_j\equiv\bbSi$.
The inference of both models  contains the estimation of
a high-dimensional covariance
matrix $\bbSi$.  This  estimation problem has been widely studied
recently and it is
well-known that typically no  {\em consistent} estimation exists
for such  a large covariance matrix  $\bbSi$ without further drastic
constrains on its structure 
\citep{BickLev08b}. Therefore, high-dimensional MNM cannot be
consistently identified in general when the dimension is large compared to the
sample size.

Notice that the literature contains extensive proposals for 
reduction of the model dimension by using 
some parsimonious MNM models where the 
$K$ component covariance matrices $(\bbSi_j)$ are restricted to
certain structure.   
The common approach introduces such restricted structure on
the eigenvalues and the eigenvectors of these component matrices 
\citep{Banfield93,Fraley98,Fraley02}.  For example, \citet{Bensmail96}
and \citet{Bouveyron07} proposed 14 and 28 such restricted models,
respectively. These restricted models also 
include  the so-called   {\em mixtures of factor analyzers}
\citep[Chapter 8]{McLachlan00} which are particularly popular in
handling high-dimensional data. These mixtures specify that 
$\bbSi_i=\bbLa_i\bbLa_i' + \bbPsi_i$ where 
$\bbLa_i$ is a $p\times d_i$ loading matrix with $d_i\ll p$ and
$\bbPsi_i$ a diagonal matrix representing the base component of
$\bbSi_i$. 

The other lesson learnt from recent developments in high-dimensional
statistics is that although the estimation and identification of a
high-dimensional covariance matrix are generically unfeasible, testing
hypotheses on their structure is indeed possible.  
Such structure testing includes equality to the unit (identity matrix), 
proportional to the unit (sphericity test), equality to a diagonal
matrix for the one-sample case, or equality between several
high-dimensional covariance matrices in the case of a multiple-sample
problem. To mention a few on this literature,
we refer to 
\cite{LW02},
\cite{BD05},
\cite{BJYZ09},
\cite{ChenZhangZhong10},
\cite{WangYao13},
\cite{Tian15}
and the review  
\cite{PaulAue14}.

In this paper we investigate the structure testing problem for the component
covariance matrices $(\bbSi_j)$ in a high-dimensional MNM. 
Precisely, we assume that the $K$ group means $(\bbmu_j)$ have been
satisfactorily identified  so that all our attention will be devoted at
the study of the $K$ component  covariance matrices $(\bbSi_j)$ and
at their structure testing.  We thus hereafter assume $\bbmu_j\equiv
0$. 
The $p$-variate population $\x$ is assumed to be a 
{\em scale  mixture} 
of  the form 
\begin{equation}\label{mix-model}
  \x=w\bbT_p\z,
\end{equation}
where {\ggai  $w$ is a scalar mixing random variable, $\bbT_p \in \mathbb R^{p\times p}$ is a positive definite matrix, assuming $\tr(\bbT_p^2)=p$ so that
$w$ and $\bbT_p$ can be identified in the model, and 
$\z=(z_1,\ldots,z_p)'$ is a set of i.i.d.\ random variables, independent of $w$, having zero mean and unit variance.
The mean and the covariance matrix of the scale mixture
\eqref{mix-model} are
\begin{equation}
\label{eq:mom}
\E(\x)=0\quad\text{and}\quad \Cov(\x)=E(w^2) \bbT_p^2:=\bbSi_p,
\end{equation}
respectively.}
Notice  that here $w$  is a  latent label variable, and if $w$ takes
values in a finite set of $K$ values, 
say $\{\sigma_1,\ldots,\sigma_K\} $ 
with respective probability $\{\alpha_j\}$, the mixture $\x$
becomes a  
{\em finite scale mixture}. If moreover the $z_i$'s are i.i.d.\ standard normal, then $\x$
reduces to the scale MNM in Table~\ref{tab:mnm} 
with mixing weight $(\alpha_j)$ and 
components covariance matrices $\bbSi_j=\sigma_j^2 \bbT_p^2$
($1\le j\le K$).  

{\ggai 
The scale mixture \eqref{mix-model} can also be regarded as an
extension of the standard elliptical model \citep{FZ90} where the
vector $\z$ is assumed to be uniformly distributed on the unit
sphere in $\mathbb R^p$. This extension allows the population to
possess a heavier or lighter tail by controlling the fourth moment of
$\z$. In \cite{E10}, the scale mixture plus a non-zero mean vector was
studied in the context of portfolio optimization. A major difference
here is  that Karoui's model makes Gaussian assumption on $\z$, while
our model allows non-Gaussian distributions for $\z$. Recently,
\cite{X17} proposed a similar model in the study of high-dimensional
integrated covolatility matrices. Their sample data can be modeled as
$\x_i=w_i\bbT_p\z_i$ which has the same form as the scale mixture
\eqref{mix-model}, but $w_i$ in their model is a non-random function
of the index $i$ and $(\z_i)$ is again a sequence of i.i.d.\ Gaussian
vectors. 
}

Let  $\x_1,\ldots,\x_n$ be  a sample from the mixture \eqref{mix-model}. 
Our approach is based on the spectral properties of the sample
covariance matrix    
\begin{equation}\label{sample-cov}
  B_n=\frac{1}{n}\sum_{i=1}^n\x_i\x_i'.
\end{equation}
Let $(\lambda_j)_{1\le j\le p}$ be its eigenvalues, referred as {\em sample eigenvalues}. 
The {\em empirical spectral distribution} (ESD) of $B_n$ is by
definition 
$
F_n=p^{-1}\sum_{j=1}^p\delta_{\lambda_j}, 
$
{\ggai  where and throughout the paper  $\delta_b$ denotes  the Dirac measure at the point $b$.  }

Properties of eigenvalues of large sample covariance matrices have
been extensively studied in random matrix theory (RMT).
Consider a $p$-variate population $\widetilde\x$  of the form 
\begin{equation}\label{linear-trans}
  \tx=\sigma\bbT_p \z, 
\end{equation}
where $\bbT_p$ and $\z$ are  as before but $\sigma$ is now a constant 
unlike the random mixing variable $w$ in \eqref{mix-model}.
Let $\tx_1,\ldots,\tx_n$ be  a sample from the population and 
denote  the corresponding sample covariance matrix by
$\widetilde B_n=n^{-1}\sum_{i=1}^n \tx_i\tx_i'$.
To further simplify the discussion, let us assume that $\bbT_p=\I_p$
so here $\tx=\sigma\z$ and $\Cov(\tx)=\sigma^2 \I_p$.
It has been known since \citep{MP67,Silv95} 
that 
when $p$ and $n$ grow to infinity proportionally such that $c=\lim p/n
>0$,
the ESD of the sample covariance matrix $\widetilde B_n$ 
will converge to the celebrated 
Mar\v{c}enko-Pastur  law  (MP law) with parameter $(c,\sigma^2)$, i.e.,  
{\ggai $ \nu_{(c,\sigma^2)}(dx)=f(x) dx + (1-1/c)\delta_0(dx) 1_{\{c> 1\}}$} with the density function
$
f(x)=(2\pi c\sigma^2x)^{-1} \sqrt{(b-x)(x-a)}1_{[a,b]}(x),
$
where 
$a=\sigma^2(1-\sqrt c)^2$ and  $b=\sigma^2(1+\sqrt c)^2$.
Consider next the scale mixture \eqref{mix-model} where we let 
also $\bbT_p=\I_p$, that is,  
\begin{equation}
  \label{eq:mix2}
  \x = w \z.
\end{equation} 
Then $\Cov(\x)=\sigma^2\I_p$ is spherical as before 
with here $\sigma^2=\E(w^2)$: in particular the   $p$ coordinates 
of $\x$ are {\em uncorrelated}.
A {\em striking finding} from this paper is that despite a same
spherical covariance matrix $\sigma^2\I_p$, the sample covariance 
matrix $B_n$ from the mixture \eqref{eq:mix2} is very different 
of the sample covariance matrix  $\tilde B_n$ from the linear transformation model
\eqref{linear-trans}. In particular, the ESD of $B_n$ will converge to
a distribution which is not the MP law  $\nu_{(c,\sigma^2)}$. 
An illustration of this difference is given in 
Figure~\ref{mp:mix}.

\begin{figure}[h]
\begin{minipage}[t]{0.33\linewidth}
\includegraphics[width=2in]{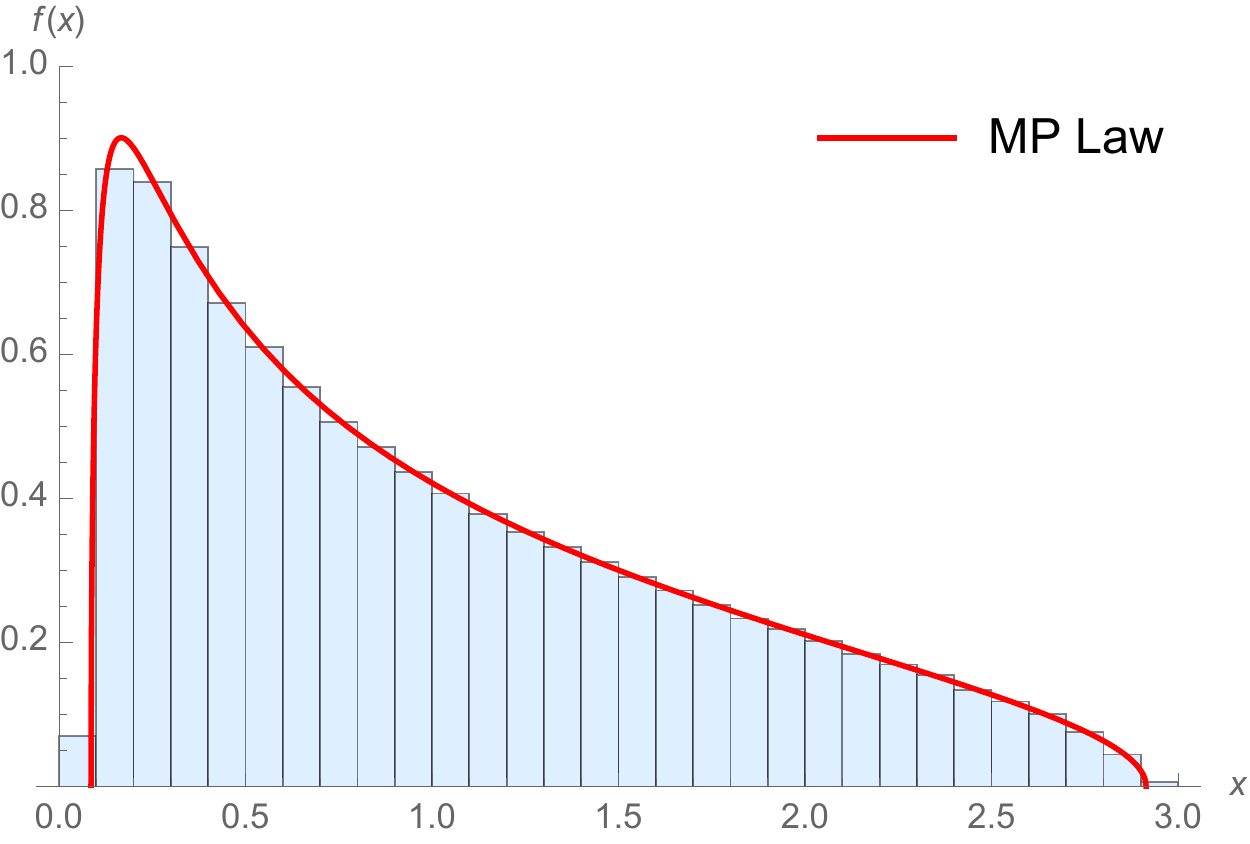} 
\end{minipage}%
\begin{minipage}[t]{0.33\linewidth}
\includegraphics[width=2in]{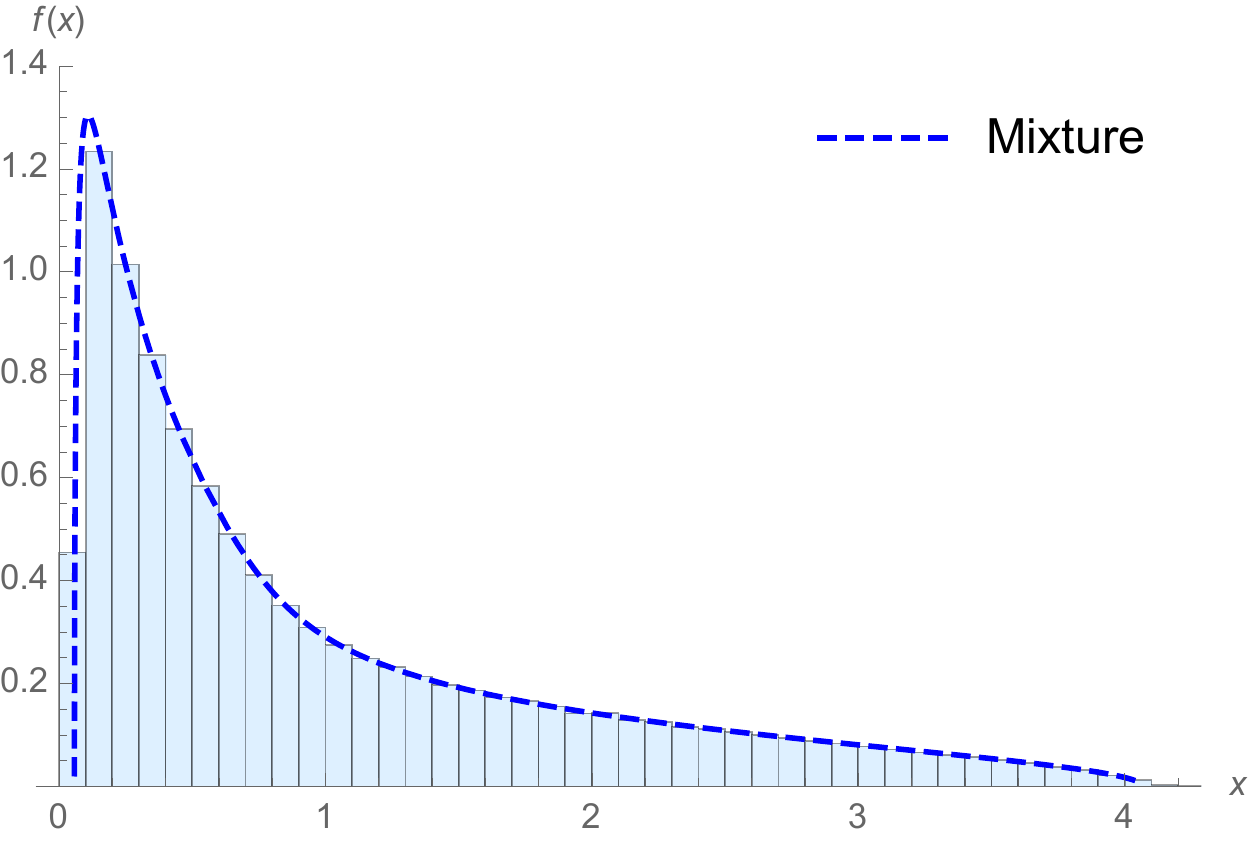}
\end{minipage} 
\begin{minipage}[t]{0.33\linewidth}
\includegraphics[width=2in]{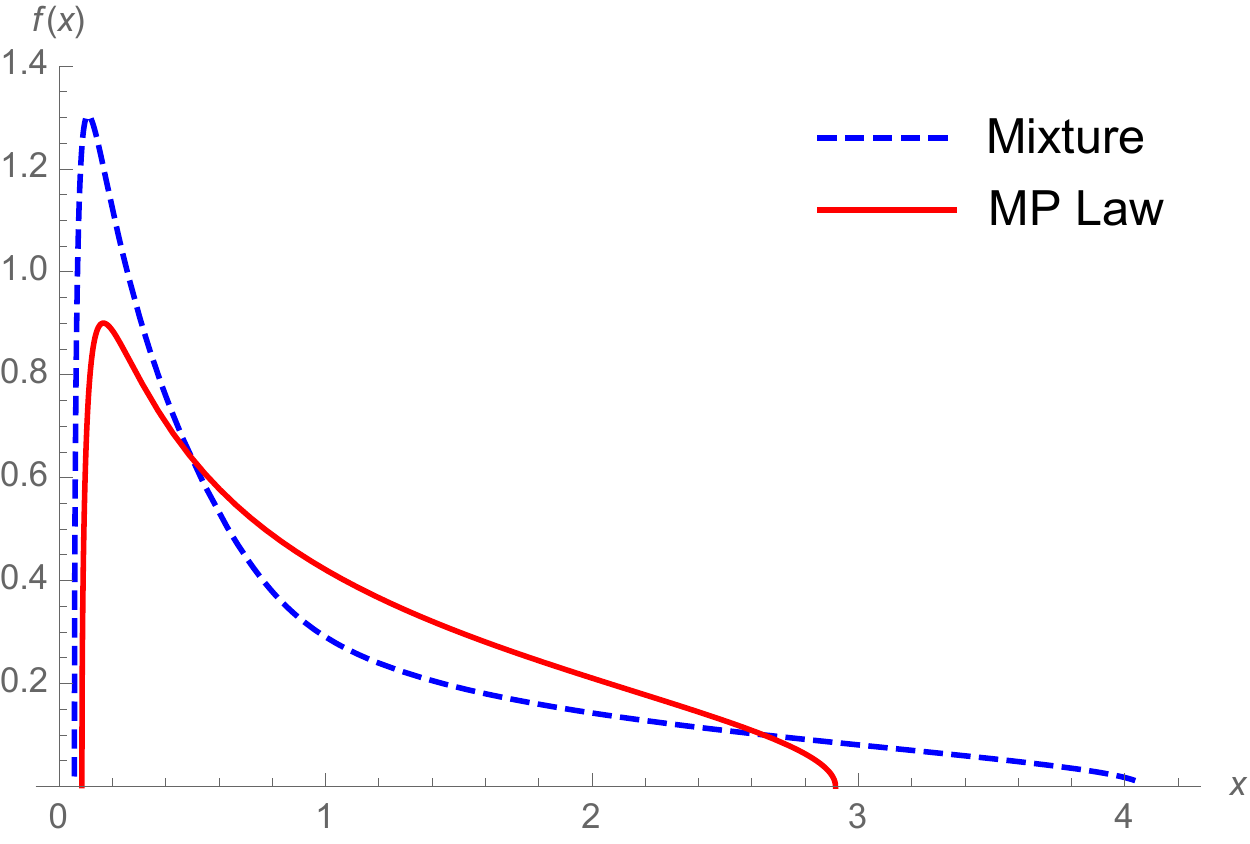}
\end{minipage} 
\caption[]{Left panel: 
  histogram of eigenvalues of the sample covariance matrix
  $\tilde B_n$ from a population $  {\bf x}={\bf z}$ with
  i.i.d. standardized coordinates.
  The LSD is the Mar\v{c}enko-Pastur  law (solid line) with support 
  [0.0858,2.9142].~~
  Middle panel: 
  histogram of eigenvalues of  sample covariance matrix
  $B_n$ from a scale MNM $ {\bf x}=w{\bf z}$ with
  density function $f(\bbx)= 0.25\phi(\bbx;\,0, 2.5I_p)+0.75\phi(\bbx;\,0,
  0.5I_p)$ whose covariance matrix is also unit. 
  The LSD (dashed line) is not the Mar\v{c}enko-Pastur  law and has
  support $[0.0576,4.0674]$.
  ~~ Right panel compares  the two LSDs.
  Both histograms  used dimensions $(p,n)=(500,1000) $ 
  with eigenvalues collected from 100 independent replications.
  \label{mp:mix}} 
\end{figure}

The failure of the Mar\v{c}enko-Pasture law for the scale mixture
\eqref{mix-model} can be explained by the strong dependence 
between the $p$  uncorrelated  coordinates of the mixture. Indeed,  
 \cite{BaiZhou08}  proved that the MP law always  holds for a
 population ${\y}$ with 
 {\em weakly dependent} coordinates in the following sense: 
for any sequence of symmetric matrices 
$\{A_p\}$ bounded in spectral norm,   
\begin{equation}\label{BZ-condition}
  \Var(\y' A_p\y)=o(p^2).
\end{equation}
In particular, the linear transformation model $\tx$ in \eqref{linear-trans}
has weakly dependent coordinates: 
 indeed one can  easily show that 
$
\Var(\tx' A_p\tx)\leq \kappa p||\bbT_p'A_p\bbT_p||^2
$
where $\kappa$ is a constant (function of $\E(z_i^4)$),
and this bound  is of order $O(p)$ since the
sequence $(A_p)$ is bounded.  Therefore the MP law
applies for the sample covariance matrix 
$\widetilde B_n$. Now we show that the scale mixture $\x$ of
\eqref{mix-model} has {\em strongly dependent} coordinates. 
Indeed, 
for $A_p=(\bbT_p^2)^{-1}$  one easily finds (by conditioning on $w$)
that  
$
  \Var(\x'A_p\x)
  =p\E(w^4)\Var(z_1^2)+p^2\Var(w^2),
$
which is at least of order $p^2$ (unless  the mixing variable $w$ is
degenerated).
Therefore, it does not satisfy Bai-Zhou's weak dependence condition 
\eqref{BZ-condition}.  Notice that other weak dependence condition
guaranteeing a limiting  MP law is also available as in 
\cite{Banna15}, but again this does not apply to the scale mixture
\eqref{mix-model}.

To summarize, we have reached  the following conclusions.
(i)
 Structure testing on the component covariance matrices
  $(\bbSi_j)$ of a high-dimensional mixture will involve ultimately
  the study of the eigenvalues of the sample covariance matrix $B_n$
  in \eqref{sample-cov};
~(ii). 
Very unfortunately, existing results on high-dimensional covariance
matrices from the existing random matrix theory do not apply to $B_n$.

The main contributions of this paper are presented   as follows. First in
Section~\ref{sec:main}, by using tools of random matrix theory, 
we develop new asymptotic results on the eigenvalues of the sample
covariance matrix $B_n$. This includes (i) the characterisation of the
limits of the ESD $F_n$ of $B_n$ under fairly general moment
conditions and (ii), a central limit theorem  for linear spectral statistics of
the form $\int f(x) dF_n(x)$ for a class of smooth test function $f$.
{\ggai Then in  Section~\ref{sec:sph}, 
we apply this general theory to analyze the failure of the John's test for the hypothesis that the population $\x$  is  
a spherical mixture. As a byproduct, we find that the John's statistic can test whether a spherical population is a mixture or not. In the light of this study, a new test procedure is then put forward for general spherical hypothesis. In Section \ref{sec:clust}, the two tests are numerically examined in the identification of the covariance structure in model-based clustering.
Section \ref{sec:data} present a microarray data analysis on their covariance structure.} 
All the technical proofs of the results of the paper
are gathered in Section~\ref{sec:proofs}.
The paper has also an on-line supplementary file which includes the following material:
(i) a consistent estimator for the parameters of 
a centered spherical mixture (which is an exceptional case where the
estimation can be carried out completely);
(ii)
procedures for numerical evaluation of the
density function and the support set of the LSD 
of the sample covariance matrix found in Section~\ref{sec:main}.
\gai{
Finally, computing codes for reproduction of the  numerical results of
the paper and the related
data sets are availabe  at {\small\texttt{http://web.hku.hk/\~{}jeffyao/papersInfo.html}}.
}


%
\section{High-dimensional theory for eigenvalues of $B_n$}
\label{sec:main}

\subsection{Non standard limit of the sample eigenvalue distribution}

Our interest is to study the convergence of the ESD sequence $(F_n)$ in high-dimensional frameworks,  as defined in the following assumptions.
Throughout the paper, the distribution of the squared mixing variable
$w^2$ is denoted as $ G$ and 
referred as {\em Mixing Distribution} (MD).

\medskip 
\noindent{\em Assumption}   (a). \quad
  The sample and population sizes $n,p$ both tend to infinity with their ratio $c_n=p/n\to c \in(0,\infty)$.

\medskip
\noindent{\em Assumption}   (b). \quad
There are two independent arrays of i.i.d.\ random variables
  $(z_{ij})_{i,j\geq 1}$ and $(w_{i})_{i\geq 1}$, satisfying 
\begin{equation}\label{mom-condition}
\E(z_{11})=0,\quad \E(z^2_{11})=1,\quad \E(z^4_{11})<\infty,
\end{equation}
such that  for each $p$ and $n$ the observation vectors can be represented as $\x_i= w_i \bbT_p\z_i$  with $\z_i=(z_{i1},\ldots,z_{ip})'$, $i=1,\ldots,n.$

\medskip
\noindent{\em Assumption}   (c). \quad
The spectral distribution $H_p$ of the matrix $\bbT_p^2$
weakly converges to a probability distribution $H$, as $p\rightarrow\infty$, referred as {\em Population Spectral Distribution} (PSD).

\medskip
\noindent{\em Assumption} (d).\quad
The support set $S_{ G}$ of the MD $ G$ is bounded above and
from below, that is $S_{ G}\subset [a,b]$ for some  $0<a<b<\infty.$ 


The LSD of $B_n$ will be derived under Assumptions (a)-(b)-(c) while
Assumption (d) is  required when establishing the CLT for linear spectral statistics.
Recall that the Stieltjes transform of a probability measure $\mathcal P$, supported on $S_{\mathcal P}\subset \mathbb R$, is defined as 
$$m_{\mathcal P}(z)=\int\frac{1}{x-z}d\mathcal P(x),\quad z\in\mathbb C\setminus S_{\mathcal P}.$$ 

\begin{theorem}\label{lsd}
Suppose that Assumptions (a)-(c) hold. Then, almost surely, the empirical spectral distribution $F_n$ of $B_n$ converges in distribution to a probability distribution $F^{c, G,H}$
whose Stieltjes transform $m=m_{F^{c, G, H}}(z)$ is a solution
to the following system of equations, defined on the upper complex plane $\CC^+$,
\begin{eqnarray}  \label{lsd-eqs}
\begin{cases}
 zm(z)=-1+\int \frac{p(z)t}{1+cp(z)t}d G(t),\\
 zm(z)=-\int \frac{1}{1+q(z)t}dH(t),\\
 zm(z)=-1-zp(z)q(z),
\end{cases}
\end{eqnarray}
where $p(z)$ and $q(z)$ are two auxiliary analytic functions.
The solution is also unique in the set 
$$
\{m(z): -(1-c)/z+cm(z)\in \CC^+, ~zp(z)\in \CC^+, ~q(z)\in \CC^+,~z\in \CC^+\}.
$$
\end{theorem}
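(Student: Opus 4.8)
\emph{Proof plan.} The starting observation is that $B_n$ is a \emph{separable} sample covariance matrix. Writing $Z=[\z_1,\dots,\z_n]\in\mathbb R^{p\times n}$ and $W=\mathrm{diag}(w_1^2,\dots,w_n^2)$, Assumption~(b) gives
\begin{equation*}
 B_n=\frac1n\sum_{i=1}^n w_i^2\,\bbT_p\z_i\z_i'\bbT_p=\frac1n\,\bbT_p\,Z\,W\,Z'\,\bbT_p,
\end{equation*}
i.e.\ $B_n$ has the canonical form $\tfrac1n A^{1/2}XCX'A^{1/2}$ with $A^{1/2}=\bbT_p$ (so $A=\bbT_p^2$, whose spectral distribution satisfies $H_p\Rightarrow H$ by Assumption~(c)) and sample-side matrix $C=W$. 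Since $(w_i^2)_{i\ge1}$ are i.i.d.\ with law $G$, the Glivenko--Cantelli theorem gives that the spectral distribution of $W$ converges a.s.\ to $G$; I would condition on the probability-one event on which this holds, first performing (when $S_G$ or $S_H$ is unbounded) the usual random-matrix truncation of the $w_i$ and of $\bbT_p^2$, which changes $F_n$ by an asymptotically negligible amount. On that event $W$ is a deterministic array of nonnegative matrices with limiting spectrum $G$, so the known limit theorem for the empirical spectral distribution of separable sample covariance matrices applies and yields a.s.\ convergence of $F_n$ to a nonrandom $F^{c,G,H}$; since the limit depends on the sample only through $(c,G,H)$, the conditioning is then removed. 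An equivalent route, which is in any case needed for the central limit theorem established below, is to run the martingale and resolvent analysis of $(B_n-zI)^{-1}$ directly.

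It remains to recast the usual characterization of $m=m_{F^{c,G,H}}$ into the three-equation form \eqref{lsd-eqs}. Let $\underline B_n=\tfrac1n W^{1/2}Z'\bbT_p^2ZW^{1/2}\in\mathbb R^{n\times n}$ be the companion matrix, which shares the nonzero eigenvalues of $B_n$; thus $\underline m_n(z)=c_n m_n(z)-(1-c_n)/z$ and, by Assumption~(a), $\underline m(z)=-(1-c)/z+cm(z)$ in the limit. The separable fixed point is cleanest in terms of two auxiliary analytic functions $p(z),q(z)$ on $\CC^+$ --- the sample-side and population-side parameters of the deterministic equivalent $(B_n-zI)^{-1}\approx\big(-z(I_p+q(z)\bbT_p^2)\big)^{-1}$ --- namely
\begin{equation*}
 zp(z)=-\int\frac{t\,dH(t)}{1+q(z)t},\qquad zq(z)=-\int\frac{t\,dG(t)}{1+cp(z)t},\qquad zm(z)=-\int\frac{dH(t)}{1+q(z)t},
\end{equation*}
and \eqref{lsd-eqs} then follows by elementary fraction algebra: expanding $\tfrac{1}{1+qt}=1-\tfrac{qt}{1+qt}$ in the last relation and inserting the first gives $zm(z)=-1-zp(z)q(z)$, while $\int\tfrac{pt}{1+cpt}\,dG=\tfrac1c-\tfrac1c\int\tfrac{dG}{1+cpt}=-zp(z)q(z)$ (using the second relation) turns this into $zm(z)=-1+\int\tfrac{p(z)t}{1+cp(z)t}\,dG(t)$; conversely \eqref{lsd-eqs} returns the three displayed relations. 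The displayed pair for $(p,q)$ is precisely the fixed-point system for the LSD of a separable sample covariance matrix with aspect ratio $c$, population spectrum $H$ and sample-side spectrum $G$, so the convergence assertion of Theorem~\ref{lsd} is exactly that LSD theorem applied conditionally on $W$.

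For uniqueness, note first that $m,p,q$ are analytic on $\CC^+$ and that the finite-$n$ traces they approximate are, up to normalization, Stieltjes transforms of nonnegative measures, which forces the sign constraints $q(z)\in\CC^+$, $zp(z)\in\CC^+$ and $\underline m(z)=-(1-c)/z+cm(z)\in\CC^+$ for $z\in\CC^+$ --- exactly the conditions cutting out the admissible set in the statement. I would then run the classical Silverstein--Choi argument on the coupled pair: for two admissible solutions $(m_i,p_i,q_i)$, subtracting the first two relations yields
\begin{equation*}
 z(p_1-p_2)=(q_1-q_2)\!\int\!\frac{t^2\,dH(t)}{(1+q_1t)(1+q_2t)},\qquad
 z(q_1-q_2)=c(p_1-p_2)\!\int\!\frac{t^2\,dG(t)}{(1+cp_1t)(1+cp_2t)},
\end{equation*}
whence $(q_1-q_2)\big(z^2-c\,I_HI_G\big)=0$ with $I_H,I_G$ the two integrals; bounding $|I_H|,|I_G|$ by Cauchy--Schwarz and using the balance identities obtained by taking imaginary parts in the first two relations (which rewrite $\int t^2|1+qt|^{-2}dH$ and $\int t^2|1+cpt|^{-2}dG$ through the definite-sign imaginary parts of $q,p,zp,zq$) gives $|c\,I_HI_G|<|z|^2$ for $z$ with $|\Im z|$ large, so $q_1\equiv q_2$ there, then $p_1\equiv p_2$ and $m_1\equiv m_2$, and the identity extends to all of $\CC^+$ by analyticity. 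I expect the main obstacle to be precisely this sign bookkeeping --- arranging the Cauchy--Schwarz step into a strict contraction with exactly the constraints listed --- together with a careful treatment of the truncations in the first step when the mixing or population spectra have unbounded support; the reduction to separable-covariance theory and the fraction algebra are comparatively routine.
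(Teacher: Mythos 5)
Your proposal takes essentially the same approach as the paper: rewrite $B_n = \tfrac{1}{n}\bbT_p Z_n \Sigma_G Z_n' \bbT_p$ as a separable sample covariance matrix with $\Sigma_G=\mathrm{diag}(w_1^2,\ldots,w_n^2)$ independent of $Z_n$, observe that its spectral distribution converges a.s.\ to $G$, and invoke the known LSD and uniqueness result for separable sample covariance matrices conditionally on $\Sigma_G$ — this is exactly what the paper does by citing Theorem 4.1.1 of Zhang (2006). The extra details you supply (the fraction algebra relating the standard $(p,q)$ fixed-point system to the displayed form \eqref{lsd-eqs}, and the Silverstein--Choi-style uniqueness sketch) are all correct but are delegated in the paper to that citation.
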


The proof is given in Section~\ref{ssec:lsd}. {\ggai To clarify the
  role of the two auxiliary functions in \eqref{lsd-eqs}, we express the sample covariance matrix as $B_n=\bbT_pZ_n\Sigma_GZ_n'\bbT_p/n$ and denote its companion matrix as $\underline{B}_n=\Sigma_G^{1/2}Z_n'\bbT_p^2Z_n\Sigma_G^{1/2}/n$, where $Z_n=(\z_1,\ldots,\z_n)$ and $\Sigma_G=diag(w^2_1,\ldots,w^2_n)$ is a diagonal matrix. Then $p(z)$ and $q(z)$ are actually the limits of 
$\tr[\bbT_n^2(B_n-zI)^{-1}]/p$ and $\tr[\Sigma_G(\underline{B}_n-zI)^{-1}]/n$, respectively, see Section 4.3.2 in \cite{Z06}.
}

Important special cases include the following.
If $H= G=\delta_1$ or just $ G=\delta_1$,  the system  \eqref{lsd-eqs} reduces to 
a single equation which characterizes the standard MP law
$\nu_{(c,1)}$ or the generalized MP law \citep{Silv95}.
A case of particular interest is 
for $H=\delta_1$ where the equations reduce to 
\begin{equation}  \label{mp3}
  z  =  - \frac1 {m}  +  \int\!\frac{t}{1+ctm} d G(t)~,
\end{equation}
{\ggai  with  $p(z)=m(z)$ and $q(z)=-(1+zm(z))/(zm(z))$.
Equation \eqref{mp3}}  defines a new type of LSD corresponding to a scale-mixture population with spherical covariance matrix. 

We run a small simulation experiment to illustrate the LSD from a
spherical mixture whose LSD is given in \eqref{mp3}.  Notice that the
density function of the LSD as well as its support set can be
determined using standard tools from random matrix theory; they are
detailed in Section B of the supplementary file.
The MD $ G$ is set to be $ G=0.5\delta_1+0.5\delta_9$ and
the dimensional ratio is $c=0.5$ or 2. Samples of $(z_{ij})_{p\times n}$ are drawn from standard normal $N(0, 1)$ with $(p,n)=(500, 1000)$ and $(1000, 500)$, respectively.

This mixture is made up of two normal distributions $N(0,I_p)$ and
$N(0,9I_p)$ with equal weights.  The sample eigenvalues may form one
or two clusters depending on  the value of $c$.
Theoretically, the critical value for the spectrum separation is $c=1.1808$ under this MD. Therefore the support $S_F$ is a unique interval for $c=0.5$ and consists of two separate intervals for $c=2$. 
The results are shown in Figure \ref{esd-lsd}
where we see that the empirical histograms match perfectly with their
 limiting density curves predicted by Theorem~\ref{lsd}.
\begin{figure}[h]
\begin{minipage}[t]{0.5\linewidth}
\includegraphics[width=2.3in,height=1.5in]{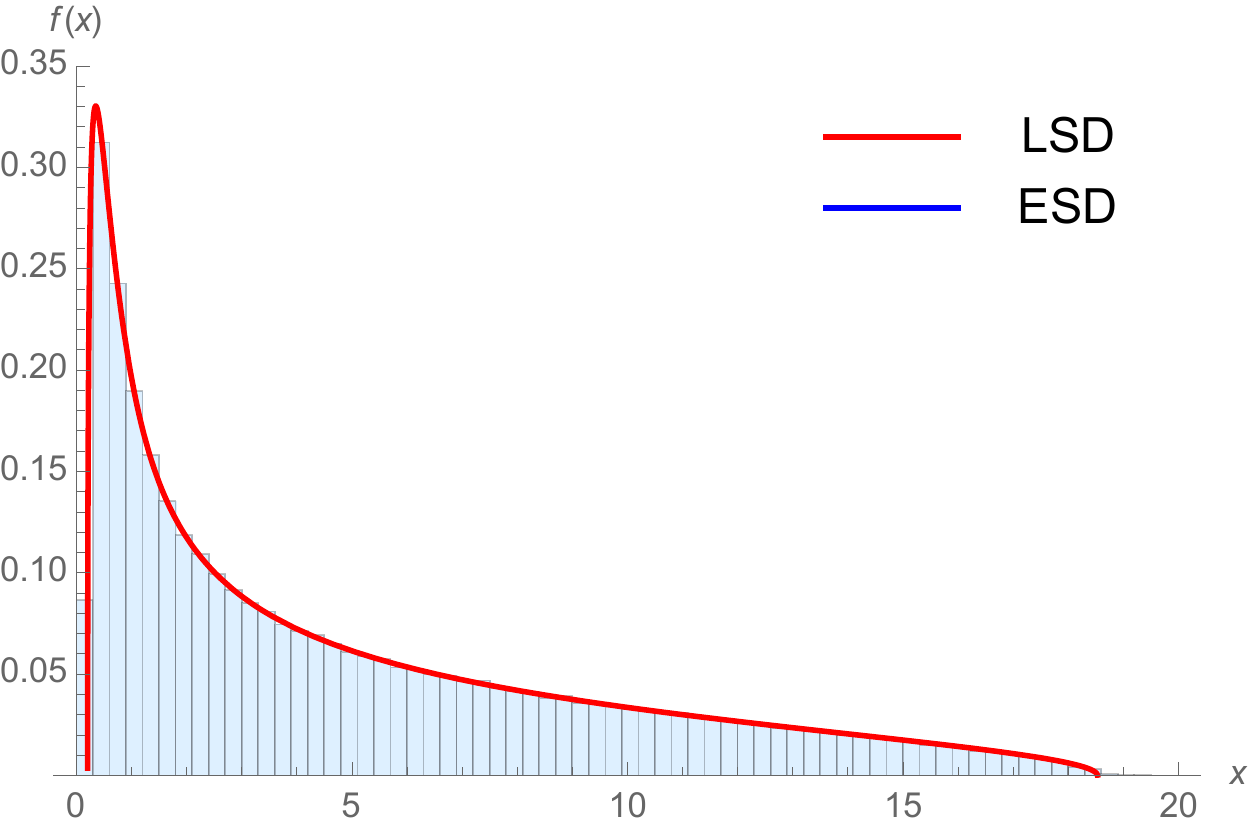} 
\end{minipage}%
\begin{minipage}[t]{0.5\linewidth}
\includegraphics[width=2.3in,height=1.5in]{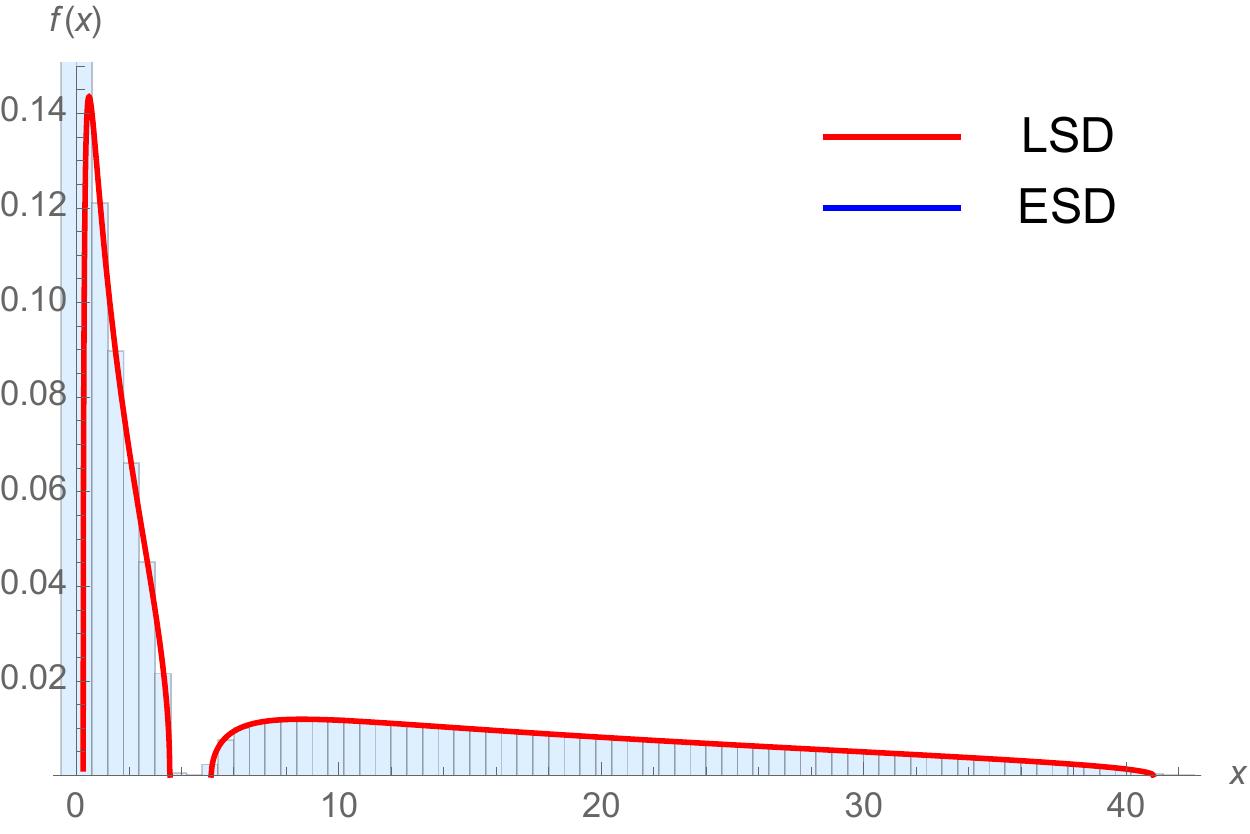}
\end{minipage} 
\caption[]{Comparison between sample eigenvalues (histogram)  and their limit
  density (solid curve). Left panel: $(p,n,c)=(500, 1000, 0.5)$ with a unique
  support interval [0.2,18.5].  Right panel: $(p,n,c)=(1000, 500, 2)$
  and the support is $\{0\}\cup[0.26,3.56]\cup[5.14, 41.04]$.}
  \label{esd-lsd}
\end{figure}

\subsection{CLT for linear spectral statistics of $B_n$}

In this section, we study the fluctuation of linear spectral statistics (LSS) of the sample covariance matrix $B_n$ under the mixture model with a spherical covariance matrix. The LSS are quantities of the form
$$\frac{1}{p}\sum_{j=1}^pf(\lambda_j)=\int f(x)dF_n(x)$$
where  $f$ is a function on $[0, \infty)$. In \cite{BS04} and
\cite{PZ08}, the LSS under their settings are proved to be
asymptotically normal distributions. As said in Introduction, the
central limit theorem studied in these papers all assume the linear
transformation form in \eqref{linear-trans}, and thus is not applicable to the present model of scale mixtures. 

Let $ G_n$ be the empirical distribution generated by $w_1^2,\ldots,w_n^2$ which correspond to the sample data $\x_1,\ldots,\x_n$.
Let also $F^{c_n,  G_n}$ and $F^{c_n,  G}$ be the LSDs as defined in \eqref{mp3} for $F^{c, G}$ but with the parameters $(c, G)$ replaced by $(c_n,  G_n)$ and $(c_n,  G)$, respectively. Notice that $F^{c_n,  G_n}$ is a random measure while $F^{c_n,  G}$ is deterministic. The aim here is to study the fluctuation of 
\begin{eqnarray*}
\int f(x) d\F_n(x):=\int f(x) d(F_n(x)-F^{c_n, G}(x)),
\end{eqnarray*}
which has a decomposition 
\begin{eqnarray}\label{LSS}
\int f(x) d\F_n(x)=\int f(x)d\F_{n1}(x)+\int f(x)d\F_{n2}(x),
\end{eqnarray}
where 
\[ \F_{n1}(x)=F_n(x)-F^{c_n, G_n}(x) \quad \text{ and}\quad  
\F_{n2}(x)=F^{c_n, G_n}(x)-F^{c_n, G}(x).
\]
We show that the first term in \eqref{LSS} converges in distribution to a normal variable at the rate of $1/n$, while the second term converges in distribution to another normal variable at the rate of $1/\sqrt{n}$.

\begin{theorem}\label{clt1}
Suppose that Assumptions (a)-(d) hold. Let $f_1,\ldots, f_k$ be functions on $\mathbb R$ analytic on an open interval containing
$\left[aI_{(0,1)}(1/c)(1-\sqrt{1/c})^2,b(1+\sqrt{1/c})^2\right]$. 
Let $\Delta=E(z_{11}^4)-3$ be the kurtosis coefficient.
Then the random vector 
\begin{equation*}
n\left(\int f_1(x)d\F_{n1}(x),\ldots, \int f_k(x)d\F_{n1}(x)\right)\xrightarrow{D}N_k(\mu, \Gamma_1),
\end{equation*}
where the mean vector $\mu=(\mu_j)$ is
\begin{eqnarray*}\label{muj}
\mu_j&=&-\frac{1}{2\pi{\rm i}}\oint_{\mathcal C_1}\frac{f_j(z) m^3(z)\int t^2(1+ct m(z))^{-3}d G(t)}{(1-c\int  m^2(z)t^2(1+ct m(z))^{-2}d G(t))^2}dz\nonumber\\
&&-\frac{\Delta}{2\pi{\rm i}}\oint_{\mathcal C_1}\frac{f_j(z) m^3(z)\int t^2(1+ct m(z))^{-3}d G(t)}{1-c\int  m^2(z)t^2(1+ct m(z))^{-2}d G(t)}dz
\end{eqnarray*}
and the covariance matrix $\Gamma_1=(\gamma_{1ij})$ has  entries
\begin{eqnarray*}\label{bij}
\gamma_{1ij}&=&-\frac{1}{2\pi^2c^2}\oint_{\mathcal C_2}\oint_{\mathcal C_1}\frac{f_i(z_1)f_j(z_2)}{( m(z_1)- m(z_2))^2} m'(z_1) m'(z_2)dz_1dz_2\nonumber\\
&&-\frac{\Delta}{4\pi^2c}\oint_{\mathcal C_2}\oint_{\mathcal C_1}f_i(z_1)f_j(z_2)
\left(\frac{d^2}{dz_1dz_2}
\int\frac{t^2 m(z_1) m(z_2)d G(t)}{(1+ct m(z_1))(1+ct m(z_2))}\right)dz_1dz_2.
\end{eqnarray*}
Here
the contours $\mathcal C_1$ and $\mathcal C_2$ are simple, closed, non-overlapping, and taken in the positive direction in the complex plane, each enclosing the support of $F^{ c,  G}$.
\end{theorem}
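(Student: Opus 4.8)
In the spherical case considered here $\bbT_p=\I_p$, so that with $Z_n=(\z_1,\ldots,\z_n)$ and $\Sigma_{G_n}=\mathrm{diag}(w_1^2,\ldots,w_n^2)$ we have $B_n=n^{-1}Z_n\Sigma_{G_n}Z_n'$, whose non-zero eigenvalues coincide with those of the companion matrix $\underline B_n=n^{-1}\Sigma_{G_n}^{1/2}Z_n'Z_n\Sigma_{G_n}^{1/2}$. The plan is to freeze the mixing variables. Conditionally on $\mathcal W=\sigma(w_i:i\ge1)$, the matrix $\underline B_n$ equals $c_n$ times a Bai--Silverstein sample covariance matrix formed from the i.i.d.\ standardized array $(z_{ij})$, with ``dimension'' $n$, sample size $p$, aspect ratio $1/c_n$ and population spectral distribution $G_n$, the empirical distribution of $w_1^2,\ldots,w_n^2$. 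By Assumption~(d) each $G_n$ is supported in $[a,b]$, and since $(w_i^2)$ are i.i.d.\ with law $G$, one has $G_n\Rightarrow G$ almost surely; fix henceforth a realization of $(w_i)$ in this probability-one event.

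First I would apply, conditionally on $\mathcal W$, the CLT for linear spectral statistics of sample covariance matrices of \cite{BS04}, in the form valid for an arbitrary fourth moment of the entries together with a general population spectral distribution \citep{PZ08}. The uniform bound $S_G\subset[a,b]$ makes the companion populations $G_n$ equibounded, so the ``no eigenvalue outside the support'' theorem applies and, almost surely, for all large $n$ the support of $F^{c_n,G_n}$ and every eigenvalue of $B_n$ lie inside the interval appearing in the statement --- a uniform Mar\v{c}enko--Pastur edge bound over all mixing distributions supported in $[a,b]$ --- on a complex neighbourhood of which the $f_j$ are analytic by hypothesis; hence the linear statistics and the Cauchy integrals below are well defined. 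Translating the cited CLT through the companion relation and the rescaling $\lambda\mapsto c_n\lambda$, and re-expressing the resulting contour integrals through the Stieltjes transform $m=m_{F^{c,G}}$ by means of \eqref{mp3} (in particular $p(z)=m(z)$) --- which is exactly what produces the explicit factors $c^{-1},c^{-2}$ and the $c$-dependence of the integrands in $\mu$ and $\Gamma_1$ --- gives, conditionally on $\mathcal W$,
\[
  n\left(\int f_1\,d\F_{n1},\ldots,\int f_k\,d\F_{n1}\right)\xrightarrow{D}N_k(\mu,\Gamma_1),
\]
with the mean vector and covariance matrix exactly as displayed: the $\Delta$-free parts of $\mu_j$ and $\gamma_{1ij}$ are the classical Bai--Silverstein mean correction and covariance (the mean being non-zero because the centring is at $F^{c_n,G_n}$, not at $\E F_n$), while the $\Delta$-terms are the fourth-cumulant corrections. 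Note that this CLT delivers the limiting parameters already in terms of the limits $(c,G)$, so that no separate passage $(c_n,G_n)\to(c,G)$ is needed here.

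Since the limiting Gaussian law does not depend on the realization of $(w_i)$, the conditioning is removed by a routine argument: writing $V_n$ for the random vector on the left above, the previous step gives $\E\big[e^{{\rm i}\theta'V_n}\mid\mathcal W\big]\to\exp\big({\rm i}\theta'\mu-\tfrac12\theta'\Gamma_1\theta\big)$ almost surely for every $\theta\in\R^k$, and, these conditional characteristic functions being bounded by $1$, dominated convergence yields $\E[e^{{\rm i}\theta'V_n}]\to\exp({\rm i}\theta'\mu-\tfrac12\theta'\Gamma_1\theta)$, i.e.\ the asserted unconditional convergence in distribution.

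The main obstacle is the rigorous invocation of the conditional CLT: one must use the version that is simultaneously valid for an arbitrary kurtosis $\Delta$ and for a general --- here additionally $n$-dependent but weakly convergent --- population spectral distribution (the regime in which the originally published mean and covariance formulas required a later correction), and one must check that its hypotheses, notably the exact separation / no-eigenvalue-outside-the-support property and the ensuing tightness of the linear statistics, hold along the almost-sure sequence $G_n\Rightarrow G$; the uniform bounds $0<a\le w^2\le b<\infty$ of Assumption~(d) are precisely what secure this. A secondary technical point is the careful bookkeeping of the atom at $0$ --- carried by the spectrum of either $B_n$ or its companion according as $c>1$ or $c<1$ --- and of the $c_n$-rescaling when passing between $B_n$, $\underline B_n$ and the Bai--Silverstein normalization, so that the contour-integral expressions come out precisely in the stated form.
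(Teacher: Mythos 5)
Your proposal follows exactly the same strategy as the paper's proof: condition on the mixing variables $\w=(w_i)$, which makes $\Sigma_G$ a deterministic diagonal matrix so that Theorem 1.4 of \cite{PZ08} applies, and then remove the conditioning because the limiting Gaussian law does not depend on the realization of $\w$. The additional bookkeeping you supply --- the companion-matrix reformulation with the $c_n$ rescaling, the ``no eigenvalue outside the support'' step ensuring the Cauchy integrals are well defined, and the dominated-convergence argument on conditional characteristic functions --- is exactly the unwritten content behind the paper's terse one-paragraph outline, so the two arguments coincide.
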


\begin{theorem}\label{clt2}
Under the assumptions of Theorem  \ref{clt1}, the random vector 
\begin{equation*}
\sqrt{n}\left(\int f_1(x)d\F_{n2}(x),\ldots, \int f_k(x)d\F_{n2}(x)\right)\xrightarrow{D}N_k(0, \Gamma_2),
\end{equation*}
where the covariance matrix $ \Gamma_2=(\gamma_{2ij})$ has  entries
\begin{eqnarray*}
  \gamma_{2ij}
&=&\frac{1}{4\pi^2}\oint_{\mathcal C_1}\oint_{\mathcal C_2} \frac{f_i(z_1)f_j(z_2) m'(z_1)m'(z_2)(z_1-z_2)}{c(m(z_1)-m(z_2))}dz_1dz_2\\
&&-\frac{1}{4\pi^2}\oint_{\mathcal C_1}\oint_{\mathcal C_2}\frac{ f_i(z_1)f_j(z_2) m'(z_1)m'(z_2)}{cm(z_1)m(z_2)}dz_1dz_2\\
&&+\frac{1}{4\pi^2}\oint_{\mathcal C_1}\oint_{\mathcal C_2}\frac{ f_i(z_1)f_j(z_2) m'(z_1)m'(z_2)(1+z_1m(z_1))(1+z_2m(z_2))}{m(z_1)m(z_2)}dz_1dz_2.
\end{eqnarray*}
Here  the contours $\mathcal C_1$ and $\mathcal C_2$ are as defined in Theorem \ref{clt1}.
\end{theorem}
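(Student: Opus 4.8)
The plan is to establish Theorem~\ref{clt2} by reducing the fluctuation of $\int f(x)\,d\F_{n2}(x)$ to the fluctuation of the empirical mixing distribution $G_n$ around $G$. The key observation is that $\F_{n2}=F^{c_n,G_n}-F^{c_n,G}$ depends on the sample only through $G_n=n^{-1}\sum_{i=1}^n\delta_{w_i^2}$, which by the classical CLT satisfies $\sqrt n(G_n-G)\Rightarrow$ a Gaussian bridge-type process. Concretely, I would write $\int f\,d\F_{n2}$ as a contour integral of the Stieltjes transform difference $m_{F^{c_n,G_n}}-m_{F^{c_n,G}}$, using the Cauchy formula $\int f\,d\mu=-\frac{1}{2\pi\mathrm i}\oint f(z)m_\mu(z)\,dz$ valid for $f$ analytic on a neighborhood of the support (this is where Assumption~(d) guaranteeing a compact support is used, together with the analyticity hypothesis on the $f_j$). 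So everything hinges on a first-order expansion of the solution $m=m_{F^{c_n,G_n}}(z)$ of equation \eqref{mp3} with respect to the perturbation $G_n\mapsto G$.

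The main technical step is thus an implicit-function-theorem linearization of \eqref{mp3}. Write $m=m_0+\delta m$ where $m_0=m_{F^{c_n,G}}(z)$ solves $z=-1/m_0+\int t(1+ctm_0)^{-1}\,dG(t)$, and $m$ solves the same equation with $G$ replaced by $G_n$. Subtracting and keeping first-order terms, I expect
\begin{equation*}
\delta m(z)\;\approx\;\frac{\displaystyle\int\frac{t}{1+ctm_0(z)}\,d(G_n-G)(t)}{\displaystyle\frac{1}{m_0(z)^2}-c\int\frac{t^2}{(1+ctm_0(z))^2}\,dG(t)}\,,
\end{equation*}
so that $\sqrt n\,\delta m(z)$ is asymptotically a linear functional of the Gaussian limit of $\sqrt n(G_n-G)$; hence $\sqrt n\int f\,d\F_{n2}$ is asymptotically Gaussian with mean zero (the mean of $G_n-G$ vanishes) and an explicit covariance obtained by plugging the limiting covariance of $\sqrt n(G_n-G)$, namely $\operatorname{Cov}(h_1(w^2),h_2(w^2))$ for test functions $h_i(t)=t/(1+ctm_0(z_i))$, into a double contour integral. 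Simplifying the resulting kernel using the three defining relations in \eqref{lsd-eqs} specialized to $H=\delta_1$ (equivalently using $p(z)=m(z)$, $q(z)=-(1+zm)/(zm)$, and $z=-1/m-zpq$) should produce exactly the three-term expression for $\gamma_{2ij}$ in the statement; in particular the factor $(z_1-z_2)/(m(z_1)-m(z_2))$ is the telltale signature of a covariance $\operatorname{Cov}(h_1(w^2),h_2(w^2))$ after the substitution $m_0(z_i)\leftrightarrow z_i$ implied by \eqref{mp3}.

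For rigor I would proceed in three stages. First, justify the contour representation and the uniform (in $z$ on the contour) validity of the linearization: this requires that the denominator $m_0(z)^{-2}-c\int t^2(1+ctm_0(z))^{-2}\,dG(t)$ stay bounded away from $0$ on the contours $\mathcal C_1,\mathcal C_2$, which holds because these contours avoid the support of $F^{c,G}$ where \eqref{mp3} has a non-degenerate derivative, and to control the remainder terms using the boundedness of $w^2$ (Assumption~(d)) and standard concentration/Berry--Esseen bounds for $G_n$. Second, invoke the multivariate CLT for $\sqrt n(G_n-G)$ evaluated against the finite family of smooth bounded functionals $t\mapsto t/(1+ctm_0(z))$ indexed by $z$ on the contours, and pass the convergence through the continuous (linear) contour-integration map to get joint asymptotic normality of the vector $(\sqrt n\int f_j\,d\F_{n2})_j$. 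Third, carry out the contour-algebra simplification to match the stated $\gamma_{2ij}$. The hard part will be the bookkeeping in stage one — showing the linearization error is $o(n^{-1/2})$ uniformly on the contour and that no extra deterministic bias of order $n^{-1/2}$ sneaks in from replacing $c_n$ by $c$ or from the curvature terms — since the algebraic simplification in stage three, while tedious, is routine once the linear functional form of $\delta m$ is in hand.
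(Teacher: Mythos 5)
Your strategy is essentially the paper's: represent $\int f\,d\F_{n2}$ as a contour integral of the Stieltjes-transform difference $M_n(z)=\sqrt n\bigl(m_{F^{c_n,G_n}}(z)-m_{F^{c_n,G}}(z)\bigr)$, and reduce everything to a CLT for $\int h_z(t)\,d(G_n-G)(t)$ against the family $h_z(t)=t/(1+c_nt\,m_{F^{c_n,G}}(z))$; the paper isolates exactly this process-level convergence of $M_n(\cdot)$ as Lemma~\ref{mn-clt}, then integrates against the contour. Where you part ways is in the perturbation step, and the difference matters for rigor: you propose a first-order implicit-function-theorem linearization of equation \eqref{mp3} and rightly flag as ``the hard part'' bounding the linearization error by $o(n^{-1/2})$ uniformly on the contour. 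The paper sidesteps this entirely. Writing the two fixed-point equations for $m_{nn}:=m_{F^{c_n,G_n}}$ and $m_n:=m_{F^{c_n,G}}$ and subtracting gives, after a one-line rearrangement, the \emph{exact} identity
\begin{equation*}
M_n(z)=\beta_n(z)\,\sqrt n\int\frac{t\,[dG_n(t)-dG(t)]}{1+c_nt\,m_n(z)},
\qquad
\beta_n(z)^{-1}=\int\frac{c_nt^2\,dG_n(t)}{(1+c_ntm_{nn}(z))(1+c_ntm_n(z))}-\frac{1}{m_n(z)m_{nn}(z)},
\end{equation*}
in which the random prefactor $\beta_n(z)$ converges almost surely (uniformly on the contour) to $-m'(z)$. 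The CLT then follows from the i.i.d.\ CLT for $\sqrt n\,(G_n-G)$ plus Slutsky, with no remainder to estimate. Two smaller points: your worry about deterministic bias from replacing $c_n$ by $c$ is moot because both $F^{c_n,G_n}$ and $F^{c_n,G}$ are built with the same $c_n$, so that substitution never occurs at finite $n$; and passing the process limit through the contour integral requires a tightness argument for $M_n$ on the contour (the paper verifies Billingsley's moment criterion via a Lipschitz bound on $g(t,z)=t/(1+ctm(z))$), which you should state explicitly rather than treating the contour-integration map as automatically exchangeable with weak limits.
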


\begin{proposition}\label{clt}
Under the assumptions of Theorem \ref{clt1}, the random vector 
\begin{equation}\label{lss}
\sqrt{n}\left(\int f_1(x)d\F_{n}(x),\ldots, \int f_k(x)d\F_{n}(x)\right)\xrightarrow{D}N_k(0,  \Gamma_2),
\end{equation}
where the covariance matrix $\Gamma_2$ is defined in Theorem \ref{clt2}.
\end{proposition}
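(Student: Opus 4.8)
The plan is to obtain Proposition~\ref{clt} as an immediate consequence of Theorems~\ref{clt1} and~\ref{clt2}, the key observation being that the two pieces in the decomposition \eqref{LSS} fluctuate at different scales, so that no joint limit of the pieces is needed.

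First I would start from \eqref{LSS}, which gives, for each $j=1,\ldots,k$,
\[
\sqrt{n}\int f_j(x)\,d\F_n(x)
=\sqrt{n}\int f_j(x)\,d\F_{n1}(x)+\sqrt{n}\int f_j(x)\,d\F_{n2}(x).
\]
By Theorem~\ref{clt1} the random vector with components $n\int f_j(x)\,d\F_{n1}(x)$ converges in distribution to $N_k(\mu,\Gamma_1)$, hence is tight; therefore each component satisfies
\[
\sqrt{n}\int f_j(x)\,d\F_{n1}(x)
=n^{-1/2}\Bigl(n\int f_j(x)\,d\F_{n1}(x)\Bigr)=o_P(1),
\]
so the first summand above is negligible at the $\sqrt n$ scale.

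Next, Theorem~\ref{clt2} states that the vector with components $\sqrt{n}\int f_j(x)\,d\F_{n2}(x)$ converges in distribution to $N_k(0,\Gamma_2)$. Combining the two displays by Slutsky's theorem (or, equivalently, by the Cram\'er--Wold device applied to the $\F_{n2}$ term together with an $o_P(1)$ remainder) yields \eqref{lss} with the same covariance matrix $\Gamma_2$ as in Theorem~\ref{clt2}.

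There is essentially no genuine obstacle here: all the analytic work sits in Theorems~\ref{lsd}, \ref{clt1} and \ref{clt2}. The only point requiring care — and the reason the statement is worth recording separately — is the bookkeeping of the normalisations: since $\int f_j\,d\F_{n1}=O_P(1/n)$ while $\int f_j\,d\F_{n2}=O_P(1/\sqrt n)$, the $\F_{n1}$ fluctuation disappears once everything is multiplied by $\sqrt n$, so the limiting law of the full linear spectral statistic is governed entirely by the coarser, $1/\sqrt n$-scale term. In particular one need not establish any asymptotic independence (or joint convergence) between the two fluctuations.
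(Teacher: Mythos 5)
Your argument is correct and is essentially the paper's own (the paper simply asserts ``a direct consequence of the two theorems'' and, in the discussion surrounding the proposition, notes precisely that $\sqrt{n}\int f\,d\F_{n1}=O_p(1/\sqrt n)$ is negligible so that the $\F_{n2}$ term dictates the limit). Your added remark that no joint convergence or asymptotic independence of the two pieces is required is also accurate here — the paper invokes asymptotic independence only later, for the finite-sample corrected CLT \eqref{clt-corrected}, not for Proposition~\ref{clt} itself.
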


{\ggai Theorem \ref{clt1} follows from Theorem 1.4 in
  \cite{PZ08}. 
  A brief outline of the main arguments of its proof is given in
    Section \ref{ssec:clt1}.
  The proof of Theorem~\ref{clt2} is given in Sections \ref{ssec:lem}
  and \ref{ssec:Theorem3}.
}
Proposition \ref{clt} is a direct consequence of the two theorems.

These CLTs  demonstrate that the limiting distributions of $\sqrt{n}\int f(x)d\F_{n}(x)$ and $\sqrt{n}\int f(x)d\F_{n2}(x)$ coincide since their difference $\sqrt{n}\int f(x)d\F_{n1}(x)$ is of order $O_p(1/\sqrt{n})$.
Notice that the asymptotic limit in Theorem \ref{clt1} is stochastically independent of the sequence $( G_n)$. Therefore the two components in \eqref{LSS} are asymptotically independent. Consequently,
the CLT in \eqref{lss} always {\em underestimates} the variation and
the absolute mean of corresponding statistics in finite
samples. Fortunately, such differences can be estimated using Theorem
\ref{clt1}, and their incorporation  to the CLT in Proposition \ref{clt} leads to a finite-sample corrected CLT
\begin{equation}\label{clt-corrected}
\sqrt{n}\left(\int f_1(x)d\F_{n}(x),\ldots, \int f_k(x)d\F_{n}(x)\right)\stackrel{\cdot}{\sim}N_k(\mu/\sqrt{n}, \Gamma_1/n+   \Gamma_2).
\end{equation}
This corrected CLT is deemed to provide a  better approximation than the CLT
in Proposition~\ref{clt} in finite-sample situations.

\subsection{Application of the CLTs to moments of sample eigenvalues}

Among all the LSS, the moments of sample eigenvalues are ones of the
most important statistics. They  have been well studied in the
literature again under the linear transformation
model~\eqref{linear-trans}, 
see \cite{PZ08}, \cite{BCY10}, \cite{LY14}, \cite{Tian15}, and the references therein. 
{\ggai In our context, the $j$th moment statistic can be expressed as 
\begin{eqnarray}\label{beta:nk}
\hat \beta_{nj}=\int x^jdF_n(x),\quad j\in \mathbb N,
\end{eqnarray}
and its limit is related to the following four quantities
\begin{eqnarray*}
\beta_{nj}=\int x^jdF^{c_n, G_n}(x),\quad \beta_j=\int x^jdF^{c_n, G}(x),\quad  \gamma_{nj}=\int t^jd G_n(t),\quad   \gamma_j=\int t^jd G(t),
\end{eqnarray*}
which are the $j$th moments of corresponding measures.  From Theorem \ref{lsd} and the convergence of the empirical distribution $G_n$, we may
conclude that $\hat \beta_{nj}- \beta_j\xrightarrow{a.s}0$,
$\beta_{nj}- \beta_j\xrightarrow{a.s.}0$, and $\gamma_{nj}\xrightarrow{a.s.} \gamma_j$, for $j\ge 1$.
Moreover, the deterministic sequence $( \beta_j)$ can be explicitly expressed in terms of $( \gamma_j)$ as  
\begin{eqnarray}\label{beta-gamma}
\beta_j=c_n^j\sum(\gamma_1/c_n)^{i_1}(  \gamma_2/c_n)^{i_2}\cdots(  \gamma_{j}/c_n)^{i_{j}}\phi(i_1,\ldots,i_{j}), ~~j\geq1,
\end{eqnarray}
where $\phi(i_1,\ldots,i_{j})=j!/[i_1!\cdots i_{j}!(j+1-i_1-\cdots -i_{j})!]$
and the sum runs over the following partitions of $j$:
$$
(i_1,\ldots,i_{j}): j=i_1+2i_2+\cdots+ji_{j}, \quad i_l\in\mathbb N.
$$}
These recursive formulae are well known in random matrix theory \citep{BCY10}; they 
can also be  easily derived from the equation \eqref{mp3}. 
The formulae also hold if $( \beta_j,  \gamma_j)$ is
replaced by $( \beta_{nj},  \gamma_{nj})$.

The joint CLT for  the first $k$  moments $(\hat \beta_{nj})_{1\leq j\leq k}$ can be derived by
applying Theorems \ref{clt1} and \ref{clt2} to functions $f_j(x)=x^j,
1\le j\le k$. A major task here is to determine  the
integrals involved in their limiting mean vector and covariance matrix which, as shown below,
can be converted to the calculation of  derivatives of certain
functions. 
These functions are 
\begin{eqnarray*}
P(z)=-1+\int\frac{tzd G(t)}{1+ctz},\quad
Q(z)=\int\frac{t^2d G(t)}{(1+ctz)^3},\quad\text{and}\quad
R(z)=1-c\int\frac{(zt)^2d G(t)}{(1+ctz)^2}.
\end{eqnarray*}

\begin{proposition}\label{clt-moments-1}
Suppose that Assumptions (a)-(d) hold and let $\Delta=E(z_{11}^4)-3$. Then the random vector 
\begin{equation*}
n\left(\hat\beta_{n1}- \beta_{n1},\ldots, \hat\beta_{nk}- \beta_{nk}\right)\xrightarrow{D}N_k(v,\Psi_1),
\end{equation*}
where the mean vector $v=(v_j)$ has coordinates  $v_1=0$ and
\begin{eqnarray*}
v_j=
\frac{1}{(j-2)!}\left[P^j(z)Q(z)\left(\frac{1}{R(z)}+\Delta\right)\right]^{(j-2)}\bigg|_{z=0},&2\leq j\leq k,
\end{eqnarray*}
and the covariance matrix $\Psi_1=(\psi_{1ij})$ has  entries
\begin{eqnarray*}
\psi_{1ij}&=&\frac{2}{c^2}\sum_{l=0}^{i-1}(i-l)u_{i,l}u_{j,i+j-l}\\
&&+\frac{\Delta}{c}\int \frac{t^2}{(i-1)!(j-1)!}
\left[\frac{P^i(z)}{(1+ctz)^2}\right]^{(i-1)}\bigg|_{z=0}\left[\frac{P^j(z)}{(1+ctz)^2}\right]^{(j-1)}\bigg|_{z=0}d G(t),\label{bij-res}
\end{eqnarray*}
where $u_{s,t}$ is the coefficient of $z^t$ in the Taylor expansion of $P^s(z)$.

\end{proposition}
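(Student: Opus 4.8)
The plan is to read Proposition~\ref{clt-moments-1} off Theorem~\ref{clt1} by taking the polynomial test functions $f_j(x)=x^j$, and then to evaluate the resulting contour integrals by residue calculus. Since $\hat\beta_{nj}-\beta_{nj}=\int x^j\,d(F_n-F^{c_n,G_n})=\int x^j\,d\F_{n1}(x)$ and each $f_j(x)=x^j$ is entire (hence analytic on the interval required in Theorem~\ref{clt1}), Theorem~\ref{clt1} directly gives $n(\hat\beta_{n1}-\beta_{n1},\dots,\hat\beta_{nk}-\beta_{nk})\xrightarrow{D}N_k(v,\Psi_1)$ with $v_j=\mu_j|_{f_j(x)=x^j}$ and $\psi_{1ij}=\gamma_{1ij}|_{f_i(x)=x^i,\,f_j(x)=x^j}$. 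What remains is purely the explicit computation of these integrals, which is where the functions $P,Q,R$ and the Taylor coefficients $u_{s,t}$ of $P^s$ come in.

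The key device is the change of variable $z\mapsto m=m(z):=m_{F^{c,G}}(z)$. By the self-consistent equation~\eqref{mp3}, on a neighbourhood of $m=0$ the inverse map is $z=z(m):=-1/m+\int t(1+ctm)^{-1}\,dG(t)$, i.e.\ $z(m)=P(m)/m$; differentiating, $z'(m)=m^{-2}\bigl(1-c\int(mt)^2(1+ctm)^{-2}\,dG(t)\bigr)=R(m)/m^2$, while by definition $\int t^2(1+ctm)^{-3}\,dG(t)=Q(m)$ and $1-c\int m^2t^2(1+ctm)^{-2}\,dG(t)=R(m)$. The Stieltjes transform $m(z)$ is analytic on $\CC\setminus S_{F^{c,G}}$ with $m(z)\to0$ as $z\to\infty$; moreover, by Assumption~(d), the only other singularities of the integrands in $\mu_j$ and $\gamma_{1ij}$ lie on $S_{F^{c,G}}$, at the points $m=-1/(ct)$ with $t\in S_G$ (which are bounded away from $0$ since $S_G\subset[a,b]$, $a>0$), or at the zeros of $R$ (bounded away from $0$ since $R(0)=1$, and corresponding under $z(\cdot)$ to spectral edges). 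One may therefore enlarge $\mathcal C_1,\mathcal C_2$ to circles $|z|=M$ and then substitute: these pull back under $z\mapsto m(z)$ to small loops around $m=0$ enclosing no other singularity, with a reversal of orientation. Feeding $z=z(m)=P(m)/m$ and $dz=R(m)m^{-2}\,dm$ into $\mu_j$, the integrand $f_j(z)m^3Q(m)\bigl(R(m)^{-2}+\Delta R(m)^{-1}\bigr)\,dz$ collapses to $m^{-(j-1)}P(m)^jQ(m)\bigl(R(m)^{-1}+\Delta\bigr)\,dm$, so $\mu_j$ equals the residue at $m=0$ of $P(m)^jQ(m)(R(m)^{-1}+\Delta)/m^{j-1}$. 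Since $P(0)=-1$, $Q(0)=\gamma_2<\infty$ and $R(0)=1$, the numerator is holomorphic at $0$; hence $\mu_1=0$, and for $2\le j\le k$, $\mu_j=\frac1{(j-2)!}[P^jQ(R^{-1}+\Delta)]^{(j-2)}|_{z=0}$, i.e.\ $v_j$.

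The covariance entries go the same way. In the first (non-$\Delta$) double integral of $\gamma_{1ij}$, $m'(z_k)\,dz_k=dm_k$ turns it into $-\frac1{2\pi^2c^2}\oint\oint P(m_1)^iP(m_2)^j m_1^{-i}m_2^{-j}(m_1-m_2)^{-2}\,dm_1\,dm_2$ over non-overlapping small loops around $0$; integrating first in $m_1$ with the $m_2$-loop taken outside (so only $m_1=0$ is enclosed), and expanding $(m_1-m_2)^{-2}=\sum_{l\ge0}(l+1)m_1^lm_2^{-l-2}$ against $P(m_1)^i=\sum_s u_{i,s}m_1^s$, gives the inner residue $\sum_{l=0}^{i-1}(i-l)u_{i,l}m_2^{l-i-1}$; a further residue in $m_2$ produces $u_{j,i+j-l}$, and the factors of $2\pi{\rm i}$ combine with $-\frac1{2\pi^2c^2}$ to give $\frac2{c^2}\sum_{l=0}^{i-1}(i-l)u_{i,l}u_{j,i+j-l}$. (If the loops were nested the other way, the extra residue at $m_1=m_2$ would contribute $i\oint z(m_2)^{i+j-1}z'(m_2)\,dm_2=i\oint z^{i+j-1}\,dz=0$, so the answer is nesting-independent.) For the $\Delta$-term, writing $g_t(m)=tm(1+ctm)^{-1}$, so $g_t'(m)=t(1+ctm)^{-2}$ and $g_t(m_1)g_t(m_2)=t^2m_1m_2/[(1+ctm_1)(1+ctm_2)]$, one has $\frac{d^2}{dz_1dz_2}\int g_t(m(z_1))g_t(m(z_2))\,dG(t)=\int g_t'(m(z_1))m'(z_1)\,g_t'(m(z_2))m'(z_2)\,dG(t)$; after the substitution the double integral factorises for each $t$ into $\bigl(\oint P(m_1)^i m_1^{-i}(1+ctm_1)^{-2}\,dm_1\bigr)\bigl(\oint P(m_2)^j m_2^{-j}(1+ctm_2)^{-2}\,dm_2\bigr)$, each factor being $2\pi{\rm i}$ times $\frac1{(i-1)!}[P^i/(1+ctz)^2]^{(i-1)}|_{z=0}$ (resp.\ with $i$ replaced by $j$); reassembling the constants reproduces exactly the $\Delta$-part of $\psi_{1ij}$.

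Apart from the bookkeeping, the genuine care points are two. First, one must justify that the contour obtained in the $m$-plane after the change of variable encloses the single singularity $m=0$: this is precisely where Assumption~(d) is used, to keep the poles $m=-1/(ct)$, $t\in S_G$, and the zeros of $R$ at a positive distance from the origin so that a small enough loop avoids them. Second, the orientation reversal and the powers of $2\pi{\rm i}$ must be tracked carefully so that the signs in $v$ and $\Psi_1$ come out right; in particular the case $j=1$ has to be isolated, since the transformed integrand is then holomorphic at $m=0$ and $v_1=0$.
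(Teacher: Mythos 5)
Your proof is correct and fleshes out the exact residue-calculus computation (change of variable $z\mapsto m(z)$, orientation reversal, residues at $m=0$ justified by Assumption~(d)) that the paper itself delegates to Lemma~2 of Tian et al.\ (2015) and Theorem~1 of Qin and Li (2017). Note also that you correctly invoke Theorem~\ref{clt1} rather than Theorem~\ref{clt2} as the CLT underlying this proposition --- the paper's reference to Theorem~\ref{clt2} at this point appears to be a typo, since $\hat\beta_{nj}-\beta_{nj}=\int x^j\,d\F_{n1}$ falls under Theorem~\ref{clt1}.
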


\begin{proposition}\label{clt-moments-2}
Suppose that Assumptions (a)-(d) hold, then the random vector 
\begin{equation*}
\sqrt{n}\left( \beta_{n1}- \beta_1,\ldots, \beta_{nk}- \beta_k\right)\xrightarrow{D}N_k(0,  \Psi_2),
\end{equation*}
where the covariance matrix $ \Psi_2=(\psi_{2ij})$ has  entries
$$
 \psi_{2ij}=\frac{1}{c}\left(\sum_{l=0}^iu_{i+1,l}u_{j,i+j-l}-\sum_{l=0}^{i-1}u_{i,l}u_{j+1,i+j-l}+u_{i,i}u_{j,j}\right)-(u_{i,i}+u_{i+1,i})(u_{j,j}+u_{j+1,j}),
$$
where $u_{s,t}$ is defined in Proposition \ref{clt-moments-1}.
\end{proposition}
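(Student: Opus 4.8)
By definition $\beta_{nj}-\beta_j=\int x^j\,d\big(F^{c_n,G_n}-F^{c_n,G}\big)(x)=\int x^j\,d\F_{n2}(x)$, so the proposition is exactly Theorem~\ref{clt2} applied to the functions $f_\ell(x)=x^\ell$, $\ell=1,\dots,k$. Polynomials are entire, so the analyticity hypothesis of Theorems~\ref{clt1}--\ref{clt2} holds automatically, and Assumptions~(a)--(d) are in force; the joint Gaussian limit with mean $0$ then follows at once, its covariance matrix being $(\gamma_{2ij})$ with the three double contour integrals of Theorem~\ref{clt2} read at $f_i(z)=z^i$, $f_j(z)=z^j$. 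All that remains is to evaluate those integrals and identify their sum with $\psi_{2ij}$. (As a conceptual cross-check one may also obtain the Gaussian limit from the ordinary multivariate CLT for the empirical moments $\gamma_{n\ell}=\int t^\ell\,dG_n(t)$ of the i.i.d.\ sample $w_1^2,\dots,w_n^2$ followed by the delta method applied to the polynomial map \eqref{beta-gamma}; the closed-form covariance, however, comes out most cleanly from the contour integrals.)

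The engine of the computation is the substitution $z\mapsto m=m_{F^{c,G}}(z)$. From \eqref{mp3} one reads off the clean identity $z\,m(z)=P(m(z))$, i.e.\ $z=P(m)/m$; differentiation gives $m'(z)=m^2/(mP'(m)-P(m))$, $m'(z)\,dz=dm$, and $dz=m^{-2}(mP'(m)-P(m))\,dm$. Since the test functions are polynomials the contours $\mathcal C_1,\mathcal C_2$ may be enlarged to big circles; the map $z\mapsto m$ then carries them to small, nested loops around the single point $m=0$, negatively oriented (the orientation is reversed), and the integrals become residues at $m=0$. The crucial point is that each factor $z_\ell^{\,\ell}=P(m_\ell)^{\ell}\,m_\ell^{-\ell}$ contributes, through the residue at $m_\ell=0$, exactly the Taylor coefficients $u_{\ell,t}$ of $P^{\ell}$, while $1+z\,m(z)=1+P(m)$ and $(m'(z)/m(z))\,dz=dm/m$ are rational in $m$.

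After this substitution the second and third double integrals in $\gamma_{2ij}$ factor (up to a sign) into products of single contour integrals in $m_1$ and $m_2$, each computed by a single residue at $0$; this produces the terms $\tfrac{1}{c}u_{i,i}u_{j,j}$ and $-(u_{i,i}+u_{i+1,i})(u_{j,j}+u_{j+1,j})$ of the stated formula. The first double integral,
\[
\frac{1}{4\pi^2}\oint_{\mathcal C_1}\oint_{\mathcal C_2}\frac{z_1^{\,i}z_2^{\,j}\,m'(z_1)m'(z_2)(z_1-z_2)}{c\,(m(z_1)-m(z_2))}\,dz_1\,dz_2,
\]
is the hard one: after substitution the factor $(m_1-m_2)^{-1}$ couples the variables, so one integrates over the inner loop first --- picking up the pole at $m_2=m_1$ in addition to that at $m_2=0$, equivalently expanding $(m_1-m_2)^{-1}$ in the regime $|m_1|<|m_2|$ --- and then over the outer loop, which converts the double residue into the convolution sums $\sum_{l=0}^{i}u_{i+1,l}\,u_{j,i+j-l}$ and $\sum_{l=0}^{i-1}u_{i,l}\,u_{j+1,i+j-l}$ with precisely the asymmetric ranges appearing in $\psi_{2ij}$. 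Summing the three contributions gives the formula.

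The main obstacle is exactly this last step: handling the $(m(z_1)-m(z_2))^{-1}$ singularity correctly --- fixing which of $\mathcal C_1,\mathcal C_2$ is the inner contour, deciding which poles fall inside after the map, and tracking the sign from the orientation reversal --- and then matching the resulting double sums of Taylor coefficients, with their asymmetric bounds $\sum_{l=0}^{i}$ versus $\sum_{l=0}^{i-1}$, against the target. A convenient final sanity check is the scalar case $i=j=1$: here $\beta_{n1}=\gamma_{n1}$ and $\beta_1=\gamma_1$, so one must recover $\psi_{2,11}=\gamma_2-\gamma_1^2=\Var(w^2)$; substituting $u_{1,0}=-1$, $u_{1,1}=\gamma_1$, $u_{1,2}=-c\gamma_2$, $u_{2,0}=1$, $u_{2,1}=-2\gamma_1$, $u_{2,2}=\gamma_1^2+2c\gamma_2$ into the formula indeed collapses it to $\gamma_2-\gamma_1^2$, which pins down the normalisation and catches any stray sign from the contour manipulation.
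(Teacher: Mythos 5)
Your proposal follows essentially the same route as the paper's proof of Proposition~\ref{clt-moments-2} in Section~\ref{ssec:clt-moments}: reduce to Theorem~\ref{clt2} with $f_\ell(x)=x^\ell$, change variables via $z=P(m)/m$ so that the contours $\mathcal C_\ell$ map to small, negatively oriented loops around $m=0$, and evaluate the three double contour integrals by residues, with $I_2$ and $I_3$ factoring into products of single residues and $I_1$ handled by expanding $(m_1-m_2)^{-1}$ in the regime $|m_1|<|m_2|$ to produce the asymmetric convolution sums. Your $i=j=1$ sanity check computing $\psi_{2,11}=\gamma_2-\gamma_1^2$ is correct and matches the paper's explicit value of $\psi_{211}$ in Section~2.3.
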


\begin{proposition}\label{clt-moments}
Suppose that Assumptions (a)-(d) hold, then the random vector 
\begin{equation}\label{bk}
\sqrt{n}\left(\hat\beta_{n1}- \beta_1,\ldots\hat\beta_{nk}- \beta_k\right)\xrightarrow{D}N_k(0, \Psi_2),
\end{equation}
where the covariance matrix $\Psi_2$ is defined in Proposition \ref{clt-moments-2}.
\end{proposition}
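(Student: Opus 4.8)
The plan is to deduce Proposition~\ref{clt-moments} from Propositions~\ref{clt-moments-1} and~\ref{clt-moments-2} by the same converging-together (Slutsky) argument that turns Theorems~\ref{clt1}--\ref{clt2} into Proposition~\ref{clt}. First I would fix the algebra underlying the decomposition~\eqref{LSS} with the choice $f_j(x)=x^j$: since by definition $\hat\beta_{nj}=\int x^j\,dF_n$, $\beta_{nj}=\int x^j\,dF^{c_n, G_n}$ and $\beta_j=\int x^j\,dF^{c_n, G}$, one has, for $1\le j\le k$,
\[
\sqrt n\,(\hat\beta_{nj}-\beta_j)=\sqrt n\,(\hat\beta_{nj}-\beta_{nj})+\sqrt n\,(\beta_{nj}-\beta_j).
\]
Each monomial $x\mapsto x^j$ is entire, hence analytic on any open interval containing the support set appearing in Theorem~\ref{clt1}, so the hypotheses of Propositions~\ref{clt-moments-1} and~\ref{clt-moments-2} hold for the functions $f_j(x)=x^j$.

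Next I would dispatch the two terms on the right-hand side separately. Proposition~\ref{clt-moments-1} gives $n(\hat\beta_{n1}-\beta_{n1},\dots,\hat\beta_{nk}-\beta_{nk})\xrightarrow{D}N_k(v,\Psi_1)$; in particular each coordinate satisfies $\hat\beta_{nj}-\beta_{nj}=O_p(1/n)$, so $\sqrt n\,(\hat\beta_{nj}-\beta_{nj})=O_p(n^{-1/2})\xrightarrow{P}0$ and the whole first vector tends to $0$ in probability. Proposition~\ref{clt-moments-2} gives exactly $\sqrt n\,(\beta_{n1}-\beta_1,\dots,\beta_{nk}-\beta_k)\xrightarrow{D}N_k(0,\Psi_2)$. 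Combining the two through the vector form of Slutsky's theorem then yields~\eqref{bk}; note that one does not even need the asymptotic independence of the two pieces recorded after Proposition~\ref{clt}, since Slutsky only requires the complementary piece to vanish in probability.

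An equivalent, slightly shorter route that I would keep as a sanity check is to apply Proposition~\ref{clt} directly to the entire functions $f_j(x)=x^j$; this already produces $\sqrt n\,(\hat\beta_{n1}-\beta_1,\dots,\hat\beta_{nk}-\beta_k)\xrightarrow{D}N_k(0,\Gamma_2)$, and it then remains only to identify the limiting covariance, i.e. to check that the contour-integral entries $\gamma_{2ij}$ of Theorem~\ref{clt2} collapse to the explicit coefficients $\psi_{2ij}$ of Proposition~\ref{clt-moments-2} when $f_i(z_1)=z_1^i$ and $f_j(z_2)=z_2^j$. That identification is the only genuinely computational step in the whole chain, and it is precisely the calculation carried out inside the proof of Proposition~\ref{clt-moments-2}: one deforms the contours $\mathcal C_1,\mathcal C_2$, changes variables through the parametrization $z\mapsto m(z)$ implicit in~\eqref{mp3}, and reads off residues at the origin, expanding everything in the Taylor coefficients $u_{s,t}$ of $P^s(z)$. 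So for the present proposition no new analysis is needed; the main thing to watch is the bookkeeping of the index shifts in the triple sum defining $\psi_{2ij}$, together with the observation that, since the kurtosis ($\Delta$)-dependent corrections live entirely in the $1/n$-rate term of Proposition~\ref{clt-moments-1}, they do not survive the $\sqrt n$-scaling, so that the limiting covariance $\Psi_2$ carries no dependence on $\Delta$.
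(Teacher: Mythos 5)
Your proposal is correct and matches the paper's intent exactly: the paper simply states that Proposition~\ref{clt-moments} ``easily follows from Propositions \ref{clt-moments-1} and \ref{clt-moments-2},'' and your argument fills in precisely that elementary step — the decomposition $\sqrt n(\hat\beta_{nj}-\beta_j)=\sqrt n(\hat\beta_{nj}-\beta_{nj})+\sqrt n(\beta_{nj}-\beta_j)$, with the first vector shown to be $O_p(n^{-1/2})$ from the $n$-rate convergence in Proposition~\ref{clt-moments-1} and the second supplying the Gaussian limit $N_k(0,\Psi_2)$ from Proposition~\ref{clt-moments-2}, combined by Slutsky. Your remark that asymptotic independence is not needed for this step is accurate, and the alternative route via Proposition~\ref{clt} is a correct but circular restatement, as you yourself note.
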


Proposition \ref{clt-moments-1} is a  straightforward application of 
Theorem~\ref{clt2} in this paper in combination with 
Lemma 2 in \cite{Tian15}  and Theorem 1 in
\cite{Q17}. We thus omit its proof here. 
The proof of Proposition \ref{clt-moments-2} is
presented in Section~\ref{ssec:clt-moments}. 
Proposition \ref{clt-moments}
easily follows from Propositions \ref{clt-moments-1} and
\ref{clt-moments-2}. Next,
similarly to the correction  in
\eqref{clt-corrected} for finite samples,  we have the corrected CLT
\begin{equation}\label{clt-moments-corrected}
\sqrt{n}\left(\hat\beta_{n1}- \beta_1,\ldots\hat\beta_{nk}- \beta_k\right)\stackrel{\cdot}{\sim}N_k(v/\sqrt{n}, \Psi_1/n+  \Psi_2).
\end{equation}
Simulation results show that this
corrected CLT indeed provides  a generally  more accurate approximation in finite
sample situations. 

As an example we consider  
the fluctuation of the first two moments.
>From Propositions \ref{clt-moments-1}-\ref{clt-moments}, their finite-sample corrected CLT is
\begin{eqnarray}\label{clt-beta}
  \sqrt{n}\left(
    \begin{matrix}
      \hat \beta_{n1}-  \beta_{1}\\ 
      \hat \beta_{n2}-  \beta_{2}
    \end{matrix}\right)\stackrel{\cdot}{\sim} 
  N\left(\left(
      \begin{matrix}
        0\\
        v_2/\sqrt{n}
      \end{matrix}
    \right),
    \left(\begin{matrix}
        \psi_{111}/n+ \psi_{211}&\psi_{112}/n+ \psi_{212}\\
        \psi_{112}/n+ \psi_{212}& \psi_{122}/n+ \psi_{222}
      \end{matrix}
    \right)\right)
\end{eqnarray}
where the parameters are respectively
\begin{eqnarray*}
&& \beta_1= \gamma_1,\quad  \beta_2=c_n \gamma_2+ \gamma_1^2,\quad v_2=(1+\Delta)  \gamma_2,\quad \psi_{111}=(2+\Delta)  \gamma_2/c,\quad  \psi_{211}=  \gamma_2-  \gamma_1^2,\\
&&\psi_{112}=2 (2 + \Delta) ( \gamma_1 \gamma_2/c+ \gamma_3),\quad  \psi_{212}=c( \gamma_3- \gamma_1 \gamma_2)+2( \gamma_1 \gamma_2- \gamma_1^3),\\
&&\psi_{122}=4 ((2 + \Delta)   \gamma_1^2   \gamma_2/c + 8(2 + \Delta)   \gamma_1   \gamma_2 +    4(  \gamma_2^2 + c (2 + \Delta)   \gamma_4)),\\
&& \psi_{222}=c^2(  \gamma_4-  \gamma_2^2)+4c  \gamma_1  \gamma_3+4(1-c)  \gamma_1^2  \gamma_2-4  \gamma_1^4\label{psi-2}.
\end{eqnarray*}

We have run a simulation experiment for various scale mixtures to
check the finite-sample properties of these two moment estimators 
$\hat \beta_{n1}$ and $\hat \beta_{n2}$. The results are reported in
Appendix C of the supplementary material. Their asymptotic normality
is well confirmed in many tested situations. 
The results also reveal that the correction of the CLT in \eqref{clt-moments-corrected} is significant. For example, under a tested scenario (with  standardized  chi-square distributed
$z_{ij}$'s), 
the limiting distribution of $\sqrt{n}(\hat \beta_{n2}- \beta_2)$
 is  $N(0, 39.32)$, while
the corrected distribution is $N(3.48, 48.88)$ for dimensions
$(p,n)=(200,400)$: the difference is quite significant. 

{\ggai It is worth noting that the CLTs  established in this
  section are based on the scale-mixture model \eqref{mix-model} with
  mean zero. These results can be extended without  difficulties to a
  general population with a non-zero mean $\mu$,
  i.e. $\x=\mu+w\bbT_p\z$.  The required  adjustments are the
  replacements of the 
  covariance matrix $B_n$ and the dimensional ratio $c_n=p/n$ by 
  $B_n^*=\sum_{j=1}^n(\x_j-\bar{\x})(\x_j-\bar{\x})'/(n-1)$ and 
  $c_n^*=p/(n-1)$, respectively. All the CLTs then remain valid 
    following the substitution principle established in 
  \cite{Z15}.} 
\section{Testing the sphericity of  a high-dimensiona  mixture}
\label{sec:sph}

In this section, using the results developed in Section
\ref{sec:main}, we theoretically investigate the reliability of John's
test for the sphericity of a covariance matrix 
\citep{John(1972)} and its  high-dimensional corrected version
\citep{WangYao13} when  the underlying distribution 
is a  high-dimensional mixture. Our findings show  neither 
John's test nor  its corrected version is thus valid any more. 
This motivates  us to propose  a new test procedure.

\subsection{\gai{Failure of the high-dimensional John's test for mixtures}}

In \cite{John(1972)}, the author proposed a locally most powerful
invariant test for the sphericity of a normal population covariance
matrix. Let $\Sigma_p$ be the population covariance matrix. The
sphericity hypothesis to test is 
~$
  H_{0}:\Sigma_p= \sigma^2 I_p$~ 
for some unknown positive constant $\si^2$.
John's  test statistic is
$$
U=\frac{\sum_{i=1}^p(\lambda_i-\sum\lambda_i/p)^2/p}{(\sum_{i=1}^p\lambda_i/p)^2}=\frac{\hat \beta_{n2}}{\hat \beta_{n1}^2}-1,
$$
where $(\lambda_i)$ are the sample eigenvalues and $\hat \beta_{nj}$ is their $j$th empirical moment for $j=1,2$. When the dimension $p$ is assumed fixed, \cite{John(1972)} proved that, under $H_0$,
\begin{equation}\label{clt-john}
nU-p\xrightarrow{D}\frac{2}{p}\chi^2_f-p,
\end{equation}
as $n\rightarrow\infty$, where $\chi^2_f$ denotes the chi-square distribution with degrees of freedom $f=p(p+1)/2-1$. 

{\ggai 
This test has been extended to the high-dimensional framework in a
series of recent works such as 
\cite{LW02,BD05,Srivastavaetal.(2011), WangYao13} and \cite{Tian15}.}  
These extensions have a common assumption that the population follows  the linear transformation model \eqref{linear-trans} with $\sigma\bbT_p=\Sigma_p^{1/2}$ and satisfy the moment conditions in \eqref{mom-condition}. Under the null hypothesis, it is proved that
\begin{equation}\label{clt-wang}
nU-p\xrightarrow{D}N(\Delta+1,4),
\end{equation}
as $(n,p)\rightarrow\infty$, where $\Delta=E(z_{11}^4)-3$ is the kurtosis of $z_{11}$. We can see that the distribution $2\chi^2_f/p-p$ in \eqref{clt-john} tends to the normal distribution $N(1,4)$ if $p\rightarrow\infty$, which is consistent with the CLT in \eqref{clt-wang} in the normal case ($\Delta=0$). 

However, when the population follows the mixture model defined in
\eqref{mix-model}  with a spherical covariance matrix
$\Sigma_p=\sigma^2 I_p$,  
the tests based on \eqref{clt-john} and \eqref{clt-wang} will fail and
reject the sphericity  hypothesis with a probability close to  one for
all large $(n,p)$. This phenomenon can be  intuitively explained by
the point limit of their test statistic. Specifically, for general PSD $H$ and MD $ G$, it can be shown that 
\begin{eqnarray}\label{beta-lim}
\hat \beta_{n1}\xrightarrow{a.s.}   \gamma_1\tilde\gamma_1\quad \text{and}\quad \hat \beta_{n2}\xrightarrow{a.s.} c  \gamma_2\tilde\gamma_1^2+ \gamma_1^2\tilde\gamma_2,
\end{eqnarray}
as $(n,p)\rightarrow\infty$, 
{\ggai  where $\tilde\gamma_1=\int t dH(t)$ and $\tilde\gamma_{2}=\int t^2dH(t)$ are the first and second moments of $H$, respectively. Note that $\tilde\gamma_1\equiv1$ in our settings.} 
Therefore, the statistic 
$$U-c_n=\hat\beta_{n2}/\hat\beta_{n1}^2-1-c_n\xrightarrow{a.s.} c(  \gamma_2/  \gamma_1^2-1)+(\tilde\gamma_2/\tilde\gamma_1^2-1),$$
which is  positive when the population is a mixture.  This implies
that John's test  statistic $nU-p = n(U-c_n)$
{\ggai will tend to infinity for spherical mixture and thus
  entirely lose the control of the type I error.} Analytically, from the corrected CLT in
\eqref{clt-beta} and a standard application of the delta-method, we get
\begin{equation}\label{clt-mix-c}
\sqrt{n}\left(U-c_n  \gamma_2/  \gamma_1^2\right)\stackrel{\cdot}{\sim}N\left(\mu_U/\sqrt{n},\sigma_{1U}^2/n+ \sigma_{2U}^2\right),
\end{equation}
under $H_0$, where $\mu_U=(1+\Delta)  \gamma_2/  \gamma_1^2$ and 
\begin{eqnarray*}\label{sigma-u2}
\sigma_{1U}^2&=&4\left(c \Delta \left(  \gamma_1^2   \gamma_4-2   \gamma_1   \gamma_2   \gamma_3+  \gamma_2^3\right) + \left(2 c   \gamma_1^2   \gamma_4-4 c   \gamma_1   \gamma_2   \gamma_3 
+ 2 c   \gamma_2^3+  \gamma_1^2   \gamma_2^2\right)\right)/  \gamma_1^6,\\
 \sigma_{2U}^2&=&c^2 \left( \gamma_1^2 \left( \gamma_4- \gamma_2^2\right)+4 ( \gamma_2^3- \gamma_1  \gamma_2  \gamma_3)\right)/ \gamma_1^6.\nonumber
\end{eqnarray*}
It follows that for any fixed critical value $z_\alpha$ of the test, the type I error of $T_J^c$ is 
\begin{eqnarray}\label{type-1-p}
&&P\left(\frac{nU-p-\Delta-1}{2}>z_\alpha\right)\nonumber\\
&=&P\left(\sqrt{n}\left(U-c_n  \gamma_2/  \gamma_1^2\right)>\frac{2z_\alpha+\Delta+1+p(1-  \gamma_2/  \gamma_1^2)}{\sqrt{n}}\right)
\rightarrow1,
\end{eqnarray}
as $(n,p)\rightarrow\infty$, which describes the exploded trends of the type I error.

{\ggai 
Despite the invalidation of the corrected John's test for the sphericity hypothesis, one may be surprised to see that the statistic $nU-p$ can be employed to distinguish degenerate spherical mixture (with only one component) from general spherical mixtures. In this situation, the null distribution of the test is provided by the CLT in \eqref{clt-wang}. The power function of the test as well as its consistency are declared by \eqref{type-1-p}.  

We conclude this section by the following observation. Assume  that
the MD $ G$ degenerates to a  Dirac point measure at
$\sigma^2=\E(w^2)$, 
that is the population is not a mixture but the linear
  transformation model \eqref{linear-trans}, 
we have $\sigma_{1U}^2=4$ and $ \sigma_{2U}^2=0$, and thus the CLT in \eqref{clt-mix-c} reduces to
\begin{equation*}
\sqrt{n}\left(U-c_n\right)\stackrel{\cdot}{\sim}N\left((\Delta+1)/\sqrt{n},4/n\right),
\end{equation*}
which coincides with the one in \eqref{clt-wang} as it must be. 
This pleasant coincidence shows also that the smaller order terms 
$\mu_U/\sqrt{n}$ and $\sigma_{1U}^2/n$ appearing in the asymptotic
parameters of \eqref{clt-mix-c} have been precisely evaluated: no
other terms of similar order could be added in.}

\subsection{\gai{A sphericity test for high-dimensional mixtures}}
{\ggai 
We next develop new corrections to  John's test for
  high-dimensional mixtures.
>From the above analysis, John's test may
still be valid  if we consider the quantity $nU-p\gamma_2/\gamma_1^2$ and apply its approximated distribution in \eqref{clt-mix-c}. However, the centralization term $p\gamma_2/\gamma_1^2$} is  unknown in practice since the MD $ G$ is unobserved.  We thus have to
replace it with some suitable statistic and find the resulting
asymptotic null distribution.
To this end, we first transform the sample $(\x_1,\ldots,\x_n)$ into a
permuted counterpart $(\check\x_1,\ldots,\check\x_n)$ as follows:
for each sample $\x_i$, we randomly permute its $p$ coordinates. 
That is 
$\check \x_i=Q_i\x_{i}$ and $(Q_i)$ stand for a sequence of
independent $p\times p$ random permutation matrices. 
{\ggai Next we calculate the $k$th moment statistic
$\check\beta_{nk}:=\tr (\check B_n^k)/p$ for $k=1,2$, where
$\check B_n=\sum_{i=1}^n\check \x_i\check\x_i'/n$ is the covariance matrix of the permuted samples,} and then let   $\check
U=\check\beta_{n2}/\check\beta_{n1}^2-1$. Notice that $\check U$ is a
substitute for $c_n \gamma_2/ \gamma_1^2$ and we have
$\check\beta_{n1}\equiv\hat\beta_{n1}$.   Finally we define a  new test statistic
\begin{equation}
  \label{eq:Tn}
  T_n=\hat\beta_{n2}-\check\beta_{n2}=\frac{1}{n^2p}\sum_{i\neq j}\left((\x_i'\x_j)^2-(\check\x_i'\check\x_j)^2\right).
\end{equation}

To examine the soundness of $T_n$, we calculate its expectation: 
\begin{eqnarray}
\E(T_n)&=&\frac{1}{n^2p}\sum_{i\neq j}\E\left((\x_i'\x_j)^2-(\check\x_i'\check\x_j)^2\right)\nonumber\\
&=&\frac{n-1}{np}\left(\tr(\Sigma_p^2)-pD_{\Sigma_p}^2-p(p-1)R_{\Sigma_p}^2\right)\nonumber\\
&=&{\ggai \frac{n-1}{np}\left(\sum_{i=1}^p\left(\sigma_{ii}-D_{\Sigma_p}\right)^2+\sum_{i\neq j}\left(\sigma_{ij}-R_{\Sigma_p}\right)^2\right)}\nonumber\\
&:=&\delta_n\geq0,\label{deltan}
\end{eqnarray}
{\ggai where $\Sigma_p=\E(w^2)\bbT_p^2:=(\sigma_{ij})$,
$D_{\Sigma_p}=\sum_{i=1}^p\sigma_{ii}/p$, and $R_{\Sigma_p}=\sum_{i\neq j}\sigma_{ij}/(p(p-1))$.}  Moreover, $\delta_n=0$ if
and only if $\Sigma_p=aI_p+b{\bf 11'}$ for some parameters $a$ and
$b$: this is the commonly called {\em compound symmetric covariance
  matrix}.  However, in this case $b> 0$ and the largest eigenvalue of
$\Sigma_p$ is $a+(p-1)b\rightarrow\infty$; it can then  be easily
recognized from sample data as the largest sample eigenvalue must be
far away from the remaining eigenvalues for large $p$. We thus exclude
this case from our alternative hypothesis.
Consequently,   $\delta_n=0$  under $H_0$ and $\delta_n>0$  under
$H_1$ {\ggai (with the compound symmetric case excluded)} so that $T_n$ is  a  potentially  reasonable test statistic.

{\ggai 
\begin{theorem}\label{test-tn}
Suppose that Assumptions (a)-(d) hold. 
\begin{itemize} \label{tm}
\item[{ 1).}] Under the null hypothesis, suppose that $\E(z_{11}^8)<\infty$, then
$$
\frac{nT_n}{\sqrt{8}\hat\gamma_{n2}}\xrightarrow{D}N(0,1),
$$
where $\hat\gamma_{n2}=(\hat\beta_{n2}-\hat\beta_{n1}^2)/c_n$.

\item[{ 2).}] Under the alternative hypothesis, if $\tr(\bbT_p^4)/p\to \tilde\gamma_2=\int t^2dH(t)<\infty$ and $n\delta_n\rightarrow\infty$ then the asymptotic power of the test tends to 1,
where $\delta_n$ is the expectation of $T_n$ defined in \eqref{deltan}.
\end{itemize} 

\end{theorem}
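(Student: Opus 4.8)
The plan is to treat $nT_n$ as a quadratic statistic in the $n$ sample vectors — a degenerate U-statistic under $H_0$ — proving asymptotic normality in Part~1 by a martingale CLT, and the power statement in Part~2 by a first/second moment (Chebyshev) argument. Throughout I would write $Y_{ij}=Y_{ji}=w_i^2w_j^2A_{ij}$, noting that by \eqref{eq:Tn} and the orthogonality of the permutation matrices $Q_i$ (which gives $\|\check\x_i\|=\|\x_i\|$, so the diagonal terms cancel), $nT_n=\frac1{np}\sum_{i\neq j}Y_{ij}$: under $H_0$ (where $\bbT_p=\I_p$, $\x_i=w_i\z_i$, $\check\x_i=w_iQ_i\z_i$) one has $A_{ij}=(\z_i'\z_j)^2-\big((Q_i\z_i)'(Q_j\z_j)\big)^2$, and under $H_1$, $A_{ij}=(\z_i'\bbT_p^2\z_j)^2-\big(\z_i'\bbT_pQ_i'Q_j\bbT_p\z_j\big)^2$. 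The elementary identity I would lean on repeatedly is $\E_{\z_j}\big[(\z_i'M\z_j)^2\big]=\|M'\z_i\|^2$ for fixed $M$ and independent $\z_i\perp\z_j$ with standardized i.i.d.\ coordinates; under $H_0$ this yields $\E[A_{ij}\mid\text{sample }i]=\|\z_i\|^2-\|\z_i\|^2=0$.

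\textbf{Part 1.} Order the samples and let $\F_k$ be the $\sigma$-field generated by the first $k$ samples and their permutation matrices. Since $\E[Y_{ij}\mid\F_{j-1}]=\E[w_j^2]\,\E[A_{ij}\mid\text{sample }i]=0$, the variables $D_j:=\frac2{np}\sum_{i<j}Y_{ij}$ form a martingale difference array with $nT_n=\sum_{j=2}^nD_j$, and I would invoke a standard martingale central limit theorem. Two conditions need checking. First, the conditional variance: $\sum_jE[D_j^2\mid\F_{j-1}]\xrightarrow{P}8\gamma_2^2$. Its diagonal part is $\frac4{n^2p^2}\sum_j\sum_{i<j}w_i^4\gamma_2\,\E_{\z_j,Q_j}[A_{ij}^2]$, and a short moment count gives $\E[A_{ij}^2]=4p^2+O(p)$ — from $\E[(\z_i'\z_j)^4]=3p^2+O(p)$, $\E[((Q_i\z_i)'(Q_j\z_j))^4]=3p^2+O(p)$ and $\E[(\z_i'\z_j)^2((Q_i\z_i)'(Q_j\z_j))^2]=p^2+O(p)$, the $O(1/p)$ near-orthogonality of the two inner products keeping the cross term subdominant — so, with concentration of $A_{ij}^2$ and the SLLN for $(w_i^4)$, the diagonal part tends to $8\gamma_2^2$; the off-diagonal part $\frac4{n^2p^2}\sum_j\sum_{i\neq i'<j}E[Y_{ij}Y_{i'j}\mid\F_{j-1}]$ has mean zero (because $E[Y_{ij}Y_{i'j}\mid\text{sample }j]=0$) and is $o_P(1)$ by a second-moment estimate. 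Second, the Lyapunov bound: conditionally on sample $j$, $D_j$ equals $\frac{2w_j^2}{np}$ times a sum of $j-1$ i.i.d.\ mean-zero terms $w_i^2A_{ij}$, and expanding the fourth moment — here the assumption $\E(z_{11}^8)<\infty$ enters, to get $E[A_{ij}^4\mid\text{sample }j]=O(p^4)$ — gives $E[D_j^4]=O(j^2/n^4)$, hence $\sum_jE[D_j^4]=O(1/n)\to0$. The martingale CLT then gives $nT_n\xrightarrow{D}N(0,8\gamma_2^2)$, and since $\hat\gamma_{n2}=(\hat\beta_{n2}-\hat\beta_{n1}^2)/c_n\xrightarrow{a.s.}\gamma_2$ under $H_0$ (Theorem~\ref{lsd} plus the law of large numbers), Slutsky's lemma delivers $nT_n/(\sqrt8\,\hat\gamma_{n2})\xrightarrow{D}N(0,1)$.

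\textbf{Part 2.} Now $\E(T_n)=\delta_n\ge0$ as in \eqref{deltan}, and $n\delta_n\to\infty$ by hypothesis; with $g_p:=\sum_k(\lambda_k(\bbT_p^2)-1)^2$ one has $\delta_n=\frac{n-1}{np}\gamma_1^2g_p+O(1/n)$, so $g_p\asymp p\,\delta_n\to\infty$. Also $\hat\gamma_{n2}\xrightarrow{a.s.}\gamma_2+\gamma_1^2(\tilde\gamma_2-1)/c\in(0,\infty)$ by \eqref{beta-lim} and the hypothesis $\tr(\bbT_p^4)/p\to\tilde\gamma_2<\infty$ — that is exactly where the latter assumption is needed — so the denominator $\sqrt8\,\hat\gamma_{n2}$ is bounded and bounded away from $0$ in probability. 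The test rejects when $T_n$ exceeds a threshold of order $1/n=o(\delta_n)$, so by Chebyshev's inequality it suffices to prove $\Var(T_n)=o(\delta_n^2)$, i.e.\ $\Var\big(\sum_{i\neq j}Y_{ij}\big)=o\big(n^4p^2\delta_n^2\big)$. In that variance, pairs with disjoint index sets are uncorrelated; the $O(n^2)$ diagonal terms give $\sum\Var(Y_{ij})=O(n^2p^2)$; and the $O(n^3)$ terms sharing one index give $\sum\Cov(Y_{ij},Y_{ik})$ with $\Cov(Y_{ij},Y_{ik})=\gamma_1^2\gamma_2\Var\big(\E[A_{ij}\mid\text{sample }i]\big)+(\gamma_1^2\gamma_2-\gamma_1^4)g_p^2$ (using conditional independence of samples $j,k$ given sample $i$). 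Writing the conditional mean as a quadratic form, $\E[A_{ij}\mid\text{sample }i]=\z_i'\big(\bbT_p^4-(1-R)\bbT_p^2\big)\z_i-R\,(\mathbf 1'Q_i\bbT_p\z_i)^2$ with $R=R_{\bbT_p^2}=O(1/p)$, and bounding its variance by $\|\bbT_p^2\|_{\mathrm{op}}^2g_p\le\tr(\bbT_p^4)g_p=O(p\,g_p)$, one gets $\Cov(Y_{ij},Y_{ik})=O(p\,g_p+g_p^2)$; collecting, $\Var\big(\sum_{i\neq j}Y_{ij}\big)=O\big(n^2p^2+n^3p^2\delta_n(1+\delta_n)\big)=o\big(n^4p^2\delta_n^2\big)$ since $n\delta_n\to\infty$. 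Thus $T_n=\delta_n(1+o_P(1))$ and $nT_n/(\sqrt8\,\hat\gamma_{n2})\xrightarrow{P}\infty$, so the power tends to $1$. Since $\E(z_{11}^8)$ is not assumed in Part~2, I would carry out all these moment estimates after a standard truncation of the $z_{ij}$ at level $\eta_n\sqrt n$, the resulting perturbation of $T_n$ being $o_P(\delta_n)$.

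\textbf{The main obstacle} is the bookkeeping of the $O(n^3)$ cross terms sharing one sample index: in Part~1, showing that the conditional-covariance sum vanishes after the $1/(np)$ scaling requires a careful second-moment estimate resting on the asymptotic uncorrelatedness of $\z_i'\z_j$ and $(Q_i\z_i)'(Q_j\z_j)$; in Part~2, showing that $\sum\Cov(Y_{ij},Y_{ik})$ is dominated by the signal $n^4p^2\delta_n^2$ rests on the identity $g_p=\sum_k(\lambda_k(\bbT_p^2)-1)^2$ and the bound $\Var(\E[A_{ij}\mid\text{sample }i])=O(p\,g_p)$. The truncation needed in Part~2 to make these moments finite, though routine, is an additional layer of accounting.
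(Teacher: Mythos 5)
Your Part~1 essentially reproduces the paper's argument. You use the same martingale decomposition (the paper writes $nT_n=\sum_{k=2}^n(D_{nk}-\check D_{nk})$ with $D_{nk}=\frac{2}{np}(\x_k'S_{k-1}\x_k-w_k^2\tr S_{k-1})$, which after cancellation of the $w^2$-centering terms is exactly your $\frac2{np}\sum_{i<k}Y_{ik}$), the same martingale CLT, the same two verification conditions (conditional variance $\to 8\gamma_2^2$, Lyapunov fourth-moment condition using $\E z_{11}^8<\infty$), and the same application of Slutsky with $\hat\gamma_{n2}\to\gamma_2$. The only cosmetic difference is that the paper treats $D_{nk}$ and $\check D_{nk}$ separately (exploiting equidistribution to get $8\gamma_2^2=2\cdot 4\gamma_2^2$) and conditions on $\w$ first, while you handle the difference directly and take $\F_j$ to include the $w$'s; both splittings of the conditional variance into on- and off-diagonal pieces are equivalent up to bookkeeping in coordinate vs.\ sample indices.

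For Part~2 the strategy (Chebyshev) is the same, but there is a genuine gap in your bookkeeping. You write $\delta_n=\frac{n-1}{np}\gamma_1^2 g_p+O(1/n)$ with $g_p=\tr\big((\bbT_p^2-I)^2\big)=\tr(\bbT_p^4)-p$, but from \eqref{deltan} one has exactly
\begin{equation*}
\delta_n=\frac{n-1}{np}\,\gamma_1^2\Big(\tr(\bbT_p^4)-p-p(p-1)R_{\bbT_p^2}^2\Big)=\frac{n-1}{np}\,\gamma_1^2\,h_p,\qquad h_p:=g_p-p(p-1)R_{\bbT_p^2}^2,
\end{equation*}
and in general $h_p\le g_p$ with possibly $h_p\ll g_p$. (Your assertion $R_{\bbT_p^2}=O(1/p)$ is also not justified under the stated hypotheses: the only available bound is $R_{\bbT_p^2}^2\le g_p/(p(p-1))$, i.e.\ $R_{\bbT_p^2}=O(1/\sqrt p)$ when $\tr(\bbT_p^4)=O(p)$.) Consequently the covariance you compute should read $(\gamma_1^2\gamma_2-\gamma_1^4)h_p^2$, not $g_p^2$, and — more importantly — your bound $\Var\big(\E[A_{ij}\mid\text{sample }i]\big)=O(p\,g_p)$ is only guaranteed in terms of $g_p$, while the signal is of order $h_p$. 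If $\bbT_p^2$ is close to compound symmetric (a family your hypotheses do not exclude, only the exactly compound-symmetric case is excluded), one can have $g_p\asymp p$ while $h_p=o(p)$ and still $n\delta_n\to\infty$; in this regime your estimate gives $\Var(T_n)/\delta_n^2\gtrsim p\,g_p/(n\,h_p^2)$, which need not vanish, so the Chebyshev step $T_n=\delta_n(1+o_P(1))$ does not follow from what you wrote. The paper sidesteps your computation by quoting Theorem~2.2 of Srivastava et~al.\ (2011) for the bound $\Var(T_n\mid\w)=O(n^{-1})$ and then arguing directly with $T_n-\delta_n\to_P0$; you would need either to sharpen your variance bound to be of order $h_p$ rather than $g_p$ (i.e.\ exploit cancellation between the $\bbT_p^4-(1-R)\bbT_p^2$ part and the $R\,\bbT_p\mathbf{1}\mathbf{1}'\bbT_p$ part of $N$ in $\E[A_{ij}\mid i]=\z_i'N\z_i$, noting $\tr N=h_p$), or to impose a separation from the compound-symmetric family as the paper informally does. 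Finally, the truncation step you add at the end is not needed for Part~2: the moments you estimate there involve only $\E z_{11}^4<\infty$, which is Assumption (b).
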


In the first conclusion of Theorem \ref{test-tn}, $\hat\gamma_{n2}$ is
a consistent estimator of $\gamma_2$ under $H_0$. This is from Theorem
\ref{lsd} and the recursive formulae \eqref{beta-gamma}. A
substitution of $\hat\gamma_{n2}$ is
$\check\gamma_{n2}=(\check\beta_{n2}-\check\beta_{n1}^2)/c_n$. These
two estimators are equivalent under $H_0$, but
$\E(\check\gamma_{n2}-\hat\gamma_{n2})=\delta_n/c_n>0$ under
$H_1$. Therefore, the use of $\check{\gamma}_{n2}$ 
is expected to  improve the power of the test.
\gai{Noticing that the moment condition $\E(z_{11}^8)<\infty$ is used to verify the Lindeberg's condition in Martingale CLT.}
}

\subsection{Numerical results}

We report on simulations which are carried out to evaluate the
performance of the high-dimensional John's test based on the existing null distribution in
\eqref{clt-wang}, referred as $T_J^c$, and the proposed test
$T_n$ using the asymptotic null distribution of Theorem~\ref{test-tn}.
For  comparison,  we also conduct the ideal (though
impracticable)  John's test based on
\eqref{clt-mix-c}, referred as $T_J^*$, by assuming MDs $ G$
known. Results from $T_J^*$ can be regarded as a benchmark of the
sphericity test in an ideal situation.  Throughout the experiments,  the
significance level is fixed at $\alpha=0.05$ and the number of
independent replications is 10,000.  Samples of $(z_{ij})$ are drawn
from standard normal $N(0,1)$ or scale $t$, i.e.\ $\sqrt{4/6}\cdot
t_6$.

As we discussed, the test $T_J^c$ suffers serious size distortion under mixture models, we now numerically illustrate this phenomenon. 
The model is a two-components  spherical mixture of the  form  
$
  G=0.5\delta_{1}+0.5\delta_{\sigma_2^2},
$
where the parameter 
$\sigma_2^2$ ranges from 1 to 1.6 by steps of  0.05.  Specifically, the
population covariance matrix here is 
$\Sigma_p=\frac12(1+\sigma_2^2)I_p$. The dimensional setting is $(p,n)=(200,400)$.
The exploding factor in \eqref{type-1-p} is 
$$
\frac{p}{\sqrt{n}}\left(1-\frac{  \gamma_2}{  \gamma_1^2}\right)=10\left(\frac{1-\sigma_2^2}{1+\sigma_2^2}\right)
$$
ranging from 0 to -2.13. 
Results are plotted in Figure
\ref{type-1-error}, where the circled line (in blue) shows empirical sizes and
the raw  line (in red) is the theoretical one based on \eqref{clt-mix-c}. We can see  that the empirical sizes grow from 0.05 to 1, which are perfectly fitted by the theoretical curve.

\begin{figure}[h]
\begin{minipage}[t]{0.5\linewidth}
\includegraphics[width=2.3in]{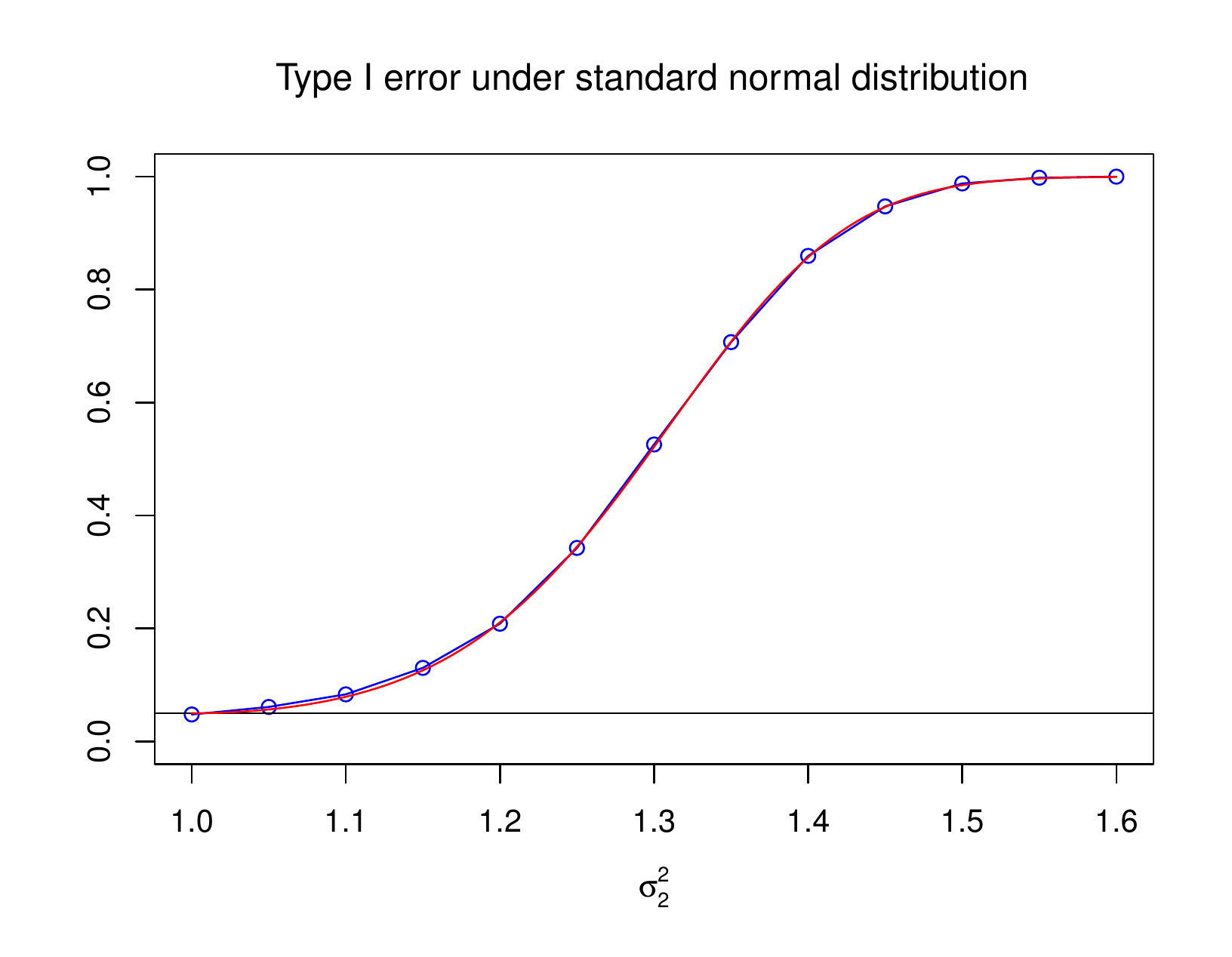} 
\end{minipage}%
\begin{minipage}[t]{0.5\linewidth}
\includegraphics[width=2.3in]{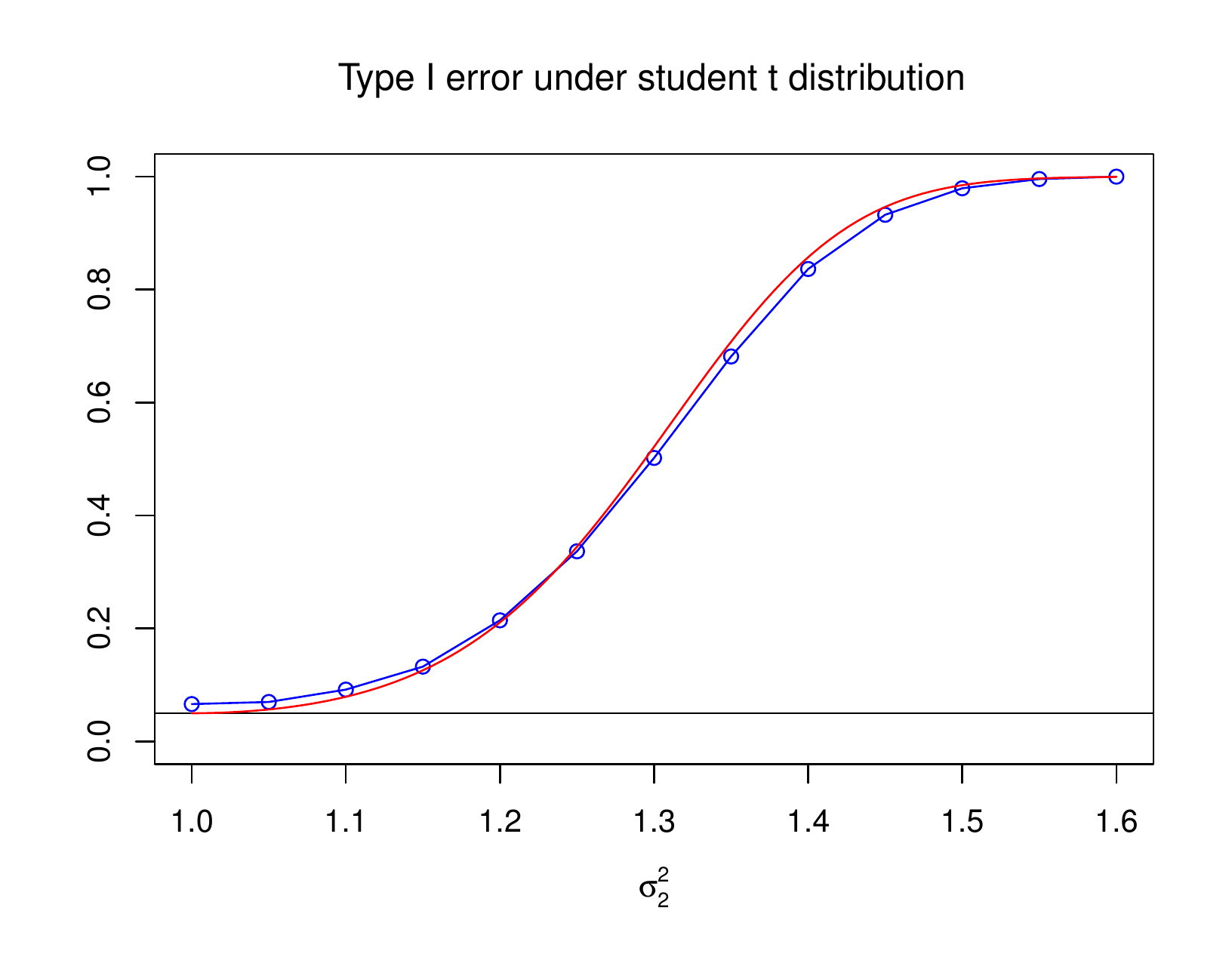}
\end{minipage}
\caption[]{Empirical sizes $T_J^c$ under populations of normal mixture (left panel)  and student-t mixture (right panel), respectively, where the blue lines marked with circles are simulated probabilities and the red lines are theoretical ones.}
  \label{type-1-error}
\end{figure}

Next we compare the performance of $T_n$ with  $T_J^*$ in terms of
both the  empirical size and power.
To study their empirical sizes, we employ three models of the MD under $H_0$,
$$
 G_1= 0.5\delta_1+0.5\delta_2,\quad G_2= 0.3\delta_1+0.4\delta_2+0.3\delta_3,\quad G_3= 0.2\delta_1+0.3\delta_2+0.3\delta_3+0.2\delta_4.
$$
The dimensional ratios are $p/n=1/2, 1,$ $2$, and the sample sizes are $n= 100,200,400$.
Results collected in Table \ref{table:s12} show that the empirical size of $T_J^*$ is around the nominal level $\alpha=0.05$ under normal mixture but contains a bias under student $t$ mixture when the dimensions are small. The bias decreases as the dimensions increase. {\ggai  In contrast, the size of $T_n$ is more favorable under both normal and student-t mixtures. As it has only slightly downward bias when the dimensions are small.}

{\ggai 
\begin{table}[tbp]
\setlength\tabcolsep{6pt}
\begin{center}
\caption{Empirical sizes of $T_J^*$ and $T_n$ (in percent) under
  normal mixture with the three MDs $ G_1$, $ G_2$,  and
  $ G_3$. The nominal significant level is $\alpha=0.05$.
  Upper block: normal mixtures. Lower block: Student-$t$ mixtures.
\label{table:s12}
}
\medskip
\begin{tabular}{rccccccccccccccccccc}
\hline
            & \multicolumn{3}{c}{$p/n=1/2$} &&\multicolumn{3}{c}{$p/n=1$} &&\multicolumn{3}{c}{$p/n=2$}\\
        $n$ & $ G_1$ &$ G_2$&$ G_3$&& $ G_1$ &$ G_2$&$ G_3$&& $ G_1$ &$ G_2$&$ G_3$\\
            \cline{2-4} \cline{6-8}\cline{10-12}
$T_J^*$ \qquad 
       $100$ &4.85&4.88&5.47&&4.91&5.24&5.15&&4.27&4.76&4.92\\
       $200$ &5.00&5.05&5.27&&4.71&5.13&5.33&&4.40&5.00&4.78\\
       $400$ &5.05&5.04&5.08&&4.72&4.92&5.12&&3.93&4.85&4.92\\[2mm]
 $T_n$ \qquad       
       $100$ &3.83&3.71&3.79&&4.31&4.61&4.28&&4.77&4.77&4.45\\
       $200$ &4.59&4.64&4.52&&4.31&4.95&4.61&&4.94&4.89&4.56\\
       $400$ &4.78&4.74&4.99&&5.38&4.70&5.01&&4.99&4.92&4.69\\[1mm]
\hline
$T_J^*$ \qquad 
       $100$ &8.41&8.33&8.94&&7.61&7.11&7.39&&6.75&6.16&6.54\\
       $200$ &7.15&7.00&7.62&&6.91&6.70&6.16&&5.67&5.82&5.27\\
       $400$ &6.46&6.53&6.19&&5.34&5.71&5.67&&5.26&5.32&5.35\\[2mm]
$T_n$ \qquad  
       $100$ &4.48&4.21&4.37&&4.83&4.67&4.46&&4.89&4.60&5.06\\
       $200$ &4.44&4.53&4.83&&4.80&5.05&4.62&&4.95&4.85&5.14\\
       $400$ &4.77&4.72&4.99&&4.44&5.01&5.35&&5.05&4.69&4.83\\
\hline
\end{tabular}
\end{center}
\end{table}
}

To compare the powers of the two tests, {\ggai we design a diagonal shape matrix $\bbT_p^2$ with its PSD
being $H=0.5\delta_{\sigma_1^2}+0.5\delta_{\sigma_2^2}$ where
$\sigma_1^2=1-x$ and $\sigma_2^2=1+x$ with $x$ ranging from $[0,0.3]$
by steps of 0.03. For this model, the factor $\delta_n$ in \eqref{deltan} is $(1-1/n)x^2\in [0,0.08955]$.}
The MD model is simply taken as $ G_3$. The dimensions are
$(p,n)=(400,200)$ and 10,000 independent replications are used (as previously).
{\ggai Figure \ref{power} illustrates that the empirical powers of $T_n$ and $T_J^*$ are both grow to 1 as the parameter $x$ gets
away from zero. Moreover, the power of $T_n$ dominates that of
$T_J^*$. This can be partially explained by the fact that $T_n$
efficiently reduces the effect caused by the fluctuation of $G_n$ around the MD $G$.}

\begin{figure}[h]
\begin{minipage}[t]{0.5\linewidth}
\includegraphics[width=2.3in]{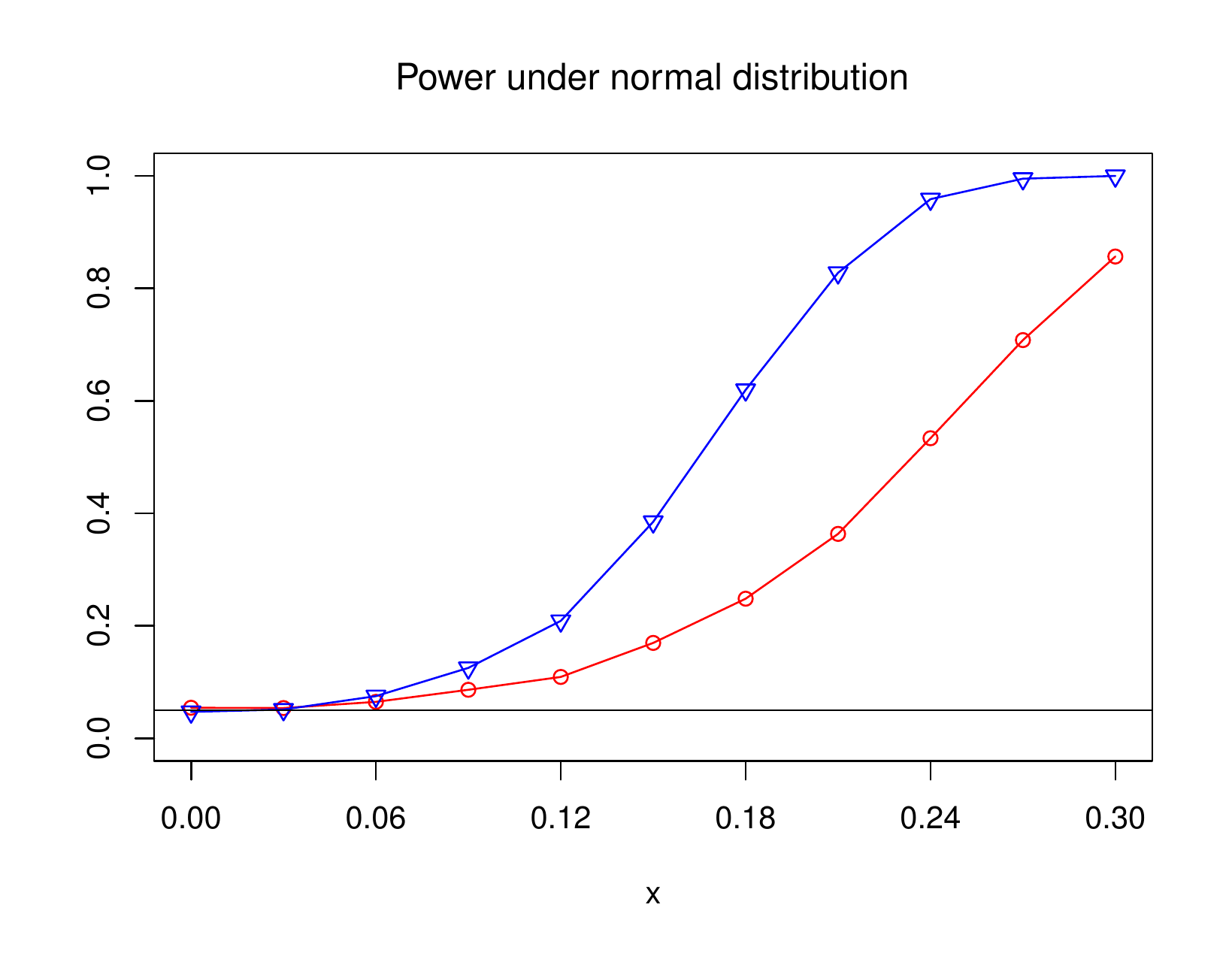} 
\end{minipage}%
\begin{minipage}[t]{0.5\linewidth}
\includegraphics[width=2.3in]{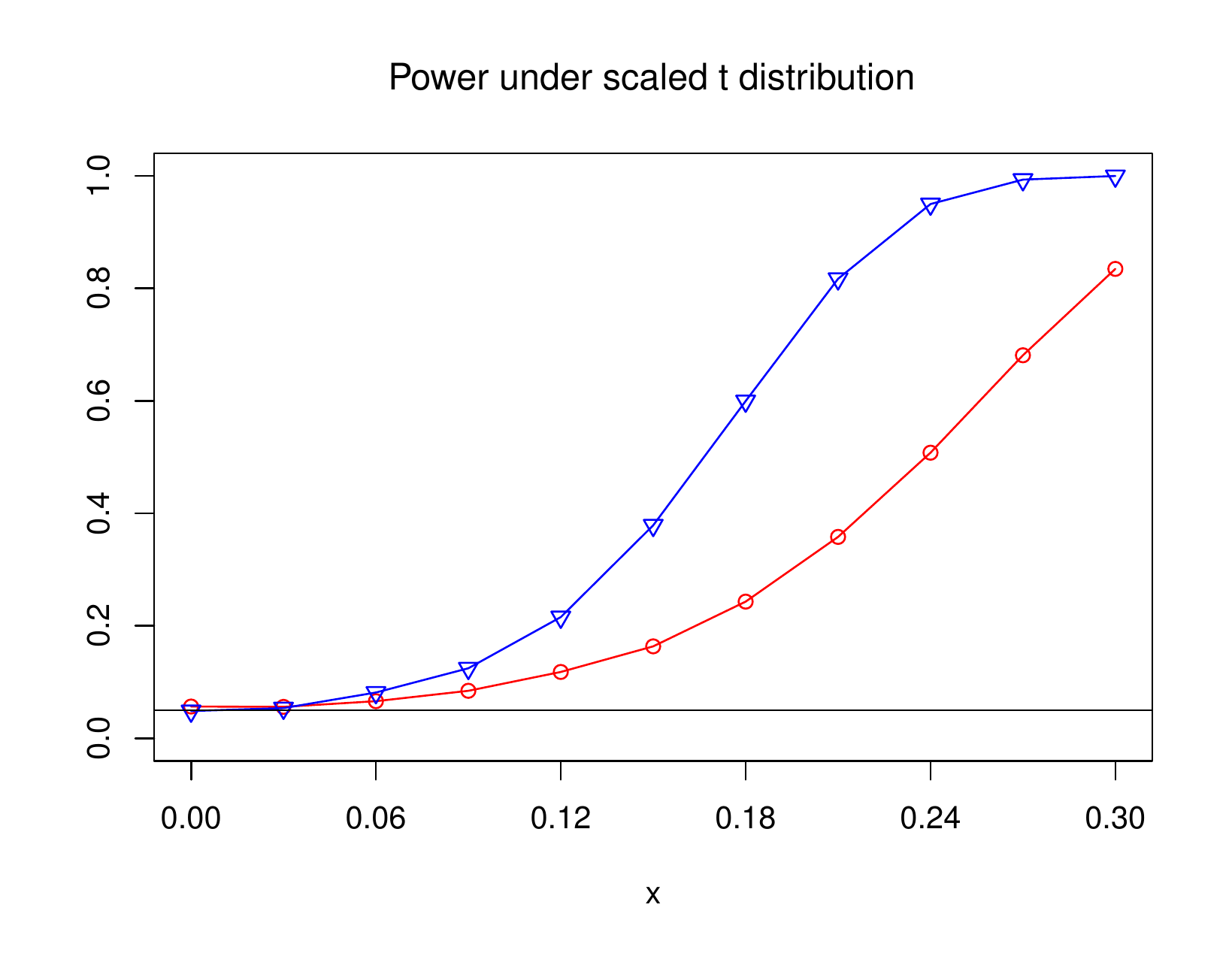}
\end{minipage}
\caption[]{Empirical powers of $T_J^*$ (red line marked with circles) and $T_n$ (blue line marked with triangles) under populations of normal mixture (left panel)  and student-$t$ mixture (right panel), respectively.}
  \label{power}
\end{figure}

\ggai 
\section{Application in model-based clustering}
\label{sec:clust}

Gaussian mixture with finite components are widely applied in model-based cluster analysis.
Considering the mixture model in \eqref{eq:fmn} with $K$ components, the $k$th component covariance can be decomposed as 
\begin{equation*}
\Sigma_k=\sigma_k^2U_k\Lambda_kU_k',
\end{equation*}
where $U_k$ is the matrix of eigenvectors representing the orientation, $\Lambda_k$ is a diagonal matrix proportional to that of the eigenvalues and representing the shape, and $\sigma_k^2$ is a scalar standing for the volume of the cluster. Based on this decomposition, \cite{Banfield93} classified the covariance structure into 14 types in the light of that whether mixture components share a common shape, volume, and/or orientation, which yields a family of parsimonious mixture models. See also \cite{C95,Bensmail96,B00,Bouveyron07}, and \cite{Fraley07}. In high-dimensional scenarios where $p\geq n$, only 6 parsimonious models are concerned:
\begin{align*}
\text{Spherical types:}&~   \Sigma_k=\sigma^2 I_p~\text{(EII)},~ \Sigma_k=\sigma_k^2I_p~\text{(VII)};\\
\text{Diagonal types:}&~ \Sigma_k=\sigma^2 \Lambda~\text{(EEI)},~ \Sigma_k=\sigma_k^2 \Lambda~\text{(VEI)},~ \Sigma_k=\sigma^2 \Lambda_k~\text{(EVI)},~ \Sigma_k=\sigma_k^2 \Lambda_k~\text{(VVI)},  
\end{align*}
which are labeled by three letters ``E", ``V", and ``I"
\citep{Banfield93,Fraley07}. \gai{The letters ``E", ``V" and ``I''
  here designate various combinations in shape, volume and orientation
  for  the component covariance matrices $\Sigma_k$'s.}

An important task in data clustering with mixture models is to properly identify the covariance structure of the mixture. 
\cite{B00} developed a so-called {\em  Integrated Classification
  Likelihood} (ICL) criterion  to select the type of covariance structure and the number of components. 
Under Gaussian assumption, \cite{Fraley07} proposed a {\em Bayesian Information Criterion} (BIC) to  deal with this problem.
Here we apply the corrected John's test $T_J^c$ and our proposed test
$T_n$ to the structure identification among EII, VII, and other
diagonal types when the dimension $p$ is larger than the sample size
$n$. For comparison, we also included the BIC and ICL criteria in our
experiments \gai{by using two ready-made functions \texttt{\small
    mclustBIC}  and  \texttt{\small mclustICL} from  the
  free R package \texttt{\small  mclust}. } 
Samples of $(z_{ij})$ are drawn from standard normal $N(0,1)$. The dimensions are fixed at $(p,n)=(400,200)$. All statistics are calculated from 10,000 independent replications.

Our first experiment is to recognize the structure between EII and VII by $T_J^c$. We take the MD model $G=0.5\delta_1+0.5\delta_{\sigma_2^2}$ as used in Section 3, where $\sigma_2^2\in [1,2]$. Thus the mixture has (at most) two components with their covariance matrices being $\Sigma_1=I_p$ and $\Sigma_2=\sigma_2^2I_p$, respectively.
Results collected in Table \ref{table:c1} show  that when $\Sigma_k$ is EII ($\sigma_2^2=1$),  the empirical size of $T_J^c$ is around the nominal level $\alpha$. For this case, BIC and ICL choose the true model with probability 1. As $\Sigma_k$ moves away from the EII structure, $T_J^c$ can detect this change with an increasing probability up to 1.
In comparison to $T_J^c$, both BIC and ICL completely fail to identify the VII structure when $\sigma_2^2$ is smaller than 1.6.  

\begin{table}
	\setlength\tabcolsep{6pt}
	\begin{center}
		\caption{Probability of rejecting the EII structure using $T_J^c$, $BIC$, and $ICL$. The nominal significant level for $T_J^c$ is $\alpha=0.1,0.05,0.01,0.005,0.001$.}
		\medskip
		
		\begin{tabular}{cccccccccccccccccccc}
			\hline
			&$\alpha=0.1$&$\alpha=0.05$&$\alpha=0.01$&$\alpha=0.005$&$\alpha=0.001$&BIC&ICL\\
			$\sigma_2^2=1.0$&0.1006& 0.0471& 0.0087& 0.0048& 0.0002&0&0\\
			$\sigma_2^2=1.2$&0.6340& 0.4964& 0.2494& 0.1824& 0.0238&0&0\\
			$\sigma_2^2=1.4$&1     & 0.9999& 0.9987& 0.9975& 0.9956&0&0\\
			$\sigma_2^2=1.6$&1     & 1     & 1     & 1     & 1     &0&0\\
			$\sigma_2^2=1.8$&1     & 1     & 1     & 1     & 1     &0.6340&0.6340\\
			$\sigma_2^2=2.0$&1     & 1     & 1     & 1     & 1     &0.7635&0.7635\\
			\hline
		\end{tabular}
		\label{table:c1}
	\end{center}
\end{table}

In the second experiment, \gai{we aim to distinguish the spherical VII
  structure from the group of non-spherical structures VEI, EVI, and VVI for the component matrices by the proposed test $T_n$.}  We employ a mixture of four components with  mixing proportions $(\alpha_1,\alpha_2,\alpha_3,\alpha_4)=(0.2,0.3,0.3,0.2)$.
The component covariance matrices are
$$
\Sigma_k=k I_p+k*diag(\underbrace{a,\ldots,a}_{[p/100]},\underbrace{0,\ldots,0}_{p-[p/100]}),\quad k=1,2,3,4,
$$
where the parameter $a\in[0,4.5]$.
When $a>0$, the covariance structure is VEI and there are only 4 entries different from the spherical basis $kI_p$ for the studied dimension $p=400$. 
Results are exhibited in Table \ref{table:c2}.
It shows that when $\Sigma_k=kI_p$ ($a=0$), the empirical size of
$T_n$ can be well controlled and close to $\alpha$.  Also when
  $1\le a\le 3$  BIC and ICL wrongly choose the spherical model with
probabilities near 1. 
This confirms a widely reported behaviour of such
  information-based criteria in high-dimensional clustering, namely as
  the dimension of the models increase very quickly with the data dimension
  (see Table~\ref{tab:mnm}), these criteria heavily drive to
  {\em over-simplistic models} such as a spherical structure.
As $\Sigma_k$ drifts away from $kI_p$, the
probability of $T_n$ rejecting the VII structure grows to 1. Compared
with BIC and ICL, 
except one case $(a=1,\alpha=0.001)$ where Type I error is kept
  extremely low, 
$T_n$ has overwhelming superiority in capturing small shifts of the covariance structure.

\begin{table}
\setlength\tabcolsep{6pt}
\begin{center}
	\caption{Probability of rejecting the VII structure using $T_n$, $BIC$, and $ICL$. The nominal significant level for $T_n$ is $\alpha=0.1,0.05,0.01,0.005,0.001$.}
	\medskip
	
	\begin{tabular}{cccccccccccccccccccc}
		\hline
		&$\alpha=0.1$&$\alpha=0.05$&$\alpha=0.01$&$\alpha=0.005$&$\alpha=0.001$&BIC&ICL\\
		$a=0  $&0.0996& 0.0507& 0.0108& 0.0063& 0.0003&0.0006&0.0006\\
		$a=1.0$&0.2408& 0.1457& 0.0431& 0.0254& 0.0012&0.0078&0.0078\\
		$a=2.0$&0.8196& 0.7184& 0.4874& 0.3940& 0.0941&0.0182&0.0182\\
		$a=3.0$&0.9990& 0.9965& 0.9872& 0.9786& 0.8596&0.0257&0.0257\\
		$a=4.0$&1     & 1     & 1     & 1     & 0.9995&0.7872&0.7876\\
		$a=4.5$&1     & 1     & 1     & 1     & 1      &0.9794&0.9794\\
		\hline
	\end{tabular}
	
	\label{table:c2}
\end{center}
\end{table}

\section{An empirical study}
\label{sec:data}

In this section, we analyze a classic microarray data set for colon cancer \citep{A99}. The preprocessed data can be found in the R package ``rda". There are 40 tumor and 22 normal colon tissue samples. The dimension of each observation is $p=2000$. Here we model these data as:
$$
\text{Tumor tissue:}~\x_1=\mu_1+w_1\bbT_p\z,\quad \text{Normal tissue:}~\x_2=\mu_2+w_2\bbT_p\z.
$$
Our first interest is to examine whether the shape matrix $\bbT_p$ is spherical. To this end, 
the unknown mean vector in each group is eliminated by subtracting their sample mean. The centralized data 
are denoted by $\y_1,\ldots,\y_n$, $n=62$, and their covariance matrix is calculated as the unbiased one $B_n^*=\sum_{i=1}^n\y_i\y_i'/(n-2)$. It turns out that the $p$-values of $T_J^c$ and $T_n$ are both smaller than $10^{-100}$. Besides, BIC and ICL also support that $\bbT_p$ is not spherical.

Next we consider principal submatrices of $\bbT_p$ and check whether
some of these  submatrices can be considered  spherical. 
Applying BIC clustering to all diagonal elements of \gai{$B_n^*$}, they are
then grouped into 6 clusters. Based on this information, we get 6
principal  submatrices of $\bbT_p$ and their corresponding sample
fragments. These matrices are denoted by $\bbT_{p_1},\ldots,
\bbT_{p_6}$, and their dimensions are $p_1=308$, $p_2=444$, $p_3=343$,
$p_4=674$, $p_5=203$, and $p_6=28$  ($\sum_{p_i}=p=2000$).
Results on identifying the structure of these matrices are presented
in Table \ref{table:r1}. It shows that  BIC and ICL suggest spherical
structure for $\bbT_{p_1},\ldots, \bbT_{p_5}$, and diagonal structure
for $\bbT_{p_6}$. On the contrary, tests of $T_J^c$ and $T_n$ reject
the spherical structure for all these submatrices with $p$-values near
0.
Given the over-simplistic nature of BIC and ICl discussed in
  Section~\ref{sec:clust}, the submatrices $\bbT_{p_1},\ldots,
  \bbT_{p_5}$ are more likely non-spherical as predicted by the test
  statistics $T_J^c$ and $T_n$.

\begin{table}[tbp]
	\setlength\tabcolsep{6pt}
	\begin{center}
		\caption{Structure identification for 6 principle submatrices.}
		\medskip
		
		\begin{tabular}{cccccccccccccccccccc}
			\hline
      Submatrices   &$\bbT_{p_1}$&$\bbT_{p_2}$&$\bbT_{p_3}$&$\bbT_{p_4}$&$\bbT_{p_5}$&$\bbT_{p_6}$\\
            BIC     &VII   &VII   &VII   &VII   &VII   &VVI\\
            ICL     &VII   &VII   &VII   &VII   &VII   &VVI\\
Standardized $T_J^c$&655.9&881.7&656.9&1030.7&345.8&48.0\\
Standardized $T_n$  &364.5&561.6&404.8&639.6 &208.3&24.7\\
			\hline
		\end{tabular}
		
		\label{table:r1}
	\end{center}
\end{table}

\normalcolor
\section{Proofs}
\label{sec:proofs}

\subsection{Proof of Theorem \ref{lsd}} \label{ssec:lsd}

The sample covariance matrix can be represented as
\begin{equation}\label{bn}
B_n=\frac{1}{n}\sum_{i=1}^n\x_i\x_i'=\frac{1}{n}\sum_{i=1}^nw_i^2\bbT_p\z_i\z_i'\bbT_p:=\frac{1}{n}\bbT_pZ_n\Sigma_GZ_n'\bbT_p,
\end{equation}
where $Z_n=(\z_1,\ldots,\z_n)$ and $\Sigma_G=diag(w_1^2,\ldots, w_n^2)$. From Assumption (c), the spectral distribution  of $\Sigma_G$ is 
\begin{equation}\label{fgn}
F^{\Sigma_G}(t)=\frac{1}{n}\sum_{i=1}^n\delta_{w_i^2}(t)\rightarrow  G(t),
\end{equation}
where the convergence holds almost surely, as $n\rightarrow\infty.$
>From  Theorem 4.1.1 in \cite{Z06} and using the independence between
$\Sigma_G$ and $Z_n$, we obtain the result of the theorem.

{\ggai 
\subsection{Proof of Theorem \ref{clt1}}\label{ssec:clt1}

When $\bbT_p$ is an identity matrix, the covariance matrix $B_n$ in \eqref{bn} reduce to $B_n=Z_n\Sigma_GZ_n'/n$.
Let $\w=(w_i)$ be the sequence of the mixing variables. Given $\w$,
the matrix $\Sigma_G$ becomes non-random and the convergence in \eqref{fgn} still holds. As $\Sigma_G$ is diagonal, when $\w$ is fixed, the assumptions of Theorem 1.4 in \cite{PZ08} holds automatically. Applying this theorem, we get the CLT of the LSS conditioning on $\w$.
As this limiting distribution is free of $\w$, Theorem \ref{clt1} is thus verified unconditionally.
}

\subsection{A key lemma} \label{ssec:lem}

  The lemma below on asymptotic fluctuations of some related Stieltjes
  transforms form the core basis for the proof of Theorem~\ref{clt2}
  In Section~\ref{ssec:Theorem3}.

Let $m_{F^{c_n, G_n}}(z)$ and $m_{F^{c_n, G}}(z)$ be the Stieltjes transforms of the LSDs $F^{c_n, G_n}$ and $F^{c_n,  G}$, respectively. Define the random process
$$
M_n(z)=\sqrt{n}\left[m_{F^{c_n, G_n}}(z)-m_{F^{c_n, G}}(z)\right],\quad z\in\mathcal C, 
$$
where the contour $\mathcal C$ is
\begin{equation}\label{contour}
\mathcal C=\{x\pm {\rm i}v_0: x\in[x_l,x_r]\}\cup\{x\pm {\rm i}v: x\in \{x_l,x_r\}, v\in[0,v_0]\},
\end{equation}
with real numbers $v_0 > 0$, $x_r>b(1+1/\sqrt{c})^2$, and $x_l<aI_{(0,1)}(1/c)(1-1/\sqrt{c})^2$.

\begin{lemma}\label{mn-clt}
Under Assumptions (a)-(d), the random process $M_n(\cdot)$ converges weakly to a two-dimensional mean-zero Gaussian process $M(\cdot)$ on $\mathcal C$, whose covariance function is given by 
\begin{eqnarray*}
{\rm Cov}\left(M(z_1),M(z_2)\right)
&=&m'(z_1)m'(z_2)\bigg(\frac{z_1-z_2}{c(m(z_2)-m(z_1))}+\frac{1}{cm(z_1)m(z_2)}\nonumber\\
&&-\frac{(1+z_1m(z_1))(1+z_2m(z_2))}{m(z_1)m(z_2)}\bigg),
\end{eqnarray*}
where $m(z)$ is the Stieltjes transform of the LSD $F^{c, G}$.
\end{lemma}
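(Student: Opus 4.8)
The plan is to exploit a structural feature that distinguishes Lemma~\ref{mn-clt} from a genuine random-matrix CLT: the measures $F^{c_n,G_n}$ and $F^{c_n,G}$ are both produced purely by feeding parameters into the deterministic self-consistent equation \eqref{mp3}, so the only randomness in $M_n(\cdot)$ enters through the empirical measure $G_n$ of the \emph{i.i.d.} scalars $w_1^2,\dots,w_n^2$. Thus I would run a functional delta-method argument: first linearize the map $G_n\mapsto m_{F^{c_n,G_n}}(\cdot)$ around $G$; then apply the ordinary (Donsker/Lindeberg) CLT to the empirical process $\sqrt n(G_n-G)$; and finally reduce the resulting covariance to the stated closed form using \eqref{mp3} and its $z$-derivative.

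For the linearization, write $\hat m_n(z)=m_{F^{c_n,G_n}}(z)$ and $\bar m_n(z)=m_{F^{c_n,G}}(z)$, which by \eqref{mp3} satisfy $z=-\hat m_n^{-1}+\int t(1+c_nt\hat m_n)^{-1}dG_n(t)$ and $z=-\bar m_n^{-1}+\int t(1+c_nt\bar m_n)^{-1}dG(t)$. Subtracting, after adding and subtracting $\int t(1+c_nt\hat m_n)^{-1}dG(t)$ and collecting terms, gives the exact identity
\[
(\hat m_n-\bar m_n)\Bigl[-\frac{1}{\hat m_n\bar m_n}+c_n\!\int\!\frac{t^2\,dG(t)}{(1+c_nt\hat m_n)(1+c_nt\bar m_n)}\Bigr]=\int\!\frac{t}{1+c_nt\hat m_n}\,d(G_n-G)(t).
\]
By Theorem~\ref{lsd} and continuity of the solution of \eqref{mp3} in its parameters, $\hat m_n\to m$ and $\bar m_n\to m$ a.s.\ uniformly on $\mathcal C$, with $m=m_{F^{c,G}}$; differentiating \eqref{mp3} shows $1/m'(z)=m(z)^{-2}-c\int t^2(1+ctm(z))^{-2}dG(t)$, so the bracket tends to $-1/m'(z)$, which is finite and nonzero on $\mathcal C$ because $\mathcal C$ avoids the support of $F^{c,G}$. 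Since $\int t^2[(1+c_nt\hat m_n)(1+ctm)]^{-1}d(G_n-G)(t)=O_p(n^{-1/2})$ (a centered empirical integral of a bounded function) and the scalar $cm(z)-c_n\hat m_n(z)\to0$, I can replace $\tfrac{t}{1+c_nt\hat m_n}$ by $g_z(t):=\tfrac{t}{1+ctm(z)}$ on the right-hand side at the cost of only $o_p(n^{-1/2})$. This yields, uniformly on $\mathcal C$, $M_n(z)=-m'(z)\sqrt n\int g_z(t)\,d(G_n-G)(t)+o_p(1)$.

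Next, since $\sqrt n\int g_z\,d(G_n-G)=n^{-1/2}\sum_{i=1}^n\{g_z(w_i^2)-\E g_z(w_1^2)\}$ is a normalized sum of \emph{bounded} i.i.d.\ terms (by Assumption~(d), $w_i^2\in[a,b]$, and $|1+ctm(z)|$ is bounded below on $\mathcal C$), the classical multivariate CLT gives convergence of the finite-dimensional distributions of $M_n$ to those of a centered Gaussian field with covariance $m'(z_1)m'(z_2)\,\Cov(g_{z_1}(w^2),g_{z_2}(w^2))$; I would then establish tightness on $\mathcal C$ from the estimates $\sup_{z\in\mathcal C}\E|M_n(z)|^2<\infty$ and $\E|M_n(z_1)-M_n(z_2)|^2\le C|z_1-z_2|^2$, both immediate from the linearization and boundedness of $g_z$ and $\partial_z g_z$ (or, equivalently, invoke Donsker for $\sqrt n(G_n-G)$ and continuity of $h\mapsto-m'(\cdot)\int g_{(\cdot)}\,dh$). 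Finally I would convert the covariance to the stated form: \eqref{mp3} gives $\E g_z(w^2)=z+m(z)^{-1}$, which produces the $-(1+z_1m(z_1))(1+z_2m(z_2))/(m(z_1)m(z_2))$ term, while a partial-fraction decomposition of $t^2/[(1+ctm_1)(1+ctm_2)]$ (with $m_i:=m(z_i)$) combined with the identity $\int(1+ctm(z))^{-1}dG(t)=1-c-czm(z)$ (read off from \eqref{mp3} by writing $(1+ctm)^{-1}=1-ctm(1+ctm)^{-1}$) gives $\E[g_{z_1}(w^2)g_{z_2}(w^2)]=\frac1{cm_1m_2}+\frac{z_1-z_2}{c(m_2-m_1)}$; summing the three pieces reproduces the claimed covariance function. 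I expect the main obstacle to be the uniform-in-$z$ control in the linearization step (showing the remainder is $o_p(n^{-1/2})$ uniformly on the contour) together with the accompanying tightness estimate; the finite-dimensional CLT is elementary because the summands are bounded, and the final algebraic identification, while a bit tedious, is routine.
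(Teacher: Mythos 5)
Your proposal follows essentially the same route as the paper's proof: subtract the two self-consistent equations to obtain an exact identity for $\hat m_n-\bar m_n$, reduce $M_n(z)$ to $-m'(z)$ times the centered empirical integral $\sqrt n\int g_z(t)\,d(G_n-G)(t)$ with $g_z(t)=t/(1+ctm(z))$, prove finite-dimensional convergence by the classical CLT for bounded i.i.d.\ summands, prove tightness through a $|z_1-z_2|^2$ second-moment increment bound (which is precisely Billingsley's condition (12.51), the criterion the paper invokes), and finally evaluate $\Cov(g_{z_1}(w^2),g_{z_2}(w^2))$ via the identity $\E g_z(w^2)=z+1/m(z)$ and the telescoping identity $g_{z_1}g_{z_2}=\bigl(g_{z_2}-g_{z_1}\bigr)/\bigl(c(m_1-m_2)\bigr)$. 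Your covariance algebra is correct and reproduces the stated formula.

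There is, however, one genuine gap in the linearization step, and it stems from which term you chose to add and subtract. You subtract $\int t(1+c_nt\hat m_n)^{-1}dG(t)$, which places the \emph{random} quantity $\hat m_n=m_{F^{c_n,G_n}}$ in the integrand of the remainder $R_n=\int t(1+c_nt\hat m_n)^{-1}d(G_n-G)(t)$. Your justification that the correction $\int t^2[(1+c_nt\hat m_n)(1+ctm)]^{-1}d(G_n-G)(t)$ is ``a centered empirical integral of a bounded function,'' hence $O_p(n^{-1/2})$, does not apply: the integrand is not a fixed function but depends on the sample through $\hat m_n$, so one only has the trivial bound $|\int h_n\,d(G_n-G)|\le 2\sup_t|h_n(t)|$, which degrades to $o_p(1)$ rather than $O_p(n^{-1/2})$. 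Tracking it further, the replacement error involves $(cm-c_n\hat m_n)^2$, and bounding this by $o_p(n^{-1/2})$ either requires a rate on $c_n-c$ (which Assumption (a) does not grant) or a bootstrapping argument for $|\hat m_n-\bar m_n|$ that you do not supply. The paper sidesteps this entirely by instead adding and subtracting $\int t(1+c_ntm_n)^{-1}dG_n(t)$, where $m_n=m_{F^{c_n,G}}$ is \emph{deterministic}. The resulting identity is $(\hat m_n-\bar m_n)\,\beta_n^{-1}(z)=\int t(1+c_nt m_n)^{-1}d(G_n-G)(t)$ with a deterministic integrand, so the remainder after replacing $t/(1+c_ntm_n)$ by $g_z(t)$ is literally $n^{-1/2}\sum_i[h_n(w_i^2)-\E h_n(w^2)]$ with deterministic $h_n\to0$ uniformly, whose variance vanishes regardless of the rate at which $c_n\to c$. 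I would revise your identity to keep $\bar m_n$ rather than $\hat m_n$ in the key integrand; the rest of your argument then goes through exactly as written.
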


The proof of this Lemma is lengthy and technical. It is relegated to
the supplementary file (Appendix D).

\subsection{Proof of Theorem \ref{clt2}}\label{ssec:Theorem3}
For all $n$ large, with probability one,
$$
S_{F^{c_n, G_n}}\cup S_{F^{c_n, G}}\subset \left[aI_{(0,1)}(1/c)(1-\sqrt{1/c})^2,b(1+\sqrt{1/c})^2\right].
$$
Therefore, for any $f\in \{f_1,\ldots, f_k\}$, with probability one,
$$
\int f(x)d\F_{n2}(x)=-\frac{1}{2\pi\rm i}\oint_{\mathcal C} f(z)M_n(z)dz,
$$
for all $n$ large, where the contour $\mathcal C$ is defined in \eqref{contour} and takes the positive direction in the complex plane. 
>From Lemma \ref{mn-clt} and the arguments on Page 563 of \cite{BS04}, the random vector \eqref{lss} converges weakly to 
$$
\left(-\frac{1}{2\pi\rm i}\oint_{\mathcal C} f_1(z)M(z)dz,\ldots, -\frac{1}{2\pi\rm i}\oint_{\mathcal C} f_k(z)M(z)dz\right),
$$ 
which is a zero-mean Gaussian vector whose covariance function is
\begin{eqnarray*}
&&{\rm Cov}\left(-\frac{1}{2\pi\rm i}\oint_{\mathcal C} f(z)M(z)dz, -\frac{1}{2\pi\rm i}\oint_{\mathcal C} g(z)M(z)dz\right)\\
&=&\frac{-1}{4\pi^2}\oint_{\mathcal C_1}\oint_{\mathcal C_2} f(z_1)g(z_2)\Cov(M(z_1), M(z_2))dz_1dz_2\\
&=&\frac{-1}{4\pi^2}\oint_{\mathcal C_1}\oint_{\mathcal C_2} \frac{f(z_1)g(z_2) m'(z_1)m'(z_2)(z_1-z_2)}{c(m(z_2)-m(z_1))}dz_1dz_2\\
&&-\frac{1}{4\pi^2}\oint_{\mathcal C_1}\oint_{\mathcal C_2}\frac{ f(z_1)g(z_2) m'(z_1)m'(z_2)}{cm(z_1)m(z_2)}dz_1dz_2\\
&&+\frac{1}{4\pi^2}\oint_{\mathcal C_1}\oint_{\mathcal C_2}\frac{ f(z_1)g(z_2) m'(z_1)m'(z_2)(1+z_1m(z_1))(1+z_2m(z_2))}{m(z_1)m(z_2)}dz_1dz_2,
\end{eqnarray*}
where $f,g\in\{f_1,\ldots,f_k\}$ and $\mathcal C_1$, $\mathcal C_2$ are two non-overlapping contours having the same properties as $\mathcal C$.

\subsection{Proof of Proposition \ref{clt-moments}}
\label{ssec:clt-moments} 

Without loss of generality, let $\mathcal C$ be a contour as defined in \eqref{contour}, taking positive direction and satisfying 
$$\max_{t\in S_{ G}, z\in \mathcal C}|ctm(z)|<1,$$ 
where $S_{ G}$ is the support of $ G$. This can be easily done by choosing $z$ with large modulus, since $m(z)\rightarrow0$ as $|z|\rightarrow\infty$.

Denote the image of $\mathcal C$ under $m(z)$ by 
$$m(\mathcal C)=\{m(z):z\in \mathcal C\}.$$
Since $m(z)$ is a univalent analytic function on $\mathbb C\setminus (S_F\cup\{0\})$, the contour $\mathcal C$ and its image $m(\mathcal C)$ are homeomorphic, which implies $m(\mathcal C)$ is also a simple and closed contour. In addition, from the open mapping theorem and the fact $m(z)\rightarrow0$ as $|z|\rightarrow\infty$, we conclude that $m(\mathcal C)$ has negative direction and encloses zero. 

Let $P(m)=zm$ where $z=z(m)$ is a function of $m$ defined by the equation \eqref{mp3},
then $P(m)$ has Taylor expansion on $m(\mathcal C)$,
\begin{eqnarray*}
P(m)=-1+\int\frac{tm}{1+ctm}d G(t)=-1-\frac{1}{c}\sum_{k=1}^\infty   \gamma_k(-cm)^k,\label{pf}
\end{eqnarray*}
where $\gamma_k=\int t^kd G(t)$ is the $k$th moment of $ G$. Moreover, the quantities $u_{s,t}$ defined in the theorem is the coefficient of $m^t$ in the Taylor expansion of $P^s(m)$. 

Let $\mathcal C_1$ and $\mathcal C_2$ be two non-overlapping contours having the same properties as $\mathcal C$ defined above. 
>From Theorem \ref{clt}, we need to calculate the following three integrals:
\begin{eqnarray*}
I_1&=&-\frac{1}{4\pi^2}\oint_{\mathcal C_1}\oint_{\mathcal C_2}\frac{z_1^{i}z_2^{j}(z_1-z_2)m'(z_1)m'(z_2)}{c(m(z_2)-m(z_1))}dz_1dz_2,\\
I_2&=&-\frac{1}{4\pi^2}\oint_{\mathcal C_1}\oint_{\mathcal C_2}\frac{ z_1^iz_2^j m'(z_1)m'(z_2)}{cm(z_1)m(z_2)}dz_1dz_2,\\
I_3&=&\frac{1}{4\pi^2}\oint_{\mathcal C_1}\oint_{\mathcal C_2}\frac{ f(z_1)g(z_2) m'(z_1)m'(z_2)(1+z_1m(z_1))(1+z_2m(z_2))}{m(z_1)m(z_2)}dz_1dz_2.\\
\end{eqnarray*}
Notice that 
\begin{eqnarray*}
&&\frac{1}{4\pi^2}\oint_{\mathcal C_1}\oint_{\mathcal C_2}\frac{z_1^{i}z_2^{j}}{c(m(z_1)-m(z_2))}dm(z_1)dm(z_2)\\
&=&\frac{1}{4c\pi^2}\oint_{m(\mathcal C_2)}\oint_{m(\mathcal C_1)}\frac{P^{i}(m_1)P^{j}(m_2)}{m_1^{i}m_2^{j}(m_1-m_2)}dm_1dm_2\\
&=&\frac{1}{4c\pi^2}\oint_{m(\mathcal C_2)}\frac{P^{j}(m_2)}{m_2^{j}}\left(\oint_{m(\mathcal C_1)}\frac{P^{i}(m_1)}{m_1^{i}(m_1-m_2)}dm_1\right)dm_2\\
&=&-\frac{1}{2c\pi\rm i}\oint_{m(\mathcal C_2)}\frac{P^{j}(m_2)}{m_2^{j}} \sum_{l=0}^{i-1}\frac{u_{i,l}}{m_2^{i-l}}dm_2\\
&=&\frac{1}{c}\sum_{l=0}^{i-1}u_{i,l}u_{j,i+j-l-1}.
\end{eqnarray*}
Therefore, 
\begin{eqnarray*}
I_1&=&\frac{1}{c}\sum_{l=0}^iu_{i+1,l}u_{j,i+j-l}-\frac{1}{c}\sum_{l=0}^{i-1}u_{i,l}u_{j+1,i+j-l},\\
I_2&=&-\frac{1}{4c\pi^2}\oint_{m(\mathcal C_1)}\frac{ z_1^i}{m_1}dm_1\oint_{m(\mathcal C_2)}\frac{ z_2^j}{m_2}dm_2=u_{i,i}u_{j,j}/c,\\
I_3&=&\frac{1}{4\pi^2}\oint_{m(\mathcal C_1)}\frac{ P^i(m_1)(1+P(m_1))}{m_1^{i+1}}dm_1\oint_{m(\mathcal C_2)}\frac{ P^j(m_2)(1+P(m_2))}{m_2^{j+1}}dm_2\\
&=&-(u_{i,i}+u_{i+1,i})(u_{j,j}+u_{j+1,j}).
\end{eqnarray*}

\subsection{Proof of Theorem \ref{test-tn}}
{\ggai 
From the fact that $\check\gamma_{n2}\xrightarrow{a.s} \gamma_2$ under $H_0$, the first conclusion of the theorem holds if  $nT_n\xrightarrow{D}N(0,8 \gamma_2^2)$. 
We prove this convergence by the Martingale CLT.
Let $\w=(w_i)$ be the sequence of the mixing variables. We first condition on this sequence and show that the limiting results are independent of the conditioning $\w$, thus establish their validity unconditionally. 

Let $\mathbb F_0=\{\emptyset, \Omega\}$, $\mathbb F_k=\sigma\{\x_1,\ldots,\x_k\}$ the $\sigma$-field generated by $\{\x_1,\ldots,\x_k\}$, and $\E_k(\cdot)$ denote the conditional expectation with respect to $\mathbb F_k$, $k=1,\ldots,n$.
By martingale decomposition, 
\begin{eqnarray*}
	nT_n&=&n\sum_{k=1}^n(\E_k-\E_{k-1})(\hat\beta_{n2}-\check\beta_{n2})\\&=&\frac{2}{np}\sum_{k=2}^n\left[\left(\x_k'S_{k-1}\x_k-w_k^2\tr S_{k-1}\right)-\left(\check\x_k'\check S_{k-1}\check\x_k-w_k^2\tr \check S_{k-1}\right)\right]\\
	&:=&\sum_{k=2}^n (D_{nk}-\check D_{nk}),
\end{eqnarray*}
where $S_{k-1}=\sum_{i=1}^{k-1}(\x_i\x_i'-w_i^2I_p)$ and $\check S_{k-1}=\sum_{i=1}^{k-1}(\check\x_i\check\x_i'-w_i^2I_p)$.
It's clear that  $\{D_{nk}-\check D_{nk}, 1\leq k\leq n\}$ is a sequence of martingale difference with respect to $\{\mathbb F_k, 1\leq k\leq n\}$.  From the martingale CLT, say Theorem 35.12 in \cite{B95}, if
$$
\sum_{k=2}^n\E_{k-1}(D_{nk}-\check D_{nk})^2\xrightarrow{i.p.} \sigma^2 \quad \text{and}\quad \sum_{k=2}^n\E\left(D_{nk}-\check D_{nk}\right)^4\rightarrow0,
$$
then $nT_n$ converges in distribution to a normal variable $N(0, \sigma^2)$. 
Notice that $D_{nk}$ and $\check D_{nk}$ are identically distributed, we verify the above conditions by showing that 
\begin{equation}\label{mar:cons}
\sum_{k=2}^n\E_{k-1}D_{nk}^2\xrightarrow{i.p.}4 \gamma_2^2, \quad\sum_{k=2}^n\E_{k-1}D_{nk}\check D_{nk}\xrightarrow{i.p.}0, \quad \text{and}\quad \sum_{k=2}^n\E D_{nk}^4\rightarrow0,
\end{equation}
and hence $\sigma^2=8 \gamma_2^2$.
We note that the proof of the first two terms are similar, so we present only the details for the first one.

>From the expression of $D_{nk}$, we have
\begin{eqnarray*}
	\sum_{k=2}^n\E_{k-1}D_{nk}^2
	&=&\frac{4}{n^2p^2}\sum_{k=2}^n\E_{k-1}\left(\x_k'S_{k-1}\x_k-w_k^2\tr S_{k-1}\right)^2\\
	&=&\frac{4}{n^2p^2}\sum_{k=2}^nw_k^4\left(2\tr S_{k-1}^2+\Delta\tr\left(S_{k-1}\circ S_{k-1}\right)\label{mn1}\right)\\
	&=&\frac{4(2+\Delta)}{n^2p^2}\sum_{k=2}^nw_k^4\sum_{u=1}^p\left[\sum_{i=1}^{k-1}(x_{iu}^2-w_i^2)\right]^2+\frac{8}{n^2p^2}\sum_{k=2}^nw_k^4\sum_{u\neq v}^p\left[\sum_{i=1}^{k-1}x_{iu}x_{iv}\right]^2\nonumber\\
	&:=&M_{n1}+M_{n2},\nonumber
\end{eqnarray*}
where $\circ$ denotes the Hadamard product.
Elementary calculations show that
\begin{eqnarray*}
	\E M_{n1}&=&\frac{4(2+\Delta)}{n^2p^2}\sum_{k=2}^nw_k^4\sum_{u=1}^p\sum_{i=1}^{k-1}\E(x_{iu}^2-w_i^2)^2\to0,\\
	\E M_{n2}&=&\frac{8}{n^2p^2}\sum_{k=2}^nw_k^4\sum_{u\neq v}^p\sum_{i=1}^{k-1}\E x_{iu}^2x_{iv}^2=\frac{4p(p-1)}{n^2p^2}\left[\left(\sum_{k=1}^nw_k^4\right)^2-\sum_{k=1}^nw_k^8\right]\to 4 \gamma_2^2.
\end{eqnarray*}

We next deal with  the variances of $M_{n1}$ and $M_{n2}$. Notice that 
\begin{eqnarray*}
	M_{n1}
	&=&\frac{4(2+\Delta)}{n^2p^2}\sum_{k=2}^nw_k^4\sum_{u=1}^p\left(\sum_{i=1}^{k-1}(x_{iu}^2-w_i^2)^2+2\sum_{i< j}^{k-1}(x_{iu}^2-w_i^2)( x_{ju}^2-w_j^2)\right)\\
	&=&\frac{4(2+\Delta)}{n^2p^2}\left(\sum_{i=1}^{n-1}\sum_{k=i+1}^nw_k^4\sum_{u=1}^p(x_{iu}^2-w_i^2)^2+2\sum_{i< j}^{n-1}\sum_{k=j+1}^nw_k^4\sum_{u=1}^p(x_{iu}^2-w_i^2)( x_{ju}^2-w_j^2)\right),
\end{eqnarray*}
we have
\begin{eqnarray*} 
	\Var(M_{n1})
	&\leq&\frac{32(2+\Delta)^2}{n^4p^4}\left[\sum_{i=1}^{n-1}\left(\sum_{k=i+1}^nw_k^4\right)^2\sum_{u=1}^p\Var(x_{iu}^2-w_i^2)^2\right.\\
	&&\left.+4\sum_{i< j}^{n-1}\left(\sum_{k=j+1}^nw_k^4\right)^2\sum_{u=1}^p\E(x_{iu}^2-w_i^2)^2(x_{ju}^2-w_j^2)^2\right]\\
	&=&O(n^{-4}).
\end{eqnarray*}
Similar discussions on $M_{n2}$ reveal its variance is $O(n^{-2})$. Thus we get $\Var(M_{n1}+M_{n2})\rightarrow0$ and
the first condition in \eqref{mar:cons} is verified.

For the third condition in \eqref{mar:cons}, we have
\begin{eqnarray*}
	\sum_{k=2}^n\E D_{nk}^4&=&\frac{16}{n^4p^4}\sum_{k=2}^n\E\left(\x_k'S_{k-1}\x_k-w_k^2\tr S_{k-1}\right)^4\\
	&\leq &\frac{16K}{n^4p^4}\sum_{k=2}^nw_k^8\E\tr^2 \left(S^2_{k-1}\right)
	=O(n^{-1}),
\end{eqnarray*}
where the inequality is from the fact $\E(\x_k'A\x_k-w_k^2\tr(A))^4\leq w_k^8K\tr^2(A^2)$ with $K$ a constant for any non-random positive definite matrix $A$ and the final order is from elementary calculations.

Next we consider the consistency of the test. 
>From Theorem \ref{lsd}, $\hat\gamma_{n2}\xrightarrow{a.s.}\gamma_2+\gamma_1^2(\tilde\gamma_2-1)/c$. Thus, for all $n$ large,  almost surely, there is a constant $K_1$ such that $\hat\gamma_{n2}<K_1$.  
Under the alternative hypothesis, letting $\w=(w_i)$, 
\begin{eqnarray*}
	\E(T_n|\w)&\leq&\frac{K_2}{n^2p}\sum_{i\neq j}w_i^2w_j^2,\\
	\Var(T_n|\w)
	&\leq &\frac{2}{n^4p^2}\Var\left(\sum_{i\neq j}(\x_i'\x_j)^2\bigg|\w\right)+\frac{2}{n^4p^2}\Var\left(\sum_{i\neq j}(\check \x_i'\check\x_j)^2\bigg|\w\right)=O(n^{-1}),
\end{eqnarray*}
where $K_2$ is a constant and the order of $\Var(T_n|\w)$ is from  Theorem 2.2 in \cite{Srivastavaetal.(2011)}. Then, by $\Var(T_n)=\E(\Var(T_n|\w))+\Var(\E(T_n|\w))$, we get $\Var(T_n)\rightarrow 0$, which is followed by $T_n-\delta_n\xrightarrow{i.p.}0$. 
Let  $z_\alpha$ be the $(1-\alpha)$-quantile of $N(0,1)$, where $\alpha\in(0,1)$.
Finally,
\begin{eqnarray*}
	P(nT_n>\sqrt{8}\hat\gamma_2z_\alpha)
	&\geq& P(n(T_n-\delta_n/2)+n\delta_n/2>\sqrt{8}\hat\gamma_2z_\alpha, T_n>\delta_n/2,\hat\gamma_2<K_1)\\
	&\geq&P(n\delta_n/2>\sqrt{8}K_1z_\alpha, T_n>\delta_n/2) \qquad
    \longrightarrow ~~1~~,
\end{eqnarray*}
which completes the proof.
}

\section*{Acknowledgement} We are grateful to  Steve Marron
for  his suggestion of 
investigating the tricky universe of high-dimensional mixtures, \gai{and to Charles Bouveyron for discussions on the model-based cluster
analysis reported in Section~\ref{sec:clust}}.

\section*{Reference}

\pagebreak\appendix 
\begin{center}
  {\large\bf  On-line supplementary material}
\end{center}
\vskip 1cm 

\section{Estimating a high-dimensional spherical mixture}
\label{sec:est}
\subsection{Estimation of a PMD}

Consider a scale mixture population with a spherical covariance
matrix, that is with $H=\delta_1$. As for the PMD $ G$,
we consider a class of discrete distributions  with finite support on $\mathbb R^+$,
\begin{equation*}
 G({\theta})=\alpha_1\delta_{\sigma^2_1}+\cdots+ \alpha_m\delta_{\sigma^2_m},\quad {\theta}\in{\Theta},
\end{equation*} 
where the order $m$ is assumed known and the parameter space $\Theta$ is
\begin{eqnarray*}
{\Theta} =\bigg\{{\theta}=(\sigma^2_1, \ldots, \sigma^2_m,\alpha_1, \ldots, \alpha_{m}): 0< \sigma^2_1<\cdots <\sigma^2_m<\infty,\ \alpha_i>0,\ \sum_{i=1}^{m}\alpha_i=1\bigg\}.
\end{eqnarray*}

The aim is to find a consistent estimator for the vector parameter $\theta$.
From \cite{BCY10} and \cite{LY14}, the parameter $\theta$ of $ G$ is uniquely determined by its moments $(\gamma_j)$, $\gamma_j=\int t^jdG(t)$, that is, the map from $\theta$ to $ \gamma_0, \gamma_1,\ldots,  \gamma_{2m-1}$,
\begin{equation*}\label{g1}
g_1: \theta\to ( \gamma_0, \gamma_1,\ldots, \gamma_{2m-1})'
\end{equation*}
is a bijection.  Moreover, the recursion  formulae in [15] (here and
below, [$x$] with  brackets  referes to Equation ($x$) of the
main paper)  shows that there is also a one to one map $g_2$ from $ \gamma_j$'s to $ \beta_j$'s,
\begin{equation*}\label{g2}
g_2: ( \gamma_0,\ldots, \gamma_{2m-1})'\rightarrow ( \beta_0,\ldots, \beta_{2m-1})'.
\end{equation*}

From the convergence of $\hat\beta_{nk}$ to $\beta_k$ and the maps $g_1$ and $g_2$ defined above,
we propose a moment estimator $\hat\theta_n$ of $\theta$, which is defined to be 
$$
\hat\theta_n=(g_2\circ g_1)^{-1}(\hat \beta_{n0},\ldots,\hat \beta_{n,2m-1}).
$$
Note that this estimator exists for all $n$ large, and thus we immediately get the following convergence theorem.

\begin{theorem}\label{theta-hat}
In addition to Assumptions (a)-(c), suppose that the true value $\theta_0$ of $\theta$ is an inner point of $\Theta$. Then
$\hat\theta_n\rightarrow \theta_0$ almost surely as $n\rightarrow\infty.$
\end{theorem}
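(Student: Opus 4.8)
The plan is to deduce the theorem from the continuous mapping theorem, by combining the almost sure convergence of the empirical moments $\hat\beta_{nj}$ with the continuity of $(g_2\circ g_1)^{-1}$ at the true value. Write $G_0=G(\theta_0)$, and let $\gamma_j=\int t^j\,dG_0(t)$ and $\beta_j=\int x^j\,dF^{c,G_0}(x)$ be the associated moments, with $\gamma_0=\beta_0=1$. Since $\hat\beta_{n0}\equiv1$ and, by construction, $\hat\theta_n=g_1^{-1}\bigl(g_2^{-1}(\hat\beta_{n0},\ldots,\hat\beta_{n,2m-1})\bigr)$, it suffices to prove that $(\hat\beta_{n0},\ldots,\hat\beta_{n,2m-1})\to(\beta_0,\ldots,\beta_{2m-1})$ almost surely and that $g_1^{-1}\circ g_2^{-1}$ is continuous at this limiting point. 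The recursion defining $g_2$ involves the ratio $c_n$; as $c_n\to c$ and $g_2$ is polynomial, replacing $c_n$ by $c$ in the limit is harmless, and I suppress this point below.

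First I would establish $\hat\beta_{nj}\xrightarrow{a.s.}\beta_j$ for $0\le j\le 2m-1$. Because $G_0$ has finite support it is bounded, so Assumption (d) holds automatically and the LSD $F^{c,G_0}$ of Theorem~\ref{lsd} has compact support. Moreover, a spherical covariance matrix together with the normalisation $\tr(\bbT_p^2)=p$ forces $\bbT_p=\I_p$, so $B_n=Z_n\Sigma_G Z_n'/n$ with $\Sigma_G=\mathrm{diag}(w_1^2,\ldots,w_n^2)$ bounded; conditioning on the mixing variables and applying standard random matrix bounds for sample covariance matrices shows that the extreme eigenvalues of $B_n$ lie in a fixed compact set almost surely for all large $n$. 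Together with the almost sure weak convergence $F_n\Rightarrow F^{c,G_0}$ of Theorem~\ref{lsd} (and $c_n\to c$), this upgrades to $\hat\beta_{nj}=\int x^j\,dF_n(x)\xrightarrow{a.s.}\beta_j$ for every $j$, which is exactly the convergence already recorded in the main text after the definition of the moment statistics.

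Next I would check continuity of the two inverses. The map $g_2$ is polynomial and triangular: from the $\beta$--$\gamma$ recursion one has $\beta_j=c^{j-1}\gamma_j+(\text{a polynomial in }\gamma_1,\ldots,\gamma_{j-1})$, with nonvanishing diagonal coefficients $c^{j-1}$, so $g_2$ is a polynomial bijection with polynomial inverse, hence $g_2^{-1}$ is smooth. For $g_1$, recall from \cite{BCY10} and \cite{LY14} that a probability measure with exactly $m$ atoms is uniquely determined by its first $2m-1$ moments: the atoms are the roots of a polynomial whose coefficients solve a Hankel linear system in $\gamma_1,\ldots,\gamma_{2m-1}$, and the weights then solve a Vandermonde system. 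When $\theta_0$ is an interior point of $\Theta$, i.e.\ the atoms $\sigma_1^2<\cdots<\sigma_m^2$ are distinct and the weights are strictly positive, the relevant Hankel and Vandermonde determinants are nonzero, so by the inverse function theorem $g_1$ is a local diffeomorphism near $\theta_0$ and $g_1^{-1}$ is continuous at $(\gamma_0,\ldots,\gamma_{2m-1})$. Consequently $g_1^{-1}\circ g_2^{-1}$ is continuous at $(\beta_0,\ldots,\beta_{2m-1})$, and the continuous mapping theorem applied to the almost surely convergent sequence of empirical moments yields $\hat\theta_n\to\theta_0$ almost surely.

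The main obstacle is precisely the local invertibility of $g_1$ at $\theta_0$, which is where the hypothesis that $\theta_0$ is an interior point enters: on the boundary of $\Theta$ two atoms may coalesce or a weight may vanish, the effective number of atoms drops below $m$, and $g_1$ is no longer locally bijective, so it admits no continuous local inverse there and consistency may fail. A secondary, more routine issue is the passage from weak convergence of $F_n$ to convergence of the individual polynomial moments up to order $2m-1$, which needs the almost sure boundedness of the extreme eigenvalues of $B_n$ rather than Assumption (d) alone; this is standard but deserves an explicit statement.
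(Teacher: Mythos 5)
Your proposal is correct and follows essentially the same route as the paper, which does not give a detailed proof but simply remarks that the result follows from the a.s.\ convergence $\hat\beta_{nj}\to\beta_j$ together with the bijectivity of $g_1$ and $g_2$. You have usefully filled in what the paper leaves implicit — in particular, that the interior-point hypothesis is precisely what guarantees the nonvanishing Hankel/Vandermonde Jacobian and hence continuity of $g_1^{-1}$ at $\theta_0$, so that the continuous mapping theorem applies.
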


\subsection{Numerical results}

We undertake a simulation study to assess the performance of the proposed estimator $\hat\theta_n$ of a PMD.
Two models are studied:
\begin{itemize}
\item Model 1: $ G=0.8\delta_1+0.2\delta_2$ and $c=1$.
\item Model 2: $ G=0.3\delta_1+0.4\delta_4+0.3\delta_7$ and $c=1$.
\end{itemize}
Samples of $(z_{ij})$ are drawn from $N(0,1)$ and $\sqrt{4/6}\cdot t_6$ for Model 1, and from $N(0,1)$ and $U(-\sqrt{3},\sqrt{3})$ for Model 2. The dimensions are $(p,n)$=(300,300), (500,500), (800,800), and (1200,1200). 
Statistics of the estimators from $10000$ independent replications are
collected in Table \ref{table1-2} for Model 1 and Table
\ref{table3-4} for Model 2. The results show that, in
almost all cases, both the empirical biases and the standard
deviations of all estimators reduce along with a growing dimension, which clearly demonstrates the consistency of the proposed estimator.

\begin{table}
\begin{center}
\caption{Estimates for $(\sigma^2_1,\sigma^2_2,\alpha_1)=(1,2,0.8)$ in
  Model 1 with  $p=n= 300, 500, 800, 1200$. Upper panel: normal
  samples. Lower panel: $\sqrt{4/6}t_6$ samples.}
\medskip
{\small\begin{tabular}{ccccccccccccc}
\hline
        $\theta$        &   \multicolumn{2}{c}{$n=300$} &&  \multicolumn{2}{c}{$n=500$} &&\multicolumn{2}{c}{$n=800$} &&  \multicolumn{2}{c}{$n=1200$} \\
                             &   Mean &St. D. &&   Mean &St. D. &&  Mean &St. D. &&  Mean &St. D. \\
                      \cline{2-3} \cline{5-6}\cline{8-9}\cline{11-12}
                  $\sigma^2_1$        &0.9943&0.0270&&0.9969&0.0159&&0.9982&0.0099&&0.9989&0.0065\\
                  $\sigma^2_2$        &2.0142&0.1063&&2.0067&0.0634&&2.0042&0.0396&&2.0030&0.0262\\
                  $\alpha_1$            &0.7956&0.0461&&0.7978&0.0296&&0.7989&0.0203&&0.7994&0.0151\\       
\hline
                  $\sigma^2_1$        &0.9885&0.0276&&0.9931&0.0167&&0.9955&0.0106&&0.9971&0.0076\\
                  $\sigma^2_2$        &2.0690&0.1821&&2.0396&0.1019&&2.0237&0.0767&&2.0162&0.0492\\
                  $\alpha_1$            &0.8005&0.0438&&0.8005&0.0289&&0.8004&0.0206&&0.8005&0.0156\\       

\hline
\end{tabular}}
\label{table1-2}
\end{center}
\end{table}

\begin{table}
\begin{center}
\caption{Estimates for
  $(\sigma^2_1,\sigma^2_2,\sigma^2_3,\alpha_1,\alpha_2)=(1,4,7,0.3,0.4)$
  in
  Model 2 with  $p=n= 300, 500, 800, 1200$. Upper panel: normal
  samples. Lower panel: $U(-\sqrt{3},\sqrt{3})$ samples.}
{\small\begin{tabular}{ccccccccccccc}
\hline
             $\theta$   &   \multicolumn{2}{c}{$n=300$} &&  \multicolumn{2}{c}{$n=500$} &&\multicolumn{2}{c}{$n=800$} &&  \multicolumn{2}{c}{$n=1200$} \\
                             &   Mean &St. D. &&   Mean &St. D. &&  Mean &St. D. &&  Mean &St. D. \\
                      \cline{2-3} \cline{5-6}\cline{8-9}\cline{11-12}
                  $\sigma^2_1$        &0.9749&0.2152&&1.0042&0.0838&&1.0075&0.0486&&1.0071&0.0320\\
                  $\sigma^2_2$        &4.0467&0.5396&&4.0652&0.3298&&4.0526&0.2097&&4.0408&0.1408\\
                  $\sigma^2_3$        &7.1356&0.3386&&7.0827&0.2001&&7.0503&0.1255&&7.0341&0.0832\\
                  $\alpha_1$            &0.2964&0.0555&&0.3023&0.0330&&0.3030&0.0227&&0.3028&0.0169\\
                  $\alpha_2$           &0.4225&0.0450&&0.4121&0.0314&&0.4065&0.0228&&0.4042&0.0175\\            

\hline
                  $\sigma^2_1$        &0.9473&0.3065&&0.9937&0.0878&&1.0005&0.0504&&1.0027&0.0324\\
                  $\sigma^2_2$        &3.9626&0.5750&&4.0114&0.3419&&4.0152&0.2134&&4.0171&0.1436\\
                  $\sigma^2_3$        &7.0473&0.3288&&7.0252&0.1959&&7.0140&0.1218&&7.0100&0.0822\\
                  $\alpha_1$            &0.2896&0.0607&&0.2987&0.0344&&0.3007&0.0231&&0.3011&0.0172\\
                  $\alpha_2$            &0.4156&0.0445&&0.4061&0.0312&&0.4025&0.0227&&0.4013&0.0175\\            

\hline
\end{tabular}}
\label{table3-4}
\end{center}
\end{table}

\section{Numerical calculations of an LSD}\label{sec:supp}
General forms of the LSD defined in [9]   are quite complex, so we take its simplified version as an example, which is defined in [10]  with the PSD $H=\delta_1$.
Given a model $(c,  G)$, one may find the support $S_F$ of the LSD $F^{c, G}$ with the help of the function $u=u(x)$,
$$
u(x)  =  - \frac1 {x}  +  \int\!\frac{t}{1+ctx} d G(t)~,\quad x\in A,
$$
where $A=\{x\in\mathbb R,x\neq 0, x\neq -1/(ct),\forall t\in S_{H} \}$. This function can be seen as a ``projection" of the equation
[10]  on the real line. Following \cite{SC95}, the support is $S_F=\mathbb R\setminus B$ where the set $B=\{u: du/dx>0, x\in A\}$. In addition, the support should also exclude zero when $c<1$. 

After finding the support, the LSD can be obtained by inversion of the
Stieltjes transform coupled with 
the equation [10].
Here we illustrate two examples:
\begin{itemize}
\item Model 1: $ G=0.4\delta_{0.5}+0.6\delta_{5}$ and $c=2$;
\item Model 2: $ G=0.3\delta_{0.2}+0.4\delta_{0.7}+0.3\delta_{1}$ and $c=10$.
\end{itemize}
In Model 1, the mixture is a combination of two distributions with a proportion of 2:3 and the corresponding covariance matrices are $0.5I_p$ and $5I_p$, respectively. 
It turns out that the support $S_{F}$ is consist of a mass point at zero and two continuous intervals $[0.1450, 1.5618]$ and $[2.3027, 24.1683]$.
In Model 2, the mixture is made up of three distributions with a proportion 3:4:3. The support of $F^{c, G}$ is $S_F=\{0\}\cup[1.2223, 2.5178]\cup [4.2013, 14.5272]$. 

%
\begin{figure}[hp]
\begin{minipage}[t]{0.5\linewidth}
\includegraphics[width=2.3in,height=1.8in]{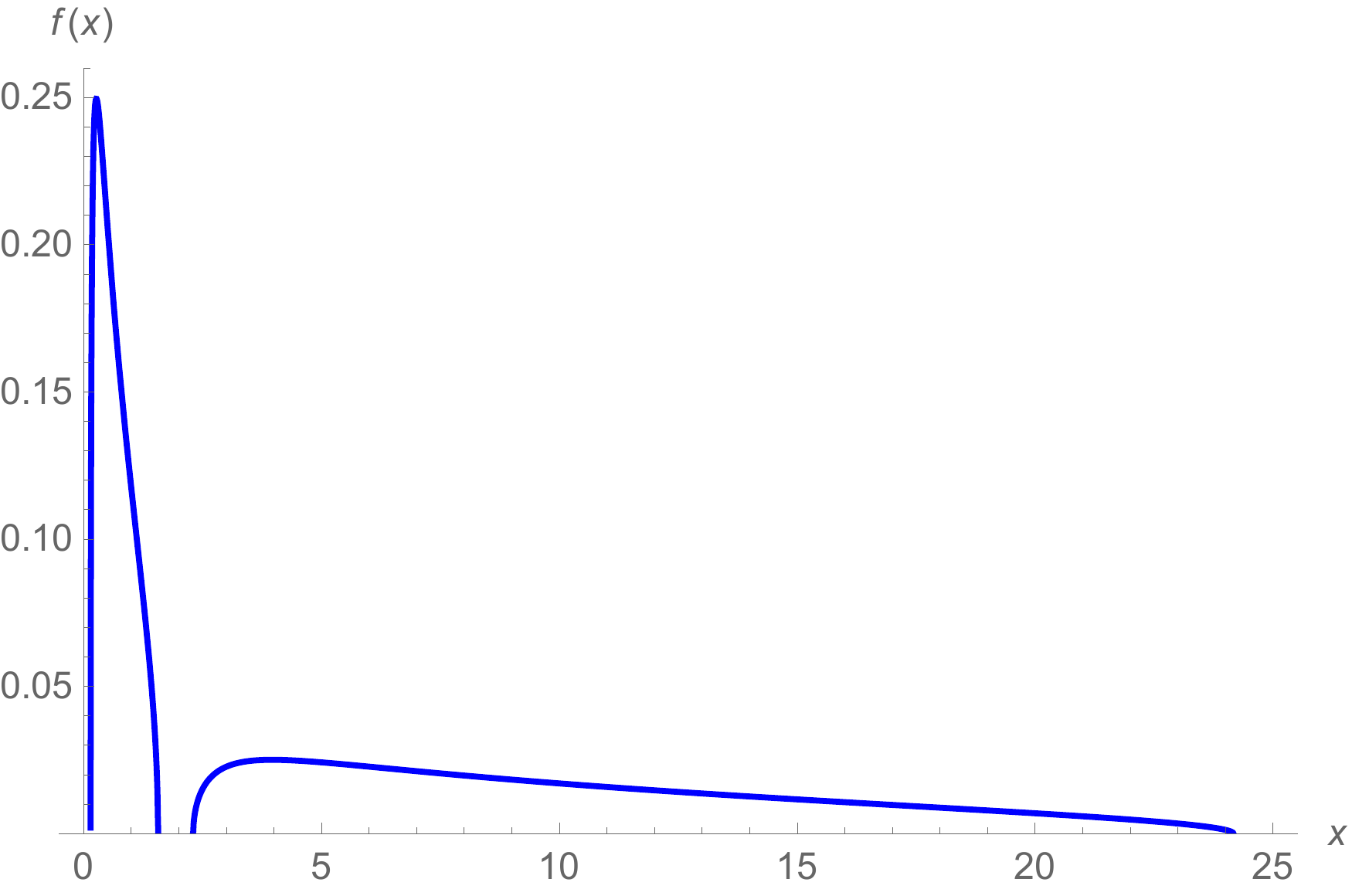} 
\end{minipage}%
\begin{minipage}[t]{0.5\linewidth}
\includegraphics[width=2.3in,height=1.8in]{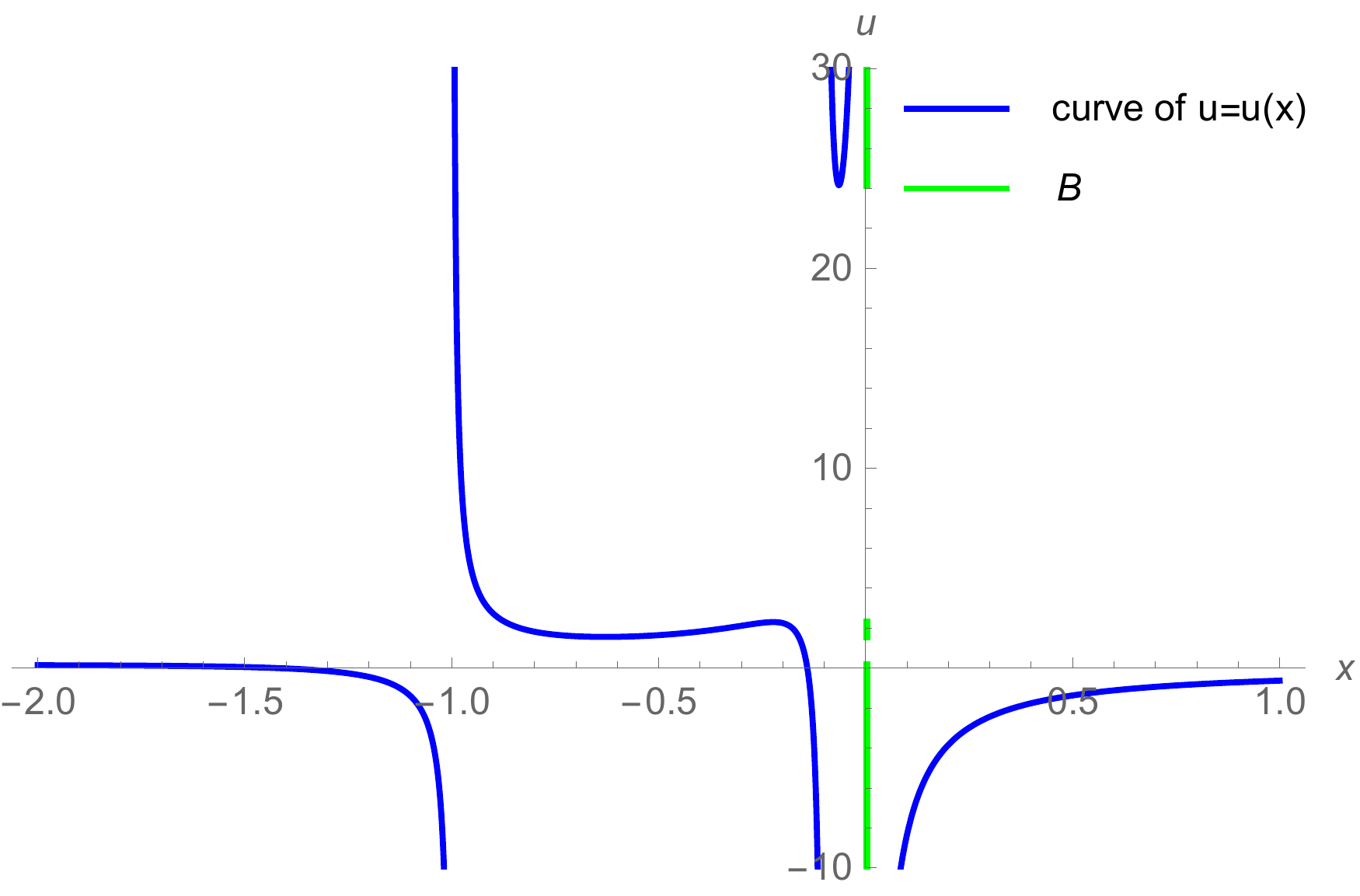}
\end{minipage}
\caption[]{Density curve of the LSD $F^{c, G}$ and the graph of $u=u(x)$ for Model 1.}
\label{fig1}
\end{figure}

\begin{figure}[hp]
\begin{minipage}[t]{0.5\linewidth}
\includegraphics[width=2.3in,height=1.8in]{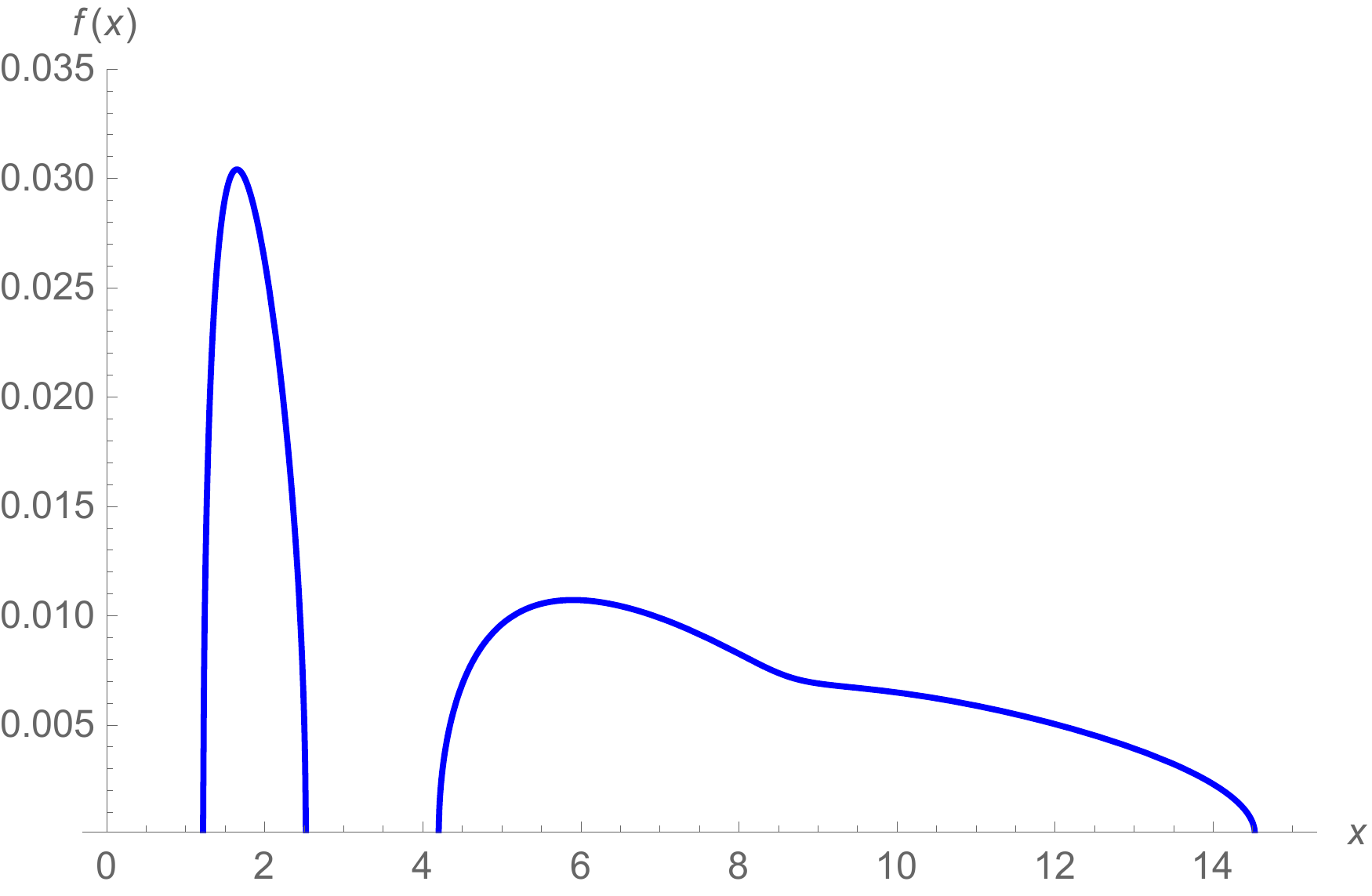}
\end{minipage}%
\begin{minipage}[t]{0.5\linewidth}
\includegraphics[width=2.3in,height=1.8in]{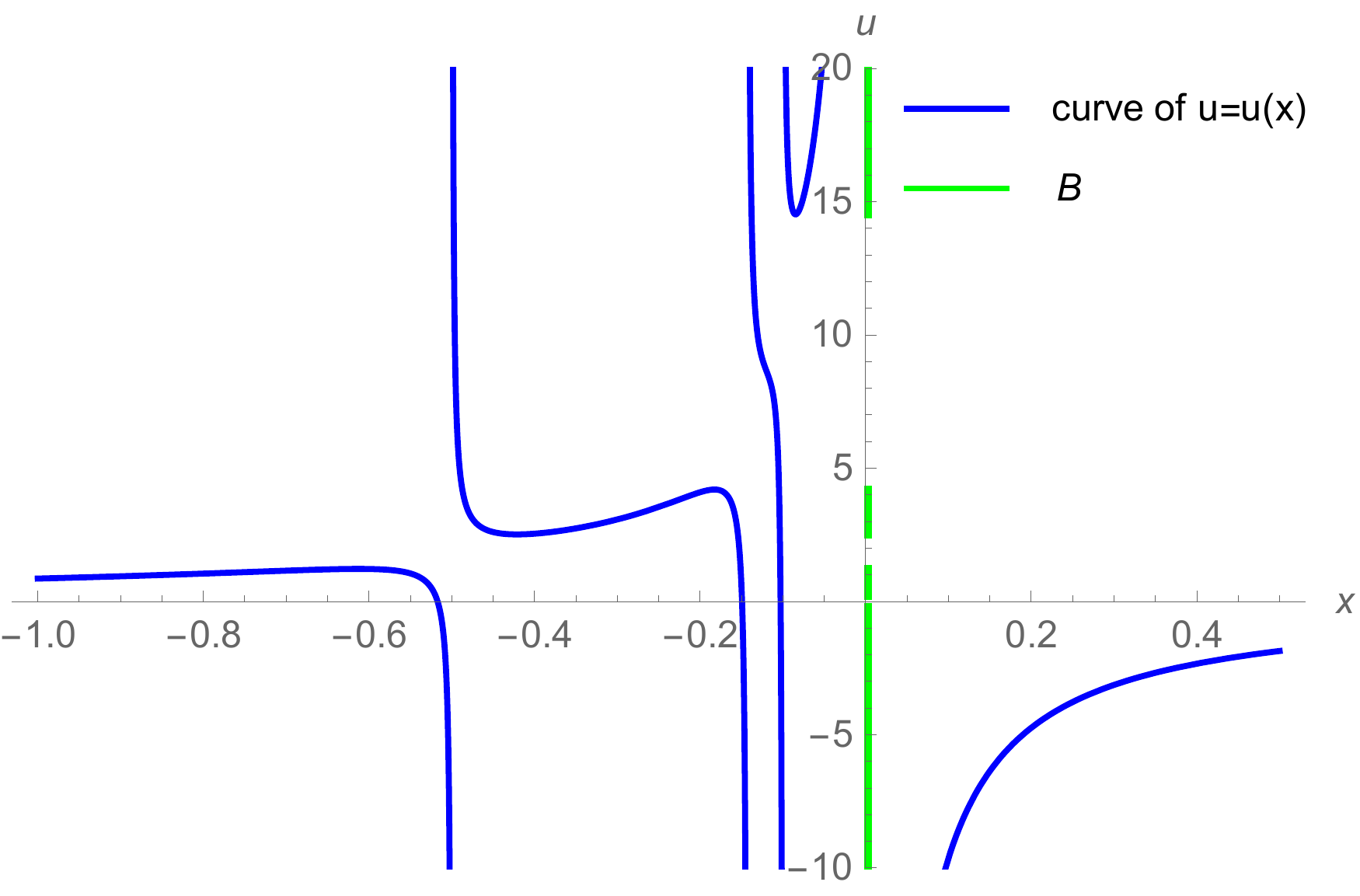}
\end{minipage}
\caption[]{Density curve of the LSD $F^{c, G}$ and the graph of $u=u(x)$ for Model 2.}
\label{fig2}
\end{figure}

One may see from Figures \ref{fig1} and \ref{fig2} that the support $S_F$ of $F^{c, G}$ is a combination of several disjoint intervals. This phenomenon of separation is not new and has been observed in traditional generalized MP distributions \citep{SC95}. It is reported that, for a discrete PMD concentrated in $m$ mass points, the number of disjoint intervals contained in the support of the corresponding LSD grows to $m$ as the dimensional ratio $c$ becomes small. However, the conclusion for the mixture model is just opposite, that is, the number of the disjoint intervals is equal to that of the components in the mixture if the ratio $c$ is large enough. We explain this by considering a mixture model of two component.
 
Let $ G$ be a discrete PMD of order 2, i.e.,
\begin{equation}\label{mix-2}
 G=\alpha_1\delta_{\sigma_1^2}+\alpha_2\delta_{\sigma_2^2},
\end{equation}
where $\alpha_1+\alpha_2=1$, $0<\alpha_1<1$, and $\sigma_1^2\neq\sigma_2^2$. In this case, there are one or two continuous intervals in the support $S_F$ depending on the value of the dimensional ratio $c$ for any fixed $ G$.

\begin{proposition}\label{supp-2}
Suppose that Assumptions (a)-(c) hold. 
For the mixture model \eqref{mix-2},  the support $S_F^*:=S_F\setminus\{0\}$ has the form
 \begin{eqnarray}
 S_F^*=
 \begin{cases}
 [s_1,s_2]\cup [s_3, s_4]&c> c_0,\\
 [s_1,s_4]&c\leq c_0,
 \end{cases}
 \end{eqnarray}
 where $c_0=\int (tx^*)^2/(1+tx^*)^2d G(t)$
with $ x^*$ the only real root of $\int t^2/(1+tx)^3d G(t)=0$ and $0\leq s_1<s_2< s_3<s_4$ are real numbers given in the proof.
\end{proposition}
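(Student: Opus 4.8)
The plan is to read the support off the Silverstein--Choi description recalled in Section~B of the supplementary file: writing $u(x)=-1/x+\int t(1+ctx)^{-1}\,dG(t)$, one has $S_F=\mathbb R\setminus B$ with $B=\{u(x):u'(x)>0\}$, the point $0$ being additionally removed when $c<1$. Since $G=\alpha_1\delta_{\sigma_1^2}+\alpha_2\delta_{\sigma_2^2}$ and, without loss of generality, $\sigma_1^2<\sigma_2^2$, the poles of $u$ on $\mathbb R$ are exactly $0$ and $-1/(c\sigma_i^2)$, $i=1,2$, splitting $\mathbb R$ into the four intervals $I_1=(-\infty,-1/(c\sigma_1^2))$, $I_2=(-1/(c\sigma_1^2),-1/(c\sigma_2^2))$, $I_3=(-1/(c\sigma_2^2),0)$ and $I_4=(0,\infty)$, on each of which $u$ is real-analytic. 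The whole argument is an analysis of the sign of $u'$ on these four intervals.

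First I would reparametrize by $y=cx$. A direct computation gives
\[
u'(x)=\frac{1}{x^2}\Bigl(1-\tfrac1c\,\psi(cx)\Bigr),\qquad
\psi(y):=\int\frac{(ty)^2}{(1+ty)^2}\,dG(t),
\]
so that on each $I_j$ one has $u'(x)>0\iff\psi(cx)<c$, and the edges of $S_F$ are precisely the images under $u$ of the roots of $\psi(cx)=c$ (together with the pole $0$). The gain is that $\psi$ is explicit: $\psi(0)=0$, $\psi(y)\to1$ as $y\to\pm\infty$, $\psi(y)\to+\infty$ as $y\to-1/\sigma_i^2$, and
\[
\psi'(y)=2y\int\frac{t^2}{(1+ty)^3}\,dG(t).
\]
The bracketed integral is $>0$ for $y>-1/\sigma_2^2$ and $<0$ for $y<-1/\sigma_1^2$ (each summand keeping a fixed sign there), while on $(-1/\sigma_1^2,-1/\sigma_2^2)$ it is a sum of one positive and one negative term, is strictly decreasing, and runs from $+\infty$ to $-\infty$; hence it vanishes at a single point $x^*$, which shows that $x^*$ is well defined, lies in that middle interval, and is the strict global minimizer of $\psi$ there, with $\psi(x^*)=c_0$. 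Consequently $\psi$ is strictly monotone on each of $I_1,I_3,I_4$ and is U-shaped with minimum $c_0$ on the middle one.

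From this I would count the roots of $\psi(cx)=c$ and follow $u(\{u'>0\})$ interval by interval to assemble $B$. The interval $I_3$ always contributes one root, whose image under $u$ yields the unbounded component $(s_4,\infty)$ of $B$, so $s_4=\sup S_F$. The interval $I_4$ always forces $(-\infty,0)\subset B$, and there is at most one further root on the unbounded intervals --- on $I_4$ when $c<1$, on $I_1$ when $c>1$ --- whose only effect is to fix the left edge $s_1\ge0$ of the positive part of the support; such a root never splits $S_F^*$. On the bounded middle interval $I_2$ there are two roots when $c>c_0$, one (degenerate) root when $c=c_0$, and none when $c<c_0$: when $c>c_0$ their $u$-images bound a finite component $(s_2,s_3)$ of $B$ sitting strictly between $s_1$ and $s_4$, i.e.\ a genuine gap of the support, whereas when $c\le c_0$ one has $u'\le0$ throughout $I_2$ and $I_2$ contributes nothing to $B$. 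Reading off $B=\mathbb R\setminus S_F$ then gives $S_F^*=[s_1,s_2]\cup[s_3,s_4]$ for $c>c_0$ and $S_F^*=[s_1,s_4]$ for $c\le c_0$, with $s_1,\dots,s_4$ the values of $u$ at the relevant roots; this also furnishes their explicit expressions.

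I expect the main obstacle to be the global bookkeeping in the last step: although monotonicity of $\psi$ on each interval is elementary, one must still establish the ordering $0\le s_1<s_2<s_3<s_4$, i.e.\ that the finite gap produced by $I_2$ lies strictly inside the positive bulk and misses the two unbounded components of $B$, and that the extra critical point appearing when $c\neq1$ merely relocates $s_1$ rather than opening a new hole. Because this compares values of $u$ at critical points situated in different inter-pole intervals it is not a purely local matter; the cleanest route is to use the boundary behaviour $u(0^+)=-\infty$, $u(0^-)=+\infty$, $u\to\pm\infty$ at $x=-1/(c\sigma_i^2)$ together with monotonicity of $u$ on $\{u'>0\}$ within each $I_j$, which forces each $I_j$ to contribute exactly one component to $B$ (possibly a half-line adjacent to $0$), and then to match these components along the real line. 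The residual monotonicity verifications, the limit evaluations, and the treatment of the atom at $0$ for $c<1$ versus $c\ge1$, are routine.
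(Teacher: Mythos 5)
Your proposal is correct and follows essentially the same route as the paper's own proof: both invoke the Silverstein--Choi support-determination criterion, reduce $u'(x)=0$ to the equation $f(x)=c-\psi(cx)=0$, find the unique critical point $x^*$ of $\psi$ (equivalently $x_0=x^*/c$ of $f$) on the middle inter-pole interval, and split on whether $\psi(x^*)=c_0$ lies below or above $c$. The only differences are cosmetic --- your rescaling $y=cx$, your more explicit interval-by-interval bookkeeping of how $B=u(\{u'>0\})$ assembles, and the fact that you correctly identify $x^*$ as a strict \emph{minimizer} of $\psi$ (hence $x_0$ a \emph{maximizer} of $f$), whereas the paper's proof misspeaks and calls $x_0$ a minimum of $f$; the case analysis built on $f(x_0)\gtrless0$ and the final conclusion are nevertheless identical.
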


\begin{proof}

For the PMD in \eqref{mix-2} and $c\neq 1$, the equation $u'(x)=0$ is quartic and thus has two or four real roots which correspond to the boundary points of $S_F^*$.  
Let
\begin{equation}\label{dz-dm}
f(x)=c-\alpha_1\left(\frac{c\sigma_1^2x}{1+c\sigma_1^2x}\right)^2-\alpha_2\left(\frac{c\sigma_2^2x}{1+c\sigma_2^2x}\right)^2,
\end{equation}
then $f(x)=0$ shares the same roots with $u'(x)=0$. Notice that the equation $f'(x)=0$ can be reduced to 
$$
\frac{\alpha_1\sigma_1^4}{(1+c\sigma_1^2x)^3}+\frac{\alpha_2\sigma_2^4}{(1+c\sigma_2^2x)^3}=0,
$$
which is a cubic equation and has only one real root $x_0=x^*/c$. Note that this root is a minimum point of $f(x)$. Therefore, the function $f(x)$ has four real zeros $x_1<x_2< x_3<x_4$ if $f(x_0)>0$, three zeros $x_1<x_0<x_4$ if $f(x_0)=0$, and two zeros $x_1<x_4$ if $f(x_0)<0$. From \cite{SC95} and the fact $f(x_0)=c-c_0$, we get
 \begin{eqnarray*}
 S_F^*=
 \begin{cases}
 [u(x_1),u(x_2)]\cup [u(x_3), u(x_4)]&c> c_0,\\
 [u(x_1),u(x_4)]&c\leq c_0.
 \end{cases}
 \end{eqnarray*}
 For the case $c=1$, $f(x)=0$ is a cubic function and thus has one or three real roots, denoted by $x_2$ and $x_2\leq x_3<x_4$ respectively. Following similar arguments,
 the support $S_F^*$ is
 \begin{eqnarray*}
 S_F^*=
 \begin{cases}
 [0,u(x_2)]\cup [u(x_3), u(x_4)]&c> c_0,\\
 [0,u(x_4)]&c\leq c_0.
 \end{cases}
 \end{eqnarray*}
\end{proof}

Figure \ref{fig3} shows the evolution of the support $S_F^*$ with respect to the ratio $c$ under four models, where their parameters are $(\alpha_1,\alpha_2)=(0.9,0.1)$, $(0.99,0.01)$ and $(\sigma_1^2,\sigma_2^2)=(1,5)$, $(1,10)$, respectively. The  shadowed area exhibits the support $S_F^*$, from which we see that the support is a single interval (blue color) when $c\leq c_0$ and is a union of two separate intervals (red color) when $c>c_0$.   
When $c$ tends to zero, the support shrinkages to the point $E(w^2)=\int td G(t)$. Notice that in the classical low dimensional setting where $p$ is fixed while $n$ grows to infinity, the sample covariance matrix converges almost surely to it population counterpart $\E(w^2)\cdot I_p$ so that all eigenvalues converge to $\E(w^2)$. Therefore, the above high-dimensional case with $c$ small just mimics this low-dimensional setting.
In addition, comparing the four models, the critical value $c_0$ for the separation of $S_F^*$ becomes small when $\alpha_1$ and/or $\sigma_2^2$ increase. At last, corresponding to these supports, we present also some density curves of the LSD in Figure \ref{fig4}. 

\begin{figure}[hp]
\centering
\begin{minipage}[t]{0.5\linewidth}
\includegraphics[width=2.3in,height=1.6in]{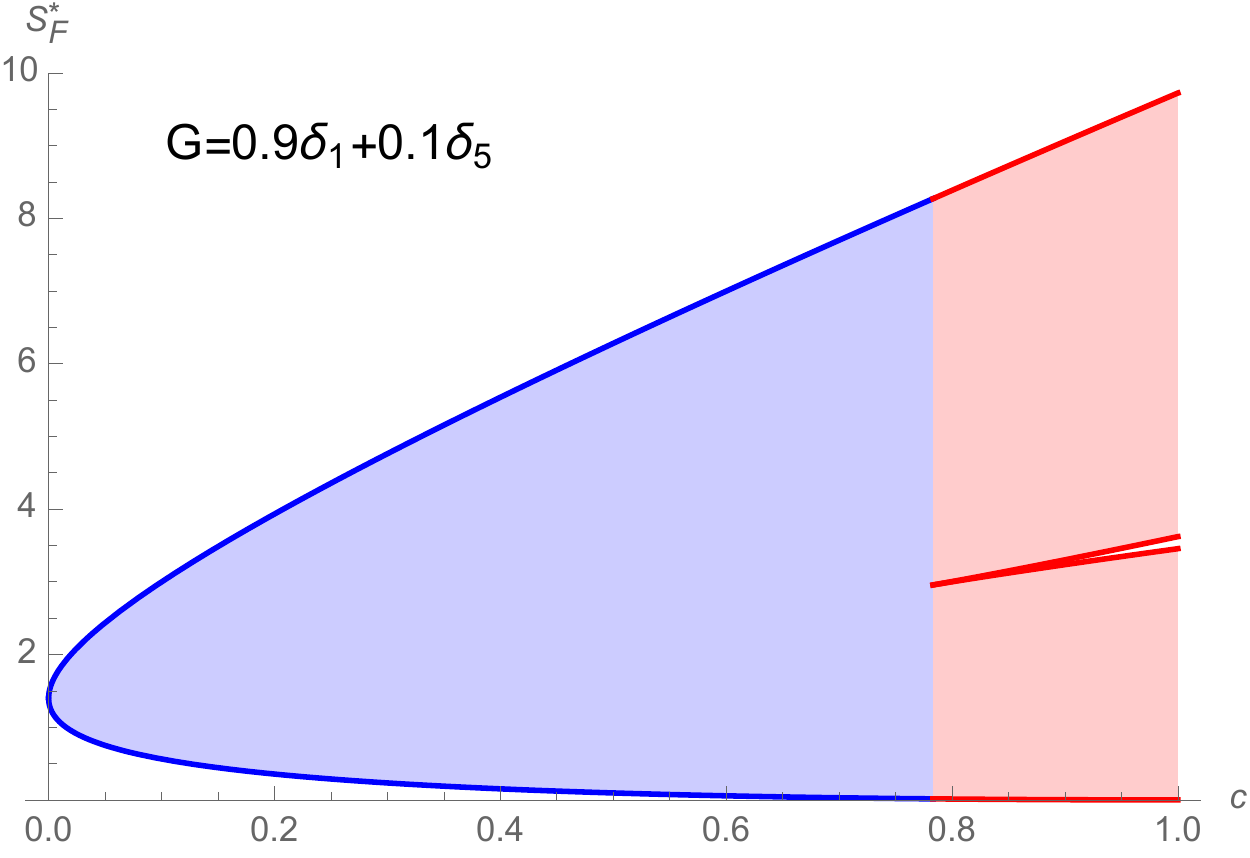} 
\end{minipage}%
\begin{minipage}[t]{0.5\linewidth}
\includegraphics[width=2.3in,height=1.6in]{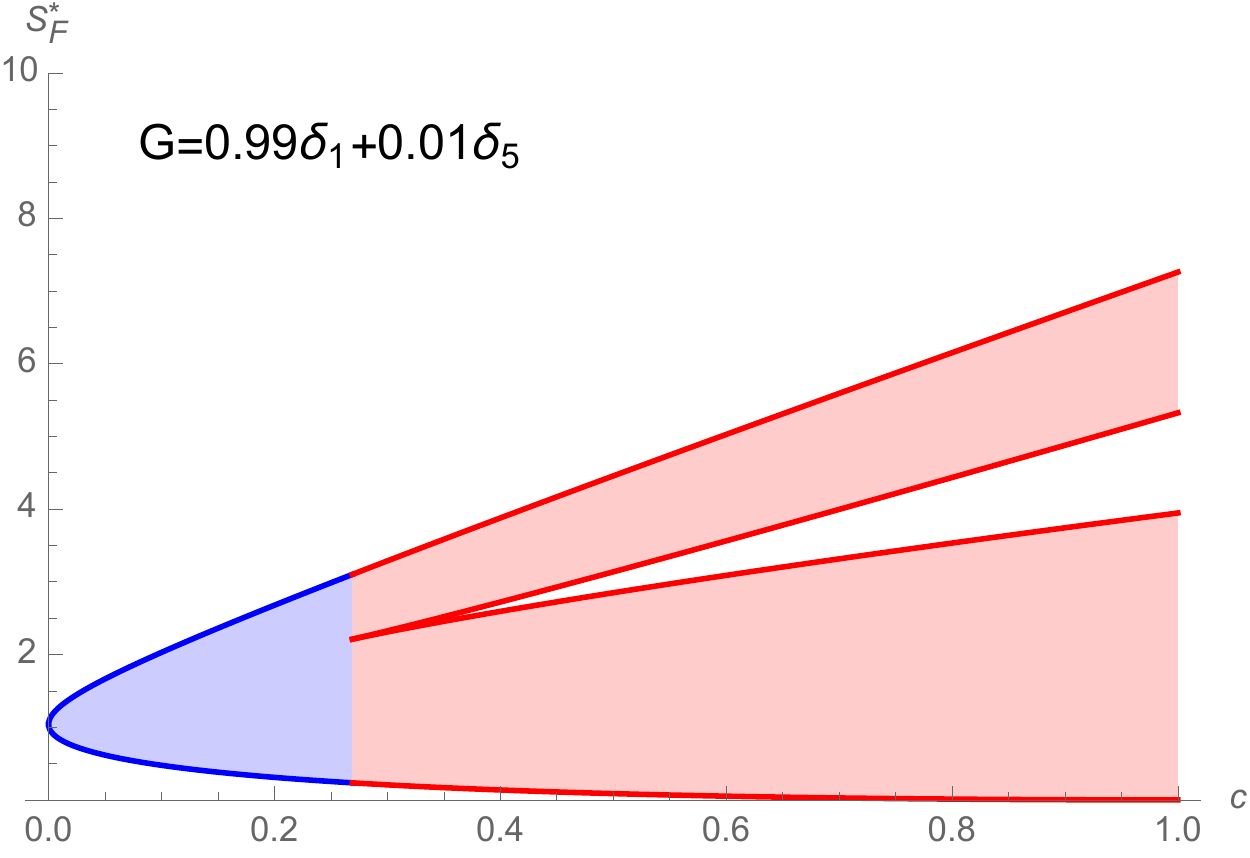}
\end{minipage}
\begin{minipage}[t]{0.5\linewidth}
\includegraphics[width=2.3in,height=1.6in]{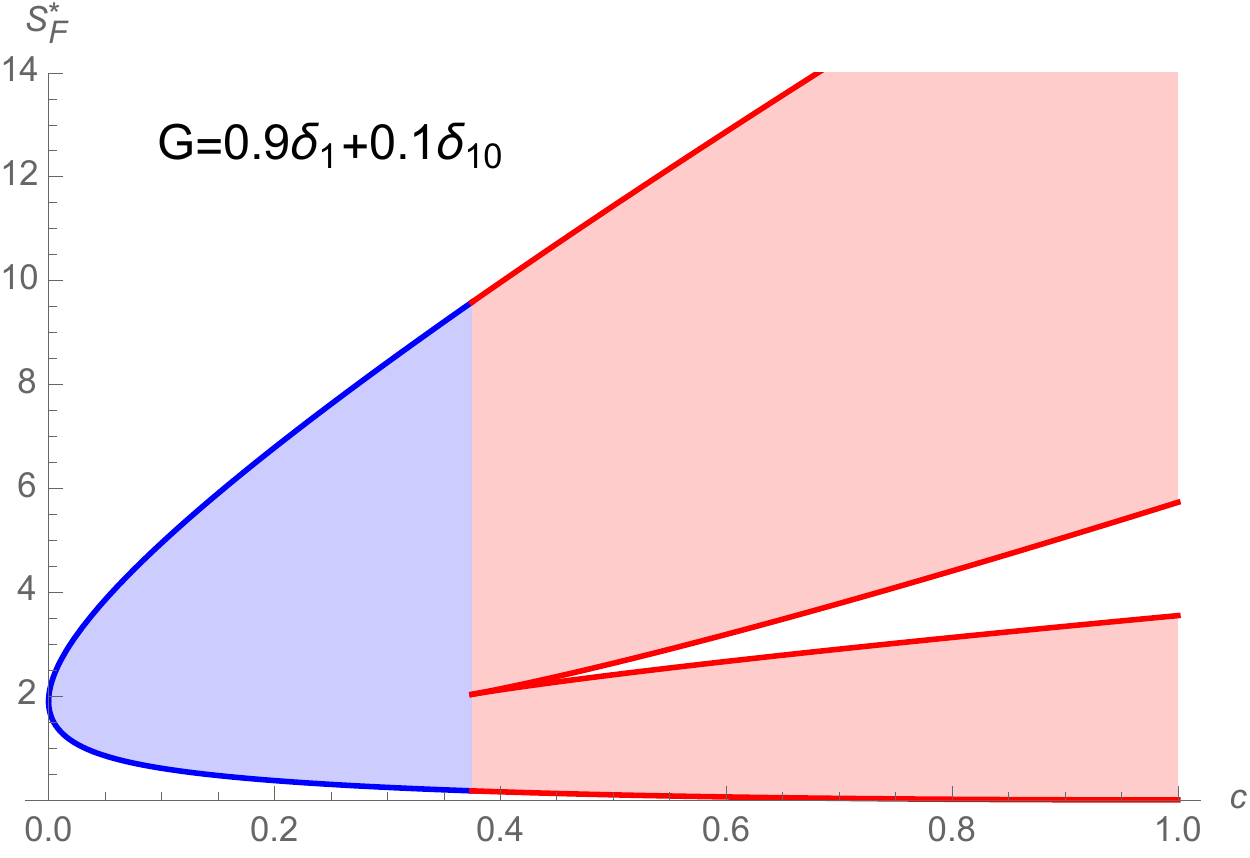} 
\end{minipage}%
\begin{minipage}[t]{0.5\linewidth}
\includegraphics[width=2.3in,height=1.6in]{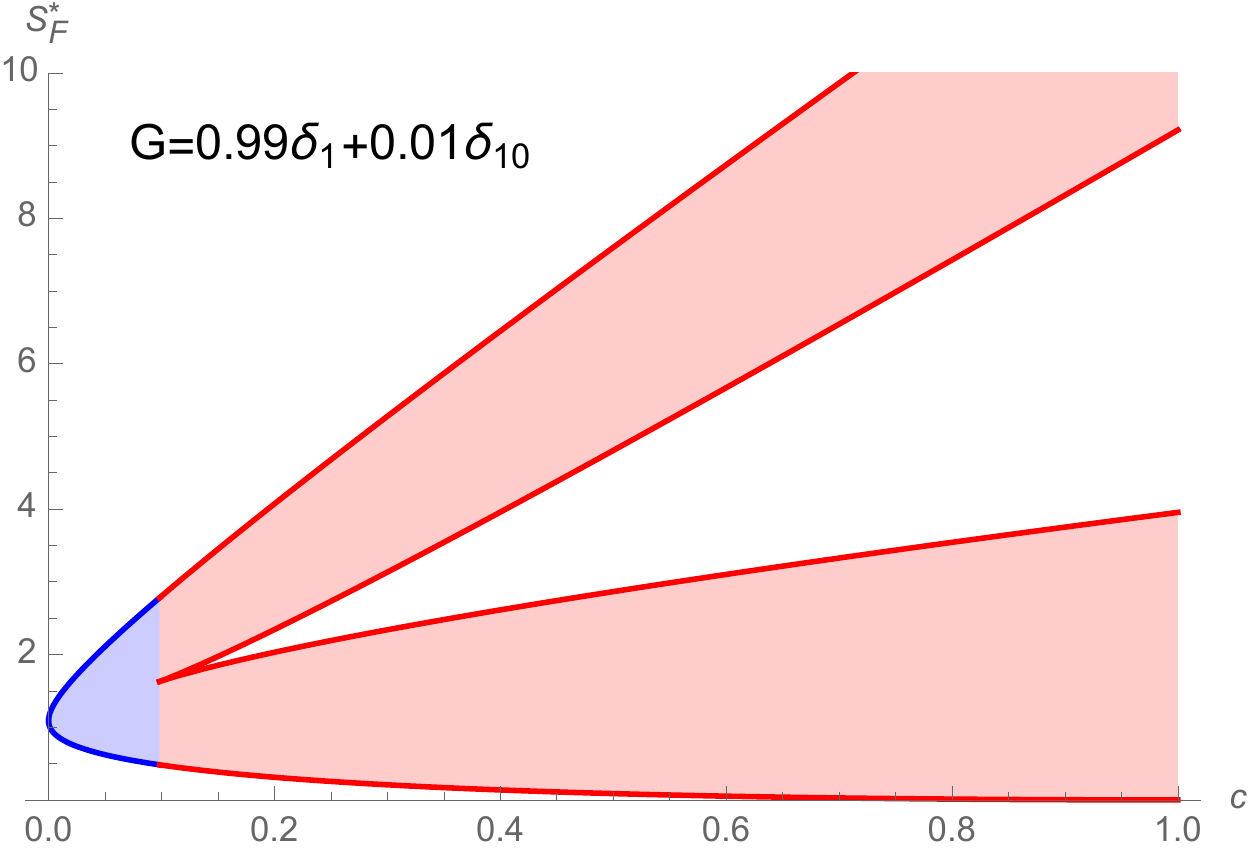}
\end{minipage}
\caption{Evolution of the support $S_F$ as the increase of the dimensional ratio $c$.}
\label{fig3}
\end{figure}


%
\begin{figure}[ph!]
\centering
\begin{minipage}[t]{0.5\linewidth}
\includegraphics[width=2.3in,height=1.6in]{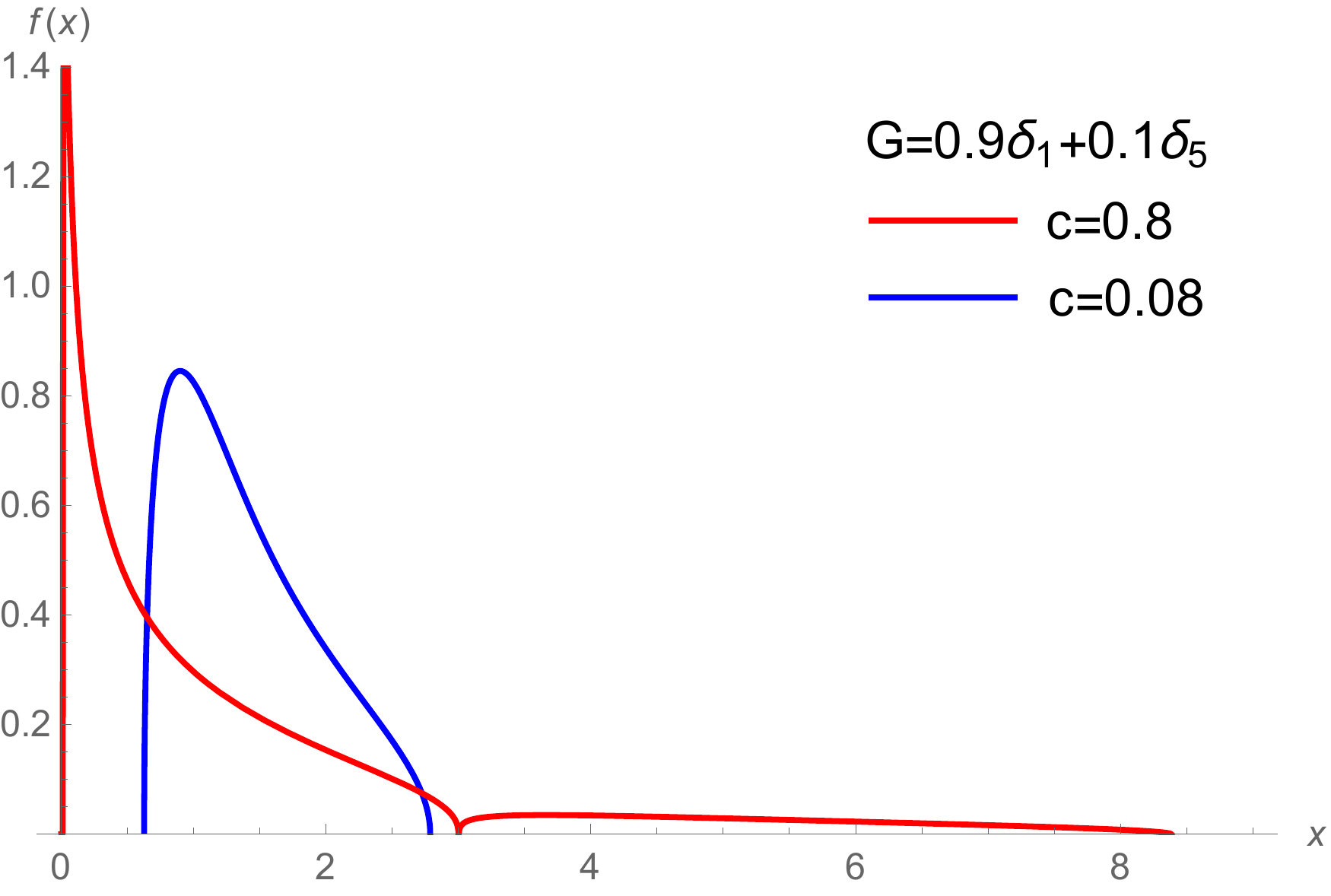} 
\end{minipage}%
\begin{minipage}[t]{0.5\linewidth}
\includegraphics[width=2.3in,height=1.6in]{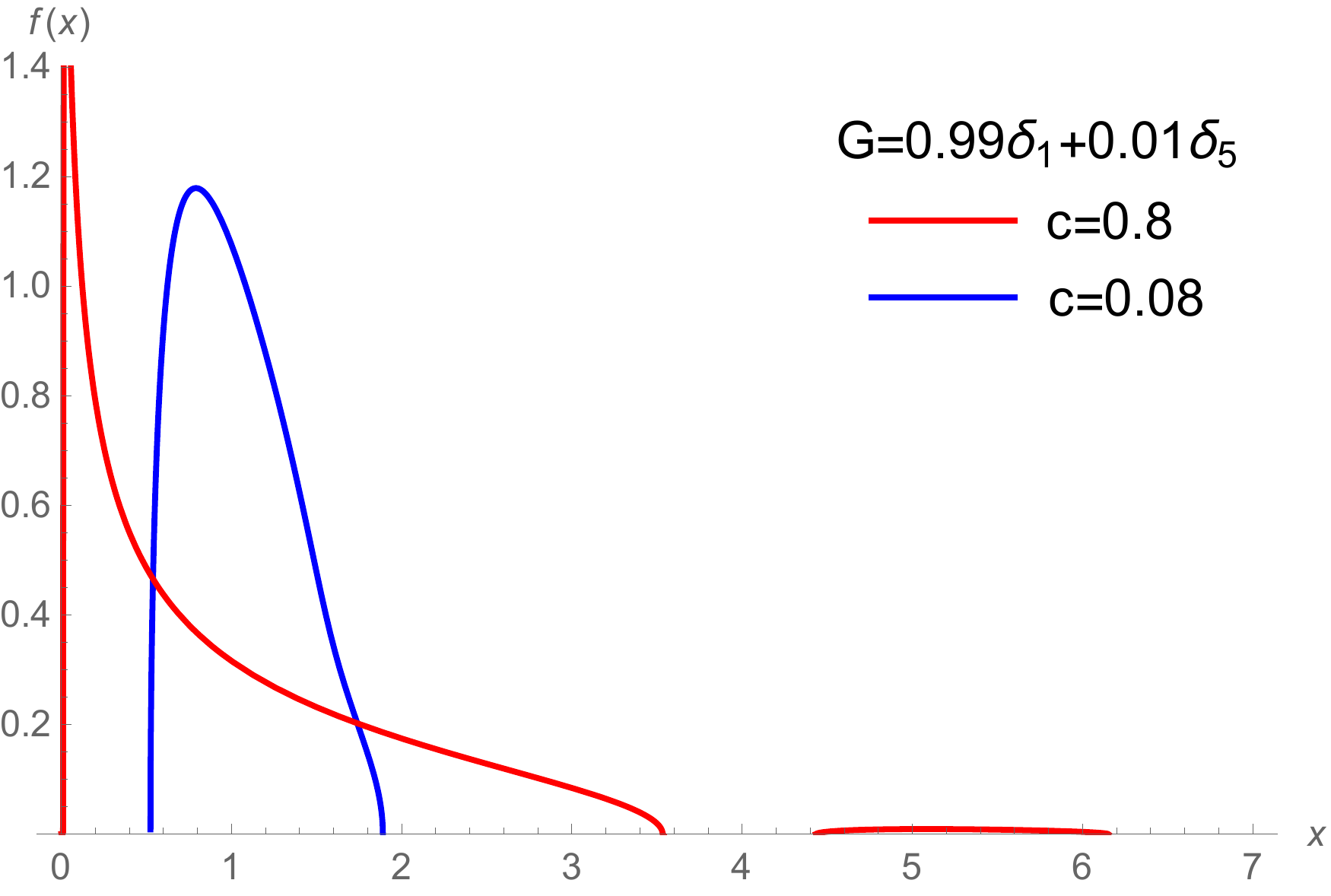}
\end{minipage}
\begin{minipage}[t]{0.5\linewidth}
\includegraphics[width=2.3in,height=1.6in]{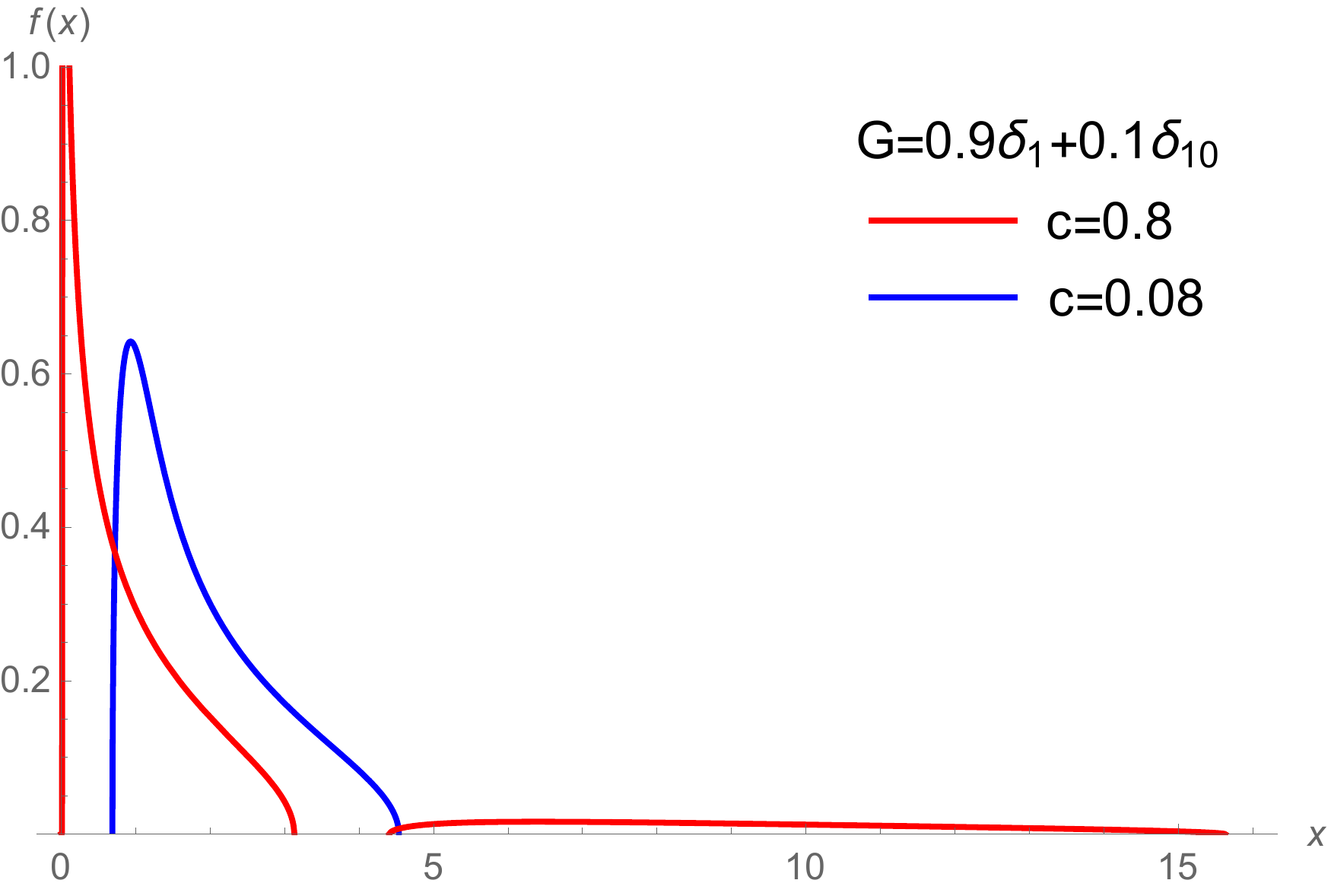} 
\end{minipage}%
\begin{minipage}[t]{0.5\linewidth}
\includegraphics[width=2.3in,height=1.6in]{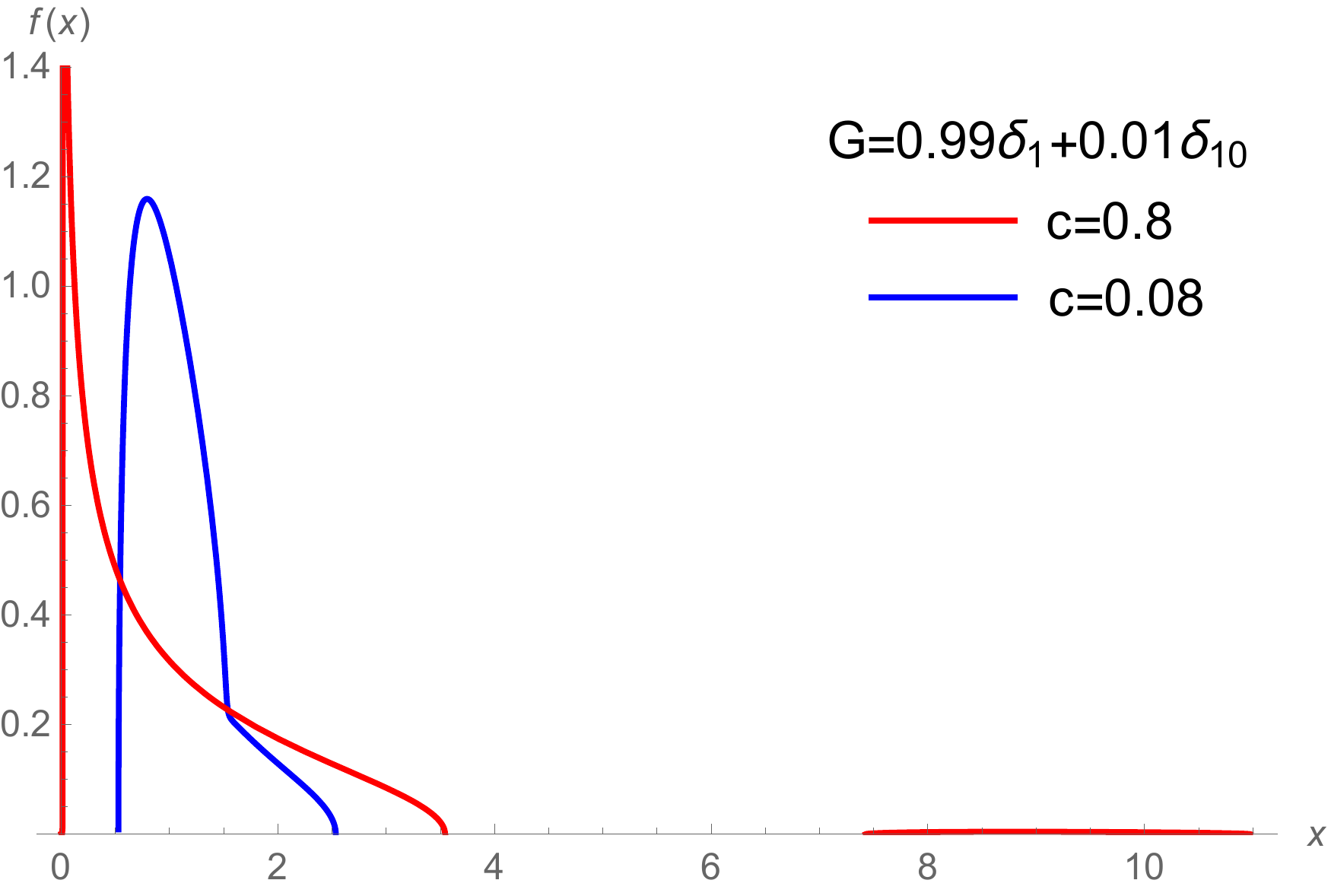}
\end{minipage}
\caption{Density curves of the LSD associated with different combinations of $ G$ and $c$.}
\label{fig4}
\end{figure}

\clearpage

\section{QQ-plots  for the simulaion experiment of Section 2.3}

We refer to Section 2.3 of the main paper for the asymptotic
distribution of the the first two moments of the sample eigenvalues. 
Here we report numerical results from a detailed simulation
experiment.

We adopt a PMD $ G=0.4\delta_1+0.6\delta_3$ and a ratio $c=0.5$. For this model, $v_2=5.8(1+\Delta)$, $\psi_{111}=11.6(2+\Delta)$, $\psi_{211}=0.96$, $\psi_{122}=1364.03 + 614.736 \Delta$, and $\psi_{222}=39.3216$.
Samples of $(z_{ij})$ are drawn from standard normal $N(0,1)$, scaled
$t$, i.e. $\sqrt{4/6}\cdot t_6$, standardized $\chi^2$,
i.e. $\sqrt{1/6}\cdot(\chi^2_3-3)$, and uniform distribution
$U(-\sqrt{3},\sqrt{3})$, where $\Delta=0,3,4,-1.2$,
respectively. Notice that the last three distributions have heavy
tail, skewed and heavy tail, and null  tail, respectively. The dimensions are fixed at $(p,n)=(200,400)$ and the number of independent replications is $10000.$

We exhibit QQ-plots of moment statistics normalized using the euqation
(18) of the main paper 
with respect to 
standard normal  $N(0,1)$ 
in Figure~\ref{QQplots} under the four distributions  for the base
variables $(z_{ij})$.
It shows that the empirical distributions of the statistics match the
standard normal very well.

\begin{figure}[hpt]
\centering
\includegraphics[width=4in]{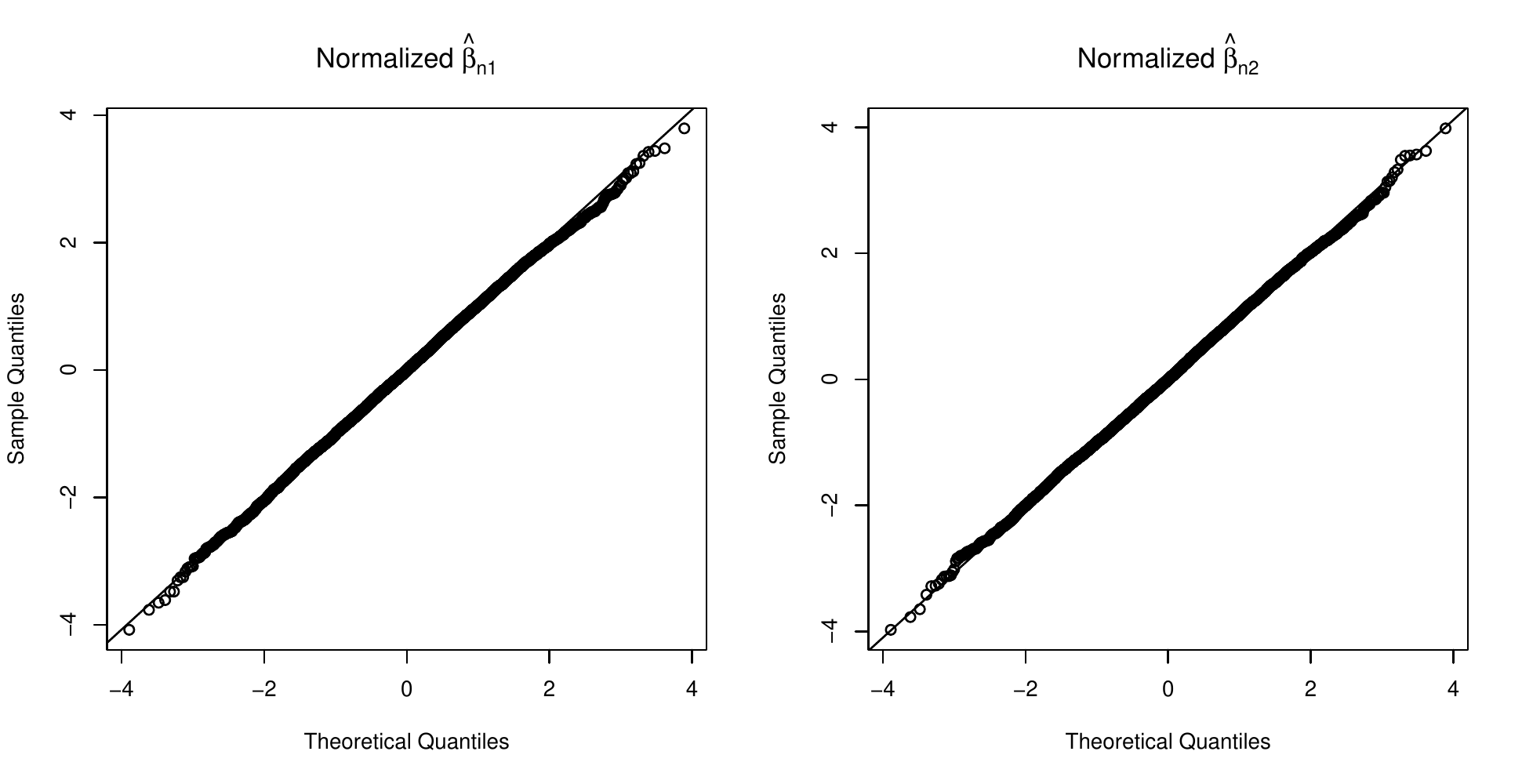}\\
\includegraphics[width=4in]{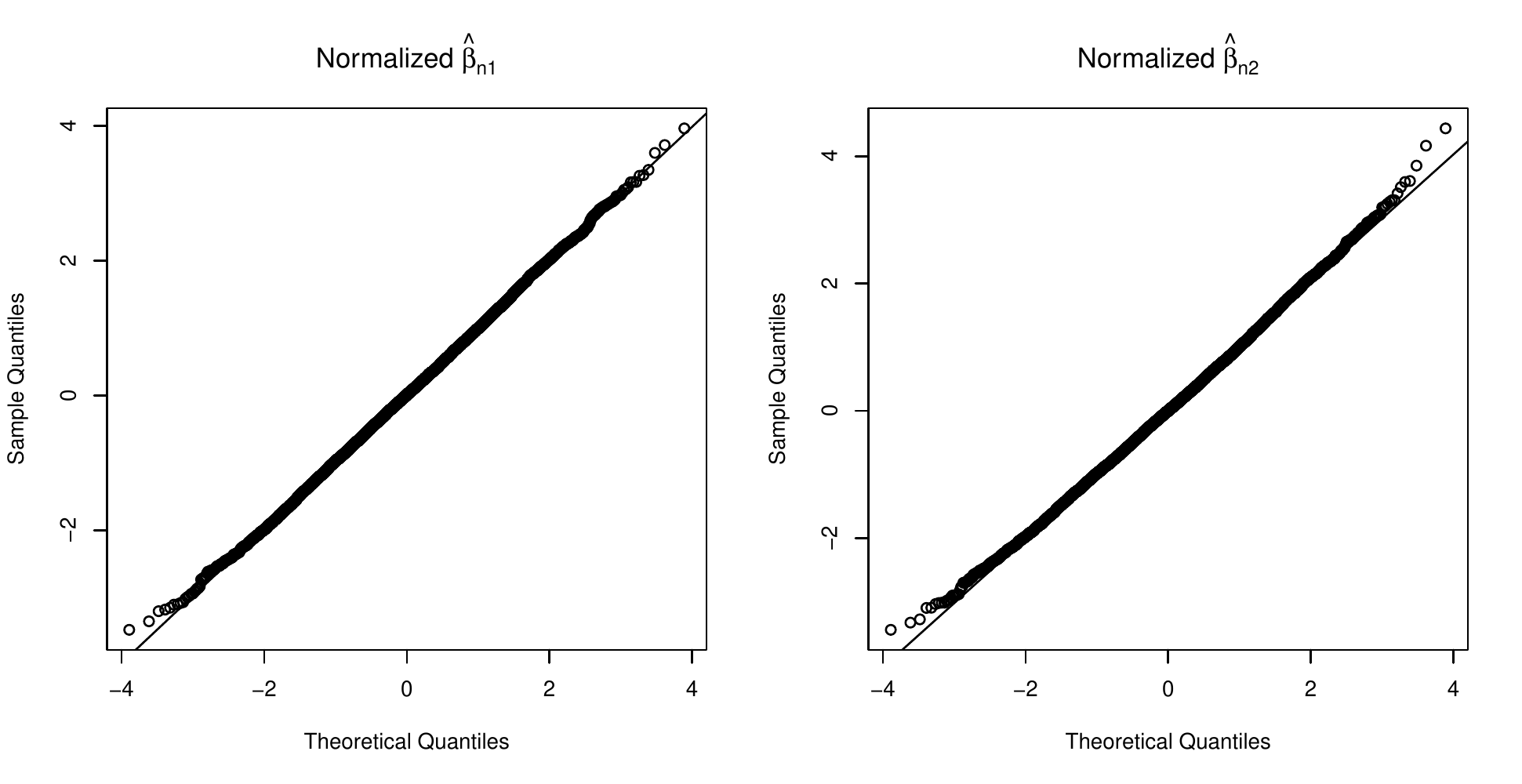} \\
\includegraphics[width=4in]{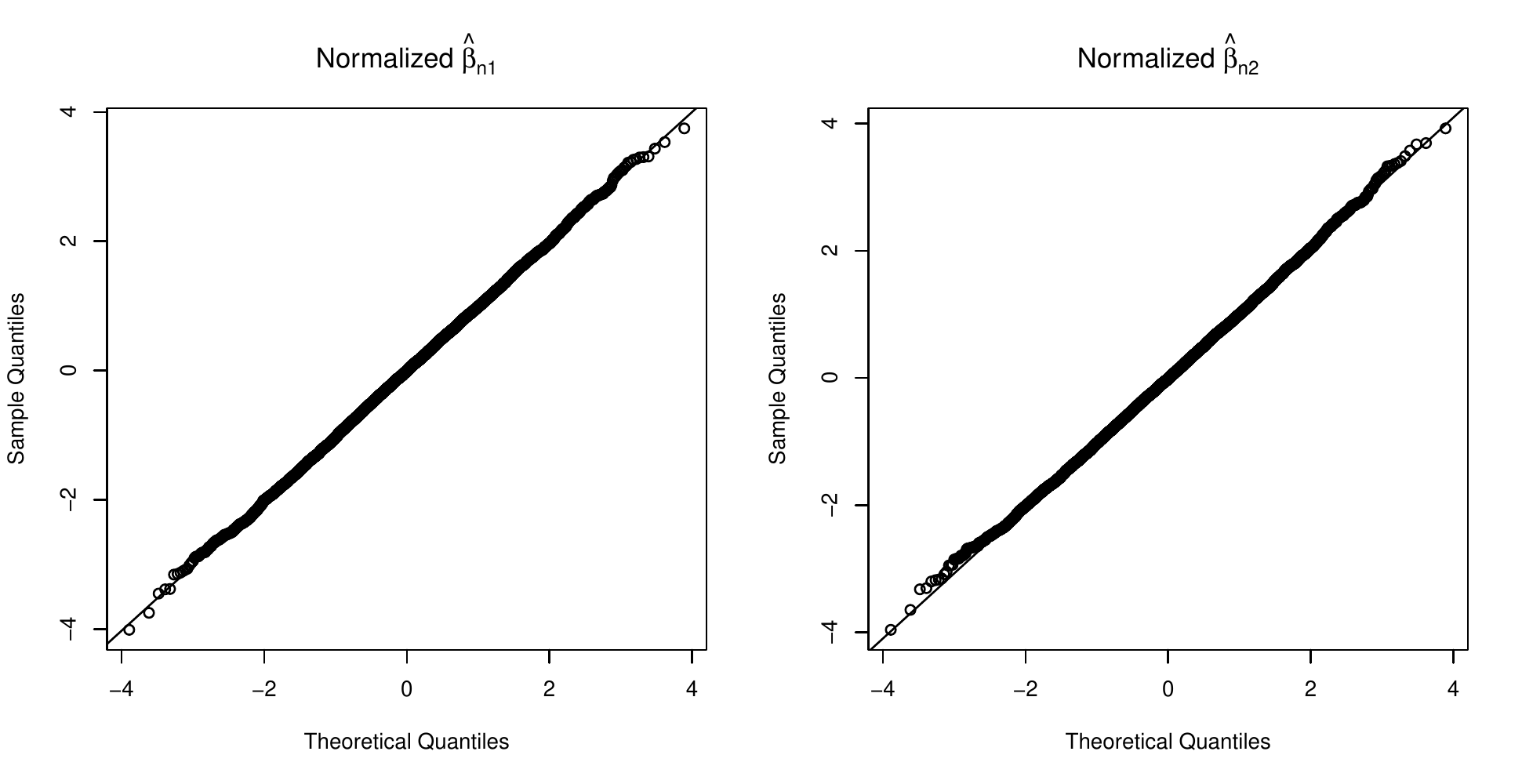} \\ 
\includegraphics[width=4in]{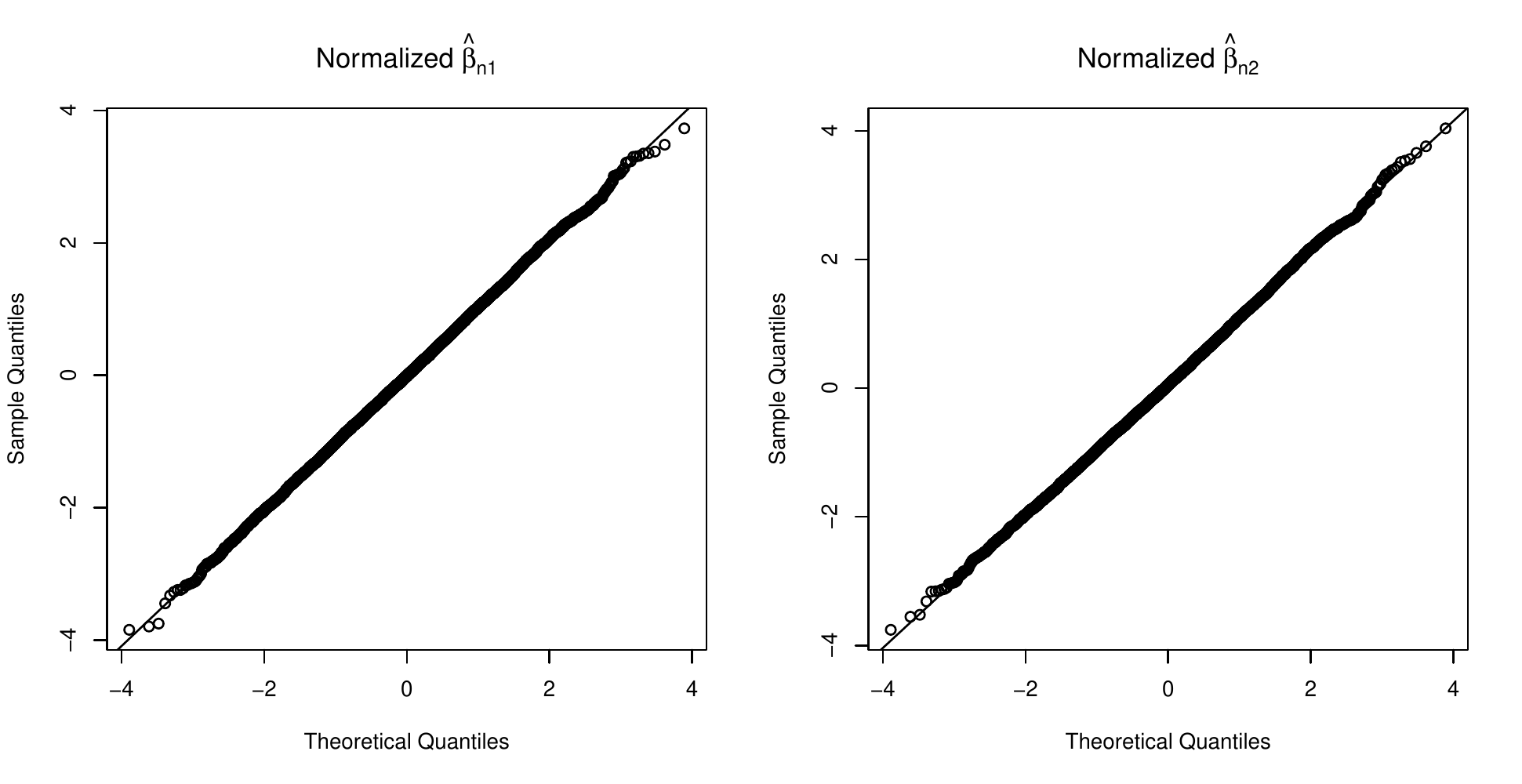} 
\caption{QQ-plots of normalized $\hat \beta_{n1}$ and $\hat \beta_{n2}$
  with respect to standard normal distribution
 under normal, Student-$t$, chi-square and uniform  population (top to
 bottom).\label{QQplots}}
\end{figure}

\section{Proof of  Lemma 1 of the main paper}

We follow the strategy developed in \cite{BS04}.
The convergence of $M_n(z)$ can be obtained by showing the following two facts:
\begin{itemize}
\item[] Fact 1: Finite dimensional convergence of $M_n(z)$ in distribution;
\item[] Fact 2: Tightness of $M_n(z)$ on $\mathcal C_n$.
\end{itemize}
\subsubsection{Finite dimensional convergence of $M_n(z)$ in distribution}
In this part we will show that for any positive integer $r$ and real constants $\alpha_1,\ldots,\alpha_r$, the sum
$$
\sum_{i=1}^r\alpha_i M_n(z_i)
$$
will converge in distribution to a Gaussian random variable.

Denote $m_{nn}=m_{F^{ c_n, G_n}} $ and $m_n=m_{F^{ c_n, G}} $, then these two Stieltjes transform satisfy
\begin{eqnarray*} 
  z  =  - \frac1 {m_{nn}}  +  \int\!\frac{t}{1+c_ntm_{nn} } d G_n(t),\quad z  =  - \frac1 {m_n}  +  \int\!\frac{t}{1+c_ntm_n} d G(t),  
\end{eqnarray*}
respectively. Taking the difference of the two identities yields
\begin{eqnarray*} 
\frac{m_n-m_{nn}}{m_n m_{nn}} &=& \int\!\frac{t}{1+c_ntm_{nn} } d G_n(t)-\int\!\frac{t}{1+c_ntm_n} d G(t),\\
&=&\int\!\frac{c_nt^2(m_n-m_{nn})}{(1+c_ntm_{nn})(1+c_ntm_n) } d G_n(t)+\int\!\frac{t[d G_n(t)-d G(t)]}{1+c_ntm_n}.
\end{eqnarray*}
Therefore, we get
\begin{eqnarray*} 
M_{n}(z)=\sqrt{n}(m_{nn}(z)-m_n(z))= \beta_n(z)\sqrt{n}\int\!\frac{t[d G_n(t)-d G(t)]}{1+c_ntm_n(z)},
\end{eqnarray*}
where $ \beta_n^{-1}(z)=\int c_nt^2/[(1+c_ntm_{nn}(z))(1+c_ntm_n(z))] d G_n(t)-1/(m_n(z)m_{nn}(z)).$
From \cite{SC95}, for any $z\in \mathcal C$, $1/|1+ctm(z)|$ is uniformly bounded in $t\in S_{ G}$. Notice that 
$$m_{nn}(z)\xrightarrow{a.s.}m(z),\quad m_n(z)\xrightarrow{a.s.} m(z), \quad  G_n(t)\xrightarrow{a.s.}  G(t),\quad c_n\rightarrow c,$$ 
then, for all $n$ large, almost surely,  the quantities $1/|1+c_ntm_{nn}(z)|$ and $1/|1+c_ntm_n(z)|$ are both uniformly bounded in $t\in S_{ G}$. From this and Lebesgue's dominated convergence theorem, 
$$
 \beta_n(z)\xrightarrow{a.s.}\left(\int\!\frac{ct^2}{(1+ctm(z))^2 } d G(t)-\frac{1}{m^2(z)}\right)^{-1}=-m'(z),
$$
as $n\rightarrow\infty.$  


Let $g(x,z)=x/(1+cxm(z))$, from the convergence of $c_n$, $m_n(z)$, and $ \beta_n(z)$, the linear combination 
$\sum_{i=1}^r\alpha_iM_n(z_i)$ has the same limiting distribution as 
\begin{eqnarray*}
&&-\sqrt{n}\sum_{i=1}^r\alpha_im'(z_i)\int g(t,z_i)\left(d G_n(t)-d G(t)\right)\\
&=&-\frac{1}{\sqrt{n}}\sum_{j=1}^n\sum_{i=1}^r\alpha_im'(z_i)\left(g(w_j^2,z_i)-Eg(w_j^2,z_i)\right),
\end{eqnarray*}
which is a  sum of centralized i.i.d.\ random variables with finite variance and thus converges in distribution to a zero-mean normal variable.
Moreover, for $1\leq i\neq j \leq r$,
\begin{eqnarray*}
&&\Cov\left[M_n(z_i), M_n(z_j)\right]\\
&=&nm'(z_i)m'(z_j)\Cov\left[\int g(t,z_i)d G_n(t), \int g(t,z_j) d G_n(t)\right]+o(1)\\
&=&m'(z_i)m'(z_j)\frac{1}{n}\sum_{k=1}^n\Cov\left[ g(w_k^2,z_i), g(w_k^2,z_j)\right]+o(1)\\
&=&m'(z_i)m'(z_j)\left(\E(g(w^2,z_i)g(w^2,z_j)-\E g(w^2,z_i)\E g(w^2,z_j))\right)+o(1)\\
&\to&m'(z_i)m'(z_j)\left(\int g(t,z_i)g(t,z_j)d G(t)-\int g(t,z_i)d G(t)\int g(t,z_j)d G(t)\right)\\
&=&m'(z_i)m'(z_j)\left(\frac{z_i+1/m(z_i)-z_j-1/m(z_j)}{c(m(z_j)-m(z_i))}-\frac{(1+z_im(z_i))(1+z_jm(z_j))}{m(z_i)m(z_j)}\right),
\end{eqnarray*}
as $n\to\infty$, where the last equality is obtained from the equation 
(10) of the main paper.

\subsubsection{Tightness of $M_n(z)$}

The tightness of $M_n(z)$ on $\mathcal C_n$ can be established by verifying the moment condition (12.51) of \cite{B68}, i.e.,
\begin{equation}\label{tightness}
\sup_{n,z_1,z_2\in \mathcal C_n}\frac{\E|M_n(z_1)-M_n(z_2)|^2}{|z_1-z_2|^2}<\infty.
\end{equation}

Taking the partial derivative of $g(x,z)$ with respect to $z$, we get
$$
g'_z(x,z):=\frac{\partial{g(x,z)}}{\partial z}=-\frac{cx^2m'(z)}{(1+cxm(z))^2}<K,
$$
where $K$ is an upper bound of $g'_z(x,z)$ on $S_{ G}\times \mathcal C$. From this, for any $z_1,z_2\in\mathcal C$ and $x\in S_{ G}$, there is a constant $\xi$ such that  
$$
|g(x,z_1)-g(x,z_2)|\leq |g_z'(x,\xi)||z_1-z_2|\leq K|z_1-z_2|,
$$
where $\xi=z_1+\theta(z_2-z_1)$ and $\theta\in (0,1)$.
Let $\tilde g(w^2,z)=g(w^2,z)-\E(g(w^2,z))$, we have then
\begin{eqnarray*}
\E\bigg|\frac{M_n(z_1)-M_n(z_2)}{z_1-z_2}\bigg|^2
&=&\frac{1}{n|z_1-z_2|^2}\E\bigg|\sum_{j=1}^n\tilde g(w_j^2,z_1)- \tilde g(w_j^2,z_2)\bigg|^2+o(1)\\
&\rightarrow&\frac{\E|\tilde g(w_1^2,z_1)- \tilde g(w_1^2,z_2)|^2}{|z_1-z_2|^2}\\
&\leq&\E\bigg|\frac{g(w_1^2,z_1)-g(w_1^2,z_2)}{z_1-z_2}\bigg|^2\leq K^2,\\
\end{eqnarray*}
which confirms the inequality in \eqref{tightness}.

\section*{Reference}

\end{document}